\renewcommand\footnotetextcopyrightpermission[1]{} \setcopyright{none}
\let\ifdraft\iffalse
\begin{document}

\title[A Bunch of Sessions]{A Bunch of Sessions: \texorpdfstring{\\ }{} A Propositions-as-Sessions Interpretation of Bunched Implications in Channel-Based Concurrency}
\ifappendix
  \subtitle{(Extended Version)}
\fi

\author{Dan Frumin}
\email{d.frumin@rug.nl}
\affiliation{\institution{University of Groningen}
\country{The Netherlands}
}

\author{Emanuele D'Osualdo}
\email{dosualdo@mpi-sws.org}
\affiliation{\institution{MPI-SWS}
  \city{Saarbr\"ucken}
  \country{Germany}
}

\author{Bas van den Heuvel}
\email{b.van.den.heuvel@rug.nl}

\author{Jorge A.~Pérez}
\email{j.a.perez@rug.nl}

\affiliation{\institution{University of Groningen}
\country{The Netherlands}
}

\begin{abstract}
  The emergence
  of \emph{propositions-as-sessions},
  a Curry-Howard correspondence between propositions of Linear Logic
  and session types for concurrent processes,
  has settled the logical foundations of message-passing concurrency.
  Central to this approach is the
  \emph{resource consumption} paradigm heralded by Linear Logic.

  In this paper,
  we investigate a new point in the design space
  of session type systems for message-passing concurrent programs.
  We identify O'Hearn and Pym's \emph{Logic of Bunched Implications}~(BI)
  as a fruitful basis for an interpretation of the logic
  as a concurrent programming language.
  This leads to a treatment of \emph{non-linear resources} that is
  radically different from existing approaches based on Linear Logic.
  We introduce a new \picalc\ with sessions,  called \piBI;
  its most salient feature is a construct called \emph{spawn},
  which  expresses new forms of sharing that are induced
  by structural principles in BI.
  We illustrate the expressiveness of \piBI and lay out its fundamental theory:
  type preservation, deadlock-freedom, and weak normalization results for well-typed processes;
  an operationally sound and complete typed encoding of an affine \pre\lambda-calculus;
  and a non-interference result for access of resources.
\end{abstract}

\maketitle

\section{Introduction}
\label{sec:intro}
In this paper, we investigate a new point in the design space of session type systems for message-passing concurrent programs.
We identify the \emph{Logic of Bunched Implications}~(BI) of \citet{ohearn.pym:1999} as a fruitful basis for an interpretation of the logic as a concurrent programming language, in the style of \emph{propositions-as-sessions}~\cite{caires.pfenning:2010,wadler:2012}.
This leads to a treatment of \emph{non-linear resources} that is radically different from existing approaches based on Girard's Linear Logic (LL).
We propose \piBI, the first concurrent interpretation of BI,
and we study the behavioral properties enforced by typing,
laying the meta-theoretical foundations needed,
and clarifying its relation to the other type-theoretic interpretations of BI.

\paragraph{Session types for message-passing concurrency.}
Writing concurrent programs is notoriously hard, as bugs might be caused by subtle undesired interactions between processes.
Statically enforcing the absence of bugs while allowing expressive concurrency patterns is important but difficult.
In the context of message-passing concurrency, type systems based on \emph{session types} provide an effective approach.
Session type systems enforce a communication structure between processes and channels, with the intent of (statically) ruling out races (as in, e.g., two threads sending messages over the same channel at the same time) and other undesirable behaviors, like deadlocks.
This communication structure is formulated at the type level.
For example, the session type $T={!\mathsf{int}.?\mathsf{string}.!\mathsf{bool}.\mathsf{end}}$ (written in the syntax of \cite{vasconcelos:2012}) describes a protocol that first outputs an integer ($!\mathsf{int}$), then inputs a string ($?\mathsf{string}$), and finally outputs a boolean ($!\mathsf{bool}$).
In session-based concurrency, types are assigned to channel names; this way, e.g.,
the assignment $x : T$ dictates that the communications on channel $x$ must adhere to the protocol described by $T$.

The fundamental idea behind session type systems is that an assumption such as~$x:T$ is like a \emph{resource} that can be consumed and produced.
For example, the act of sending an integer on the channel~$x$
consumes~$x:{!\mathsf{int}.?\mathsf{string}.!\mathsf{bool}.\mathsf{end}}$
and produces a new resource $x:{?\mathsf{string}.!\mathsf{bool}.\mathsf{end}}$,
representing the expected continuation of the protocol.
Then, the coordinated use of channels requires a strict discipline on how resources can be consumed and produced:
it is unwise to allow multiple processes to access the same resource~$x:T$,
otherwise simultaneous concurrent outputs by different processes on the same channel will render the protocol invalid.
The type system is thus designed to enforce
that some resources, like those associated with channels, are \emph{linear}: they are consumed \emph{exactly once}.
By enforcing linearity of these resources, session type systems ensure that well-typed programs conform to the protocols encoded as types, and satisfy important correctness properties, such as deadlock-freedom.

\paragraph{Propositions-as-sessions.}
A central theme in this paper is how logical foundations can effectively inform the design of expressive type disciplines for programs.
In the realm of functional programming languages,
such logical foundations have long been understood via type systems obtained
through strong Curry-Howard correspondences with known logical proof systems
(\eg the correspondence between the simply-typed $\lambda$-calculus and intuitionistic propositional logic).
For concurrent languages, on the other hand, such correspondences have been more elusive.
Indeed, although the original works on session types by \citet{honda:1993,honda.vasconcelos.kubo:1998}  feature an unmistakable influence of LL in their formulation, the central question of establishing firm logical foundations for session types remained open until relatively recently.
The first breakthroughs were the logical correspondences based on the concurrent languages \piDILL \cite{caires.pfenning:2010} and CP \cite{wadler:2012} (based on Intuitionistic LL and Classical LL, respectively).
These works define a bidirectional correspondence, in the style of Curry-Howard, which allows us to interpret propositions as session types (protocols),  proofs as $\pi$-calculus processes, and cut elimination as process communication.
These correspondences are often collectively referred to as \emph{propositions-as-sessions}.

Intensely studied in the last decade, the line of work on propositions-as-sessions provides a principled justification to a linear typing discipline.
These correspondences also clarify our understanding of the status of \emph{non-linear} resources, which do not obey resource consumption considerations.
Non-linear resources, such as mutable references, client/server channels, and shared databases, are commonplace in practical programs and systems.
Disciplining non-linear resources is challenging, because there is a tension between flexibility and correctness: ideally, one would like to increase the range of (typable) programs that can be written, while ensuring that such programs treat non-linear resources consistently.

LL allows for a controlled treatment of non-linear resources through the
modality $\bang{A}$.
Within propositions-as-sessions, the idea is that a session of type $\bang{A}$ represents a server providing a session of type $A$ to its clients, and the server itself can be duplicated or dropped.
Those particular features---being able to replicate or drop a session---are achieved through the usage of \emph{structural rules} in the sequent calculus, specifically the rules of contraction and weakening, which are restricted to propositions of the form $\bang{A}$.
A series of recent works have explored quite varied ways of going beyond this treatment of non-linear resources: they have put forward concepts such as \emph{manifest sharing} \cite{balzer:pfenning:2017}, dedicated frameworks such as \emph{client-server logic} \cite{qian.kavvos.birkedal:2021}, and specific constructs for non-deterministic, fail-prone channels \cite{caires.perez:2017}.

\paragraph{The Logic of Bunched Implications}
At their heart, the aforementioned works propose different ways of treating non-linear resources through {modalities}.
Relaxing linearity through a modality allows a clean separation between the worlds of linear and non-linear resources.
This approach relies on rules that act as ``interfaces'' between the two worlds, allowing conversions  between linear and non-linear types only under controlled circumstances.

However, modalities are not the only way in which substructural logics can integrate non-linear resources.
A very prominent alternative is provided by the Logic of Bunched Implications~(BI) of \citet{ohearn.pym:1999}.
BI embeds the pure linear core of LL as multiplicative conjunction~$*$ and implication~$\wand$, but extends it by introducing additive conjunction~$\land$ and implication~$\to$, which are treated non-linearly.
BI can thus be thought of as enabling the free combination of linear and non-linear resources in a single coherent logic.

The result is a logic which admits an interpretation of linearity that is enticingly different from LL.
Conceptually, LL admits a ``number of uses'' interpretation, where types can specify \emph{how many times a resource should be used}: exactly once for linear resources, any number of times for~$\bang{A}$ resources.
On the contrary, BI admits an ``ownership'' interpretation \cite{pym.ohearn.yang:2004}, which focuses on \emph{who has access to which resources}.

The ownership interpretation has positioned BI as the logic of choice for program logics for reasoning about stateful and concurrent programs, under the umbrella of (Concurrent) Separation Logic (see, e.g., the surveys by \citet{ohearn:2019} and \citet{brookes:ohearn:2016}).
While separation logic has received significant attention, the same cannot be said about type-theoretic interpretations of BI as a type system for concurrency.
To our knowledge, the only type-theoretic investigation into the (proof theory of) BI has been the \alcalc \cite{ohearn:2003}---a \pre\lambda-calculus arising from the natural deduction presentation of BI---and its variations \cite{atkey:2004,collinson.pym.robinson:2008}.

\paragraph{Our key idea}
Here we propose \piBI: the first process calculus for the propositions-as-sessions and processes-as-proofs interpretation of BI, based on its sequent calculus formulation.
The result is an expressive concurrent calculus with a new mechanism to handle non-linear resources, which satisfies important behavioral properties, derived from a tight correspondence with BI's proof theory.
The central novelty of \piBI is a process interpretation of the structural rules, which closely follows the proof theory of BI.

Consider the case of contraction/duplication.
Given a session $x:A$, how can we duplicate it into sessions $x_1 : A$ and $x_2 : A$?
The difficulty here is that after duplication,
the two assumptions might be used differently and asynchronously.
We conclude that the actual process implementing those sessions
in the current evaluation context needs to be duplicated,
such that two independent processes can provide the duplicated sessions.
This ``on demand non-local replication'' of a process in the evaluation context
is not something supported natively by the \picalc.
We propose a new process construct, a prefix dubbed \emph{spawn},
which achieves this.

We illustrate the spawn prefix with a simple example.
Let $P$ and $Q$ be two processes, with $P$ providing a service on the channel $x$, and $Q$ requiring two copies of the service.
The spawn prefix $\spawn{x->x_1,x_2}$ denotes a request to the environment to duplicate the service on $x$ into copies on the new channels $x_1$ and $x_2$.
Then, $\spawn{x->x_1,x_2}.Q$ is a process that first performs the request and then behaves as $Q$.
The composition of these processes is denoted $\new x.(P \| \spawn{x->x_1,x_2}.Q)$, where `${}\| {}$' and `$(\restrsym x)$' stand for parallel composition and restriction on $x$, respectively.

In the reduction semantics of  \piBI, obtained from the proof theory of BI, the composed process reduces as follows:
\[
\new x.(P \| \spawn{x->x_1,x_2}.Q) \redd \new x_1. \big(P\subst{x->x_1} \| \new x_2.(P\subst{x->x_2} \| Q)\big).
\]
This way, the duplication request leads to the composition of two copies of $P$ (each with an appropriate substitution  $\subst{x->x_i}$) with the process $Q$ on channels $x_1$ and $x_2$, as desired.

The behavior of the spawn prefix is determined by the context in which it is executed and it communicates with the run-time system to achieve contraction or weakening.
This mechanism reminds us of horizontal scaling in cloud computing, with the spawn prefix playing the role of middleware: it requests the runtime environment to scale up/down a particular resource.
For example, a load balancer might determine that in a certain situation the execution environment has to provide an additional snapshot of a Docker container, and route part of the environment's requests to it.

As we will see, spawn reductions involve the propagation of the effects of duplicating processes (such as $P$ above); 
we give the full definition and illustrate it further in \Cref{{sec:pibi}}.

\paragraph{Contributions}
As mentioned, the spawn prefix provides a direct interpretation of the structural rules in the design of the type system, adopting BI as the underlying logic.
The resulting system is significantly expressive and yet different from  systems derived from propositions-as-sessions, which is not so surprising: as logics, BI and LL are \emph{incomparable}: there are provable formulas of LL that are not provable in BI, and vice versa.
As such, an immediate question is whether \piBI satisfies the expected meta-theoretical properties for session-typed processes: \emph{type preservation} and \emph{deadlock-freedom}.
The key difficulty is that the semantics of the spawn prefix is fundamentally \emph{non-local}---it depends on its execution context.
As a \textbf{first contribution}, we show that type preservation and deadlock-freedom hold for \piBI; moreover, we prove \emph{weak normalization}, which further justifies the semantics of spawn prefixes.

In addition to these meta-theoretical properties, an essential ingredient in the propositions-as-sessions research program is defined by concurrent interpretations  of (typed) functional calculi, in the spirit of Milner's seminal work on \emph{functions-as-processes}~\cite{milner:1992}.
As already mentioned, the only prior type-theoretic interpretation of BI is the (sequential) calculus \alcalc~\cite{ohearn:2003}.
As a \textbf{second contribution}, we define a translation from  \alcalc into \piBI, and prove that it correctly preserves and reflects the operational semantics of terms and processes, respectively.

While insightful and novel, the operational semantics of  \piBI and the translation of the \alcalc do not offer us a direct insight in the meaning of and difference between the types in our system (as is the case in the \alcalc).
A natural question is: what is the difference between multiplicative conjunction $\ast$ and additive conjunction $\wedge$ in \piBI?
As an answer to this question, our \textbf{third contribution} is a \emph{denotational semantics} for \piBI, which interprets processes as functions and describes types in terms of ``provenance tracking''.

Intuitively, our denotational semantics considers that duplication through a spawn prefix generates typed processes with the same provenance.
This notion of provenance then allows us to precisely distinguish between $\ast$ and $\wedge$: in a process with a session of type $A \ast B$ the sub-processes providing sessions $A$ and $B$ have a \emph{different origin}, a property that may not necessarily hold for processes with sessions of type $A \wedge B$.
This is possible because the provenance information can be reconstructed from a typing derivation, and it is made evident through the denotational semantics.

In addition to providing a semantic meaning to types, the denotational semantics is sound with regard to \emph{observational equivalence}.
Two processes are observationally equivalent if no other process can (operationally) distinguish between them.
Establishing observational equivalence of programs directly is hard, because it involves reasoning about process behavior under arbitrary contexts.
On the other hand, a denotational semantics provides a direct way of establishing equivalence: if two processes have the same denotation, then they are observationally equivalent.
As an application of the denotational semantics, we frame the operational correspondence for the \alcalc mentioned above in terms of observational equivalence.

\paragraph{Outline}
The rest of the paper is organized as follows.
\Cref{sec:pibi} presents the syntax, semantics, and type system of  \piBI, and illustrates its expressivity.
In \Cref{sec:meta} we establish key meta-theoretical properties of typable processes: type preservation, deadlock freedom,  and weak normalization.
We formally connect the  \alcalc to \piBI by  defining a translation and proving  operational correspondence for it in \Cref{sec:translation}.
We define the denotational semantics for \piBI processes, define observational equivalence, and formally relate the two in \Cref{sec:denot}.
We discuss further related work in \Cref{sec:rel_work} and conclude in \Cref{sec:conclusion}.
The omitted technical details can be found in the appendix.

 \section{The \piBI Calculus}
\label{sec:pibi}
In this section we formally introduce \piBI, a \picalc with constructs for session-based concurrency~\cite{honda:1993,honda.vasconcelos.kubo:1998} and our new spawn prefix.
We first describe syntax and dynamics (reduction semantics), and then present its associated type system, based on the sequent calculus for BI.
Following \piDILL~\cite{caires.pfenning:2010,caires.etal:2016}, our type system for \piBI admits a ``provide/use'' reading for typable processes, whereby a specific channel \emph{provides} a session by \emph{using} zero or more other sessions.

\paragraph{Notation}
We assume an enumerable set of \emph{names} (or \emph{channels}), $ a,b,c,\ldots,x,y,z\in \Name $ to denote channels.
We make use of finite partial functions~$ f \from A \fpfn B $.
We write~$ f(x) = \bot$ if~$f$ is not defined on~$x$.
We define~$\dom(f) = \set{ x \in A | f(x) \ne \bot } $.
We write $ \map{a_1 -> b_1; \dots ;a_n -> b_n} $ to denote a map,
and~$\emptymap$ for the empty map.
We will also use set comprehensions for finite functions,
\eg $
  \map{a -> b | a \in \set{1,2}, b = a^2 }
$.
For a finite partial function~$f$ and a set~$X$, we write~$f \setminus X$
for the function that coincides with~$f$ except for being undefined on~$X$.

\subsection{Process Syntax}
\label{sec:procs}
The syntax of \piBI processes is given in \Cref{fig:processes}.
\begin{mathfig}[\small]
    \begin{align*}
        P,Q,R &::=
        \out x[y].(P \| Q)
        && \text{output}
        &&\sepr\;\, \inp x(y).P
        && \text{input}
        \\
        &\;\,\sepr\;\, \out x<>
        && \text{close}
        &&\sepr\;\, \inp x().P
        && \text{wait}
        \\
        &\;\,\sepr\;\, \selL{x}.P
        && \text{left selection}
        &&\sepr\;\,  \caseLR{x}{P}{Q}
        && \text{branch}
        \\
        &\;\,\sepr\;\, \selR{x}.P
        && \text{right selection}
        &&\sepr\;\, \fwd[x<-y]
        && \text{forwarder}
        \\
        &\;\,\sepr\;\, \new x.(P \| Q)
        && \text{restriction + parallel}
        &&\sepr\;\, \spw{\spvar}.P
        && \text{spawn}
    \end{align*}
\caption{Syntax of \piBI processes.}
\label{fig:processes}
\end{mathfig}
The structure and conventions of process calculi
based on Curry-Howard correspondences are typically based on
an implicit expectation for how the components of a system are organized
--- an expectation that is ultimately verified by typing.
The idea is that interaction is grouped into a \emph{session},
the sequence of interactions along a single channel.
As hinted at above, a process~$P$ should \emph{provide} a session
at some specific channel~$x \in \fn(P)$,
and there is always a single \emph{user} of the session
exchanging messages with~$P$ along~$x$.
To provide a session, a process can make use of sessions on other channels.

Most constructs are standard and reflect these expectations
of sessions with provide/use roles:
\begin{itemize}
  \item \emph{Input/Output:}
    A process $\inp x(y).P$ receives
    a channel~$y$ from the session at~$x$ and proceeds as~$P$,
    continuing the session at~$x$.

    A process $\out x[y].\bigl(P \| Q\bigr)$
    sends a \emph{fresh} channel~$y$ over the session at~$x$;
    the process~$P$ provides the new session at~$y$,
    while~$Q$ continues the session at~$x$.

  \item \emph{Labelled choice (selection and branching):}
    The processes $\selL{x}.P$ and $\selR{x}.P$ select left/right labels over the session at~$x$, respectively.
    The dual process $\caseLR{x}{P}{Q}$ offers these left/right options,
    which trigger continuation~$P$ or~$Q$, respectively.

  \item \emph{Explicit session closing:}
    The end of a session is expected to be explicitly closed by a final handshake between
    the dual prefixes $\out x<>$ and $\inp x().P$ (empty output/input, respectively).

  \item \emph{Structured parallel composition:}
    Parallel composition, in keeping with \piDILL~\cite{caires.etal:2016},
    is used jointly with restriction.
    In a process~$\new x.( P \| Q )$
    a new session is created at~$x$,
    provided by~$P$ with~$Q$ as its only user.
    To improve readability, we sometimes annotate the parallel operator with the name of the associated restriction, and write $\new x.( P \|_x Q )$.

  \item \emph{Forwarders:}
    A process $\fwd[x<-y]$
    provides a session at~$x$
    as a copycat of the session at~$y$.
\end{itemize}

The key novel construct of \piBI
is the \emph{spawn prefix} $\spw\spb.P$.
It is parametrized by what we call a \emph{spawn binding}~$\spb$.
Spawn bindings, formally defined below, are a unification and generalisation
of prefixes like
$ \spawn{x->x_1,x_2}$ (copy the session at $x$ to $x_1$ and $x_2$)
but also
$ \spawn{x->\emptyset}$ (drop the session at $x$).
Indeed, in addition to allowing the simultaneous mapping of more than one
name~$x$, we allow names to be mapped to sets of names, encompassing the nullary and binary cases above.

\begin{definition}[Spawn binding]
\label{def:spawn-env}
A finite partial function
  $ {\spb \from \Name \fpfn \pset{\Name}} $
  is a \emph{spawn binding}~if:
  \begin{itemize}
    \item
  $
        \A x,y\in\dom(\spb). x \neq y \implies \spb(x) \inters \spb(y) = \emptyset
      $, and
  \item
   $
      \A x\in\dom(\spb). \dom(\spb) \inters \spb(x) = \emptyset
    $.
  \end{itemize}

  We define the \emph{restrictions} of~$\spb$ to be the set
  $ \restrOf(\spb) = \Union_{x \in \dom(\spb)} \spb(x) $.
  We omit redundant delimiters in spawn prefixes,
  \eg we write $ \spawn{x->x_1,x_2; y->y_1} $
      for $ \spawn{\map{x->\set{x_1,x_2}; y->\set{y_1}}} $.

  Given two spawn bindings~$\spb_1$ and~$\spb_2$ we say
  they are \emph{independent}, written~$ \spb_1 \indep \spb_2 $,
  if
  $\dom(\spb_1) \inters \dom(\spb_2) = \emptyset$,
  $\dom(\spb_1) \inters \restrOf(\spb_2) = \emptyset$,
  $\restrOf(\spb_1) \inters \restrOf(\spb_2) = \emptyset$, and
  $\dom(\spb_2) \inters \restrOf(\spb_1) = \emptyset$.
\end{definition}

\paragraph{Free and bound names}
Except for the new spawn construct, the notion of free and bound names is standard: the processes $\out x[y].(P \| Q)$, $\inp x(y).P$, and $\new y.(P \| Q)$ all bind~$y$.
For the spawn prefix, the situation is a bit different.
Given a set of names $X$, a spawn~$\spawn{x->X}.P$ signals to the context that $P$ will use $n=\card{X}$ times the session at~$x$.
The names in~$X$ indicate the new names that $P$ will use instead of~$x$.
As such, these new names are bound in~$P$ by the spawn prefix, whereas the original name $x$ is free in~$P$.
Formally, $
  \fn(\spw\spvar.P) =
    \bigl(
      \fn(P) \setminus \restrOf(\spvar)
    \bigr) \cup \dom(\spvar)
$.

We implicitly identify processes up to \pre\alpha-conversion and we adopt Barendregt's variable convention: all bound names are different, and bound names are different from free names.

\paragraph{Structural congruence}
As usual,
we define a congruence that identifies
processes up to inconsequential syntactical differences.
\emph{Structural congruence}, denoted $\congr$,
is the smallest congruence satisfying the rules in~\Cref{fig:congr}:
\begin{mathfig}[\small]
  \adjustfigure

\begin{proofrules}
  \infer*[lab={cong-assoc-l},rightstyle=\em,right={(when $x \notin \fn(Q) \land y \notin \fn(P)$)}]
  {}{
    \new x.\bigl(P \|_x \new y. (Q  \|_y R)\bigr)
    \congr
    \new y.\bigl(Q \|_y \new x. (P  \|_x R)\bigr)
  }
  \label{cong-assoc-l}

  \infer*[lab=congr-assoc-r,rightstyle=\em,right={(when $x \notin \fn(R) \land y \notin \fn(P)$)}]{}{
    \new x.\bigl(P \|_x \new y. (Q  \|_y R)\bigr)
    \congr
    \new y.\bigl(\new x. (P  \|_x Q) \|_y R\bigr)
  }
  \label{cong-assoc-r}

  \infer*[lab=congr-spawn-swap,rightstyle=\em,right={(when $\spb_1 \indep \spb_2$)}]{}{
    \spawn{\spb_1}. \spawn{\spb_2}. Q
    \equiv
    \spawn{\spb_2}. \spawn{\spb_1}. Q
  }
  \label{cong-spawn-swap}
\end{proofrules}

   \caption{Structural congruence.}
  \label{fig:congr}
\end{mathfig}
the orders of parallel compositions and
independent spawn prefixes do not matter (\cref{cong-assoc-r,cong-assoc-l} and \cref{cong-spawn-swap}, resp.).

Our structural congruence is a bit more fine-grained than is usual for the \picalc.
This is guided by the desire to make typing consistent under structural congruence.
Typing will enforce the expectations of process structure alluded to before, so our  congruence needs to preserve them.
For example, in a process $ \out x[y].(P\|Q) $ we expect $P$ to provide the new session at~$y$ and~$Q$ to continue the session at~$x$.
Admitting commutativity of parallel would break this expectation.
Similarly, in the composition of processes $\new x.(P\| Q)$ it is important that $P$ provides the session that governs $x$, and that $Q$ dually uses the session at $x$.
This choice of structural congruence simplifies the technical development and makes the correspondence between logic and type theory sharper.

\subsection{Reduction Semantics}
\label{sec:red-sem}
The operational semantics of \piBI
is defined in terms of a reduction relation, denoted $\redd$,
which combines the usual reductions of the \picalc\ with reductions
for spawn prefixes.
As usual, we shall write $\reddStar$ to denote the reflexive, transitive closure of $\redd$, and $P \nredd$ when $P$ cannot reduce.

\begin{figure}[t]
  \adjustfigure[\footnotesize]
\begin{proofrules}
  \infer*[lab=red-comm-r]{}{
  \new x.\bigl(\inp x(y).Q \|_x \out x[y].(P_1 \| P_2) \bigr)
  \redd
  \new x.\bigl(\new y.(P_1 \|_y Q) \|_x P_2 \bigr)
}
\label{red-comm-r}

\infer*[lab=red-unit-r]{}{
  \new x.\big(\inp x().Q \|_x \out x<> \big)
  \redd
  Q
}
\label{red-unit-r}

\infer*[lab=red-comm-l]{}{
  \new x.\big(\out x[y].(P_1 \| P_2) \|_x \inp x(y).Q \big)
  \redd
  \new x.\big(P_2 \|_x \new y.(P_1 \|_y Q)\big)
}
\label{red-comm-l}

\infer*[lab=red-unit-l]{}{
  \new x.\big(\out x<> \|_x \inp x().Q \big)
  \redd
  Q
}
\label{red-unit-l}

\infer*[lab=red-case]{\ell \in \set{\inl,\inr}
}{
  \new x.\bigl(\sel{x}{\ell}.P \|_x \caseLR{x}{Q_\inl}{Q_\inr}\bigr)
  \redd
  \new x.\bigl(P \|_x Q_\ell\bigr)
}
\label{red-case}
\\

\infer*[lab=red-fwd-r]{
    x \neq y \and y \notin \fn(P)
}{
   \new x.\left(P \|_x \fwd[y<-x]\right) \redd P\subst{x->y}
}
\label{red-fwd-r}

\infer*[lab=red-fwd-l]{
   x \neq y \and y \notin \fn(P)
}{
   \new x.\left(\fwd[x<-y] \|_x P\right) \redd \substS{P}{x}{y}
}
\label{red-fwd-l}

 \infer*[lab=red-spawn]{
   \spb(x) = \set{x_1,\dots,x_n}
   \\
   \spb' = \bigl(
      (\spb \setminus \set{x})
      \union
      \map{z \mapsto \set{z_1,\dots,z_n} | z \in \fn(P) \setminus \set{x}}
   \bigr)
}{
   \new x.\big(P \|_x \spawn{\spb}. Q\big)
   \redd
   \spawn{\spb'}.
      \new x_1.\big(
         P^{(1)} \|_{x_1} \dots \new x_n.(P^{(n)} \|_{x_n} Q)
      \dots\big)
}
\label{red-spawn}

\infer*[lab=red-spawn-r]{
   x \notin \dom(\spb)}{
   \new x.\big(P \|_x \spawn{\spb}. Q\big)
   \redd
   \spawn{\spb}.\new x.(P \|_x Q\big)
}
\label{red-spawn-r}

\infer*[lab=red-spawn-l]{
   x \notin \dom(\spb)}{
   \new x.\big(\spawn{\spb}. P \|_x Q\big)
   \redd
   \spawn{\spb}.\new x.(P \|_x Q\big)
}
\label{red-spawn-l}

\infer*[lab=red-spawn-merge]{}{
   \spawn{\spb_1}.\spawn{\spb_2}.P
   \redd
   \spawn{\spb_1 \merge \spb_2}.P
}
\label{red-spawn-merge}
 \\
  \begin{grammar*}
  \ectxt[\hole] \is
       [\hole]
    |  \spw\spb.\ectxt[\hole]
    |* \new x.(P \| \ectxt[\hole])
    |* \new x.(\ectxt[\hole] \| P)
\end{grammar*}

\infer*[lab=red-eval-ctxt]{
  P \redto Q
}{
  \ectxt[P] \redd \ectxt[Q]
}
\label{red-eval-ctxt}

\infer*[lab=red-congr]{
  P' \congr P
  \\
  P \redto Q
  \\
  Q \congr Q'
}{
  P' \redto Q'
}
\label{red-cong}
 \end{proofrules}
\caption{Reduction rules for \piBI.}
  \label{fig:red-main}
\end{figure}

\Cref{fig:red-main} gives the reduction rules.
The first seven rules describe interactions along a channel.
\Cref{red-comm-r,red-comm-l} describe the exchange of channel~$y$ along~$x$.
The resulting process contains an explicit restriction for~$y$ with $P_2$ out of scope, reflecting the expectation that~$P_1$ is
the provider of the new session at~$y$.
\Cref{red-unit-r,red-unit-l} describe the closing of a session at~$x$.
\Cref{red-case} shows how a branch offered on~$x$
can be selected by sending $\inl$ or $\inr$.
Finally, \cref{red-fwd-r,red-fwd-l} explain the elimination of a forwarder connected by restriction in terms of a substitution.

The next four rules of \cref{fig:red-main}
define the semantics of spawn.
The crucial rule is \cref{red-spawn}, which we explain by example.
\begin{example}\label{ex:contraction}
    Consider a process $P$ that provides a session on channel $x$.
    Another process $Q$ provides a session on $v$ by relying \emph{twice} on the session provided by $P$, on channels $x_1$ and $x_2$.
    Simple concrete examples are $P \is \out x<>$ and $Q \is \inp x_1().\inp x_2().\out v<>$.
    Now consider the following process:
    \[
        R \is \new x.(\inp z().P \|_x \spw{x->x_1,x_2}.Q)
    \]
    In $R$, the process $P$ is blocked waiting for the session on a channel $z$ to close.
By \Cref{red-spawn},
    \[
        R \redd \spw{z->z_1,z_2}.\new x_1.(\inp z_1().P\subst{x->x_1} \|_{x_1} \new x_2.(\inp z_2().P\subst{x->x_2} \|_{x_2} Q)).
    \]
    The result is two copies of $P$, providing their sessions on $x_1$ and $x_2$ instead of on $x$.
    Since we are also copying the closing prefixes on $z$, an additional spawn is generated, but now on $z$: it signals to the environment that two copies of the process providing the session on $z$ should be created and that they should provide its session on $z_1$ and $z_2$.
\end{example}
In the example above, the channel $z$ is a free name of the process that is copied by the spawn reduction.
Generally, a copied process may rely on arbitrarily many sessions on the free names of the process, and all the processes providing these sessions will have to be copied as well.
To handle the general case, \cref{red-spawn} uses the following definition.
\begin{definition}[Indexed renaming]
\label{def:idx-renaming}
  Given a process~$P$ with
  $ \fn(P) = \set{a,b,\dots,z} $,
  we define
  $\idx{P}{i}$
  to be the process~$P$ where every free name is replaced by a fresh copy of the name indexed by~$i$.
  Formally, assuming $a_i,b_i,\dots,z_i \notin \fn(P)$,
  $
    \idx{P}{i} \is
      P\subst{a->a_i,b->b_i,,z->z_i}
  $.
\end{definition}
Note that \cref{red-spawn} uniformly handles the case where
a session is not used at all.
\begin{example}\label{ex:weakening}
    Consider again $P$ that provides a session on $x$.
    This time, the process $Q'$ provides a session on $v$ \emph{without} relying on the session provided by $P$ (e.g., simply $Q' \is \out v<>$).
    Now consider the following process, obtained by replacing the spawn prefix and $Q$ in $R$ from \Cref{ex:contraction}:
    \[
        R' \is \new x.(\inp z().P \|_x \spw{x->\emptyset}.Q')
    \]
    By \Cref{red-spawn}, $ R' \redd \spw{z->\emptyset}.Q' $.
    In this case, $P$ is dropped.
    Since the empty input prefix on $z$ is also dropped, an additional spawn is generated to signal to the environment that the process providing the session on $z$ should be dropped as well.
\end{example}
\Cref{red-spawn-r,,red-spawn-l,,red-spawn-merge} show how the spawn prefix interacts with independent process compositions and with other spawn prefixes, respectively.
\Cref{red-spawn-r,red-spawn-l} are forms of scope extrusion: spawn prefixes can ``bubble up'' past restrictions that do not capture their bindings, possibly enabling interactions of the spawn with processes in the outer context.
\Cref{red-spawn-merge} describes how two consecutive spawn prefixes
can be combined into a single spawn, by merging the spawn bindings, denoted $\merge$, as follows.

\begin{definition}[Merge]
\label{def:merge}
  Let $ \spb[X] \is \Union \{ \spb(x) \mid x\in X,\, x\in\dom(\spb) \}. $
  The \emph{merge} of two spawn bindings~$\spb_1, \spb_2$,
  written $ \spb_1 \merge \spb_2 $, is defined as:
  \begin{equation*}
    (\spvar_1 \merge \spvar_2)(x) \is
    \begin{cases}
      \spvar_2[\spvar_1(x)] \union (\spvar_1(x) \setminus \dom(\spvar_2))
        \CASE x \in \dom(\spvar_1)
      \\
      \spvar_2(x)
        \CASE x \notin \dom(\spvar_1) \land x \notin \restrOf(\spvar_1)
      \\
      \bot \OTHERWISE
    \end{cases}
  \end{equation*}
\end{definition}

\noindent
  Note that the merge of two independent spawn bindings is just disjoint union (as functions), and $\emptyset$ is the neutral element for $\merge$.
Merge is associative:
  $
    (\spb_1 \merge (\spb_2 \merge \spb_3))
    =
    ((\spb_1 \merge \spb_2) \merge \spb_3)
  $.

The idea behind the merge operation $\spb_1\merge\spb_2$
is to ``connect'' the outputs of~$\spb_1$ to the inputs of~$\spb_2$,
similarly to composition of relations.
However, names that are irrelevant for~$\spb_1$ should
still be subject to the mapping of~$\spb_2$,
unless they are captured by the restrictions of~$\spb_1$.
For example:
\[
  \map*{
    \begin{matrix*}[l]
    x -> \emptyset\\
    y -> \set{y_1, y_2, y_3}
    \end{matrix*}
  }
  \merge
  \map*{
    \begin{matrix*}[l]
    y_2 -> \emptyset\\
    y_3 -> \set{y_4, y_5}\\
    z -> z_1
    \end{matrix*}
  }
  =
  \map*{
    \begin{matrix*}[l]
    x -> \emptyset\\
    y -> \set{y_1, y_4, y_5}\\
    z -> z_1
    \end{matrix*}
  }
\]
This merge can be graphically illustrated as follows:
\begin{center}
  \begin{tikzpicture}[
  dot/.style={
    fill=gray,
    draw=white,
    thick,
    circle,
    minimum size=4pt,
    inner sep=0pt,
    outer sep=0pt,
    xshift=-1em,
  },
  up/.style={bend left},
  dw/.style={bend right},
]
\matrix [
  matrix of math nodes,
  column sep=1em,
  row sep=-.2em,
  nodes={anchor=center},
]{
  |(x)| x &|[dot](xm)|&          &        &           &            &            & &|(xr)|x &|[dot](xmr)|&           \\
          &           &|(y1)|y_1 &        &           &            &            & &        &            &|(y1r)|y_1 \\
  |(y)| y &|[dot](ym)|&|(y2)|y_2 & \merge &|(y2')|y_2 &|[dot](y2m)|&            &=&|(yr)|y &|[dot](ymr)|&           \\
          &           &|(y3)|y_3 &        &|(y3')|y_3 &|[dot](y3m)|&|(y4)|y_4   & &        &            &|(y4r)|y_4 \\
          &           &          &        &           &            &|(y5)|y_5   & &        &            &|(y5r)|y_5 \\
          &           &          &        &|(z)|  z   &|[dot](zm)| &|(z1)|  z_1 & &|(zr)|z &|[dot](zmr)|&|(z1r)|z_1 \\
};

\draw[gray,thick]
  (xm)  edge (x)
  (ym)  edge (y)
        edge[->,up] (y1)
        edge[->]    (y2)
        edge[->,dw] (y3)
  (y2m) edge (y2')
  (y3m) edge (y3')
        edge[->]    (y4)
        edge[->,dw] (y5)
  (zm)  edge (z)
        edge[->]    (z1)
  (xmr) edge (xr)
  (ymr) edge (yr)
        edge[->,up] (y1r)
        edge[->,dw] (y4r)
        edge[->,dw] (y5r)
  (zmr) edge (zr)
        edge[->]    (z1r)
;
\end{tikzpicture}
 \end{center}
Note how~$x$ and~$z$ are both in the domain of the result,
and how the mapping to~$y_1$ is preserved by the merge,
although it is not in the restrictions of the second binding.

The last two rules in \Cref{fig:red-main} are purely structural.
\Cref{red-eval-ctxt} closes reduction under
\emph{evaluation contexts}, denoted~$\ectxt$,
consisting of spawn prefixes and structured parallel compositions (cf.\ \Cref{fig:red-main}).
\Cref{red-cong} closes reduction
under structural congruence.

\subsection{Typing}
\label{sec:typing}

The \piBI type system is based on the BI sequent calculus, and follows the approach of \piDILL: propositions are interpreted as session types, where the context governs the \emph{use} of available channels and the conclusion governs the process' behavior on the \emph{provided} channel.
As such, the type system of \piBI uses judgments of the form $\bunch \proves P :: x : A$,
where the process $P$ provides the session $A$ on channel $x$, while using the sessions provided by the typing context $\bunch$.

\begin{mathfig}[\small]
    \rulesection*{Types, bunches, and contexts}
	\begin{grammar}
  A,B,C \is
      \mOne
    | (A * B)
    | (A \wand B)
    \explain{(multiplicatives)}
    |* \ghost{\mOne}{\aOne}
    | (A \land B)
    | (A \to B)
    | (A \vee B)
    \explain{(additives)}
  \\
  \bunch,\bunchB \is
    \mEmpty | \aEmpty | x : A | \bunch \bsep \bunch | \bunch \band \bunch
    \explain{(bunches)}
  \\
  \bctxt(\hole) \is
      (\hole)
    | \bunch \bsep \bctxt(\hole)
    | \bunch \band \bctxt(\hole)
    | \bctxt(\hole) \bsep \bunch
    | \bctxt(\hole) \band \bunch
    \explain{(bunched contexts)}
\end{grammar}
  \rulesection{Typing}
\begin{proofrules}
    \infer*[lab=Sep-r]{
    \bunch_1 \proves P_1 :: y : A
    \\
    \bunch_2 \proves P_2 :: x : B
  }{
    \bunch_1\bsep \bunch_2 \proves
    \out x[y].(P_1 \| P_2) :: x : (A * B)
  }
  \label{rule:sep-r}

  \infer*[lab=Sep-l]{
    \bctxt(x : B\bsep y : A) \proves P :: z : C
  }{
    \bctxt(x : A * B) \proves \inp x(y). P :: z : C
  }
  \label{rule:sep-l}

  \infer*[lab=Wand-r]{
    \bunch\bsep y : A \proves P :: x : B
  }{
    \bunch \proves \inp x(y).P :: x : A \wand B
  }
  \label{rule:wand-r}

  \infer*[lab=Wand-l]{
    \bunch \proves P :: y : A
    \\
    \bctxt(x : B) \proves Q :: z:C
  }{
    \bctxt(\bunch\bsep x : A \wand B) \proves \out x[y].(P \| Q) :: z:C
  }
  \label{rule:wand-l}

  \infer*[lab=Emp-r]{}{
    \mEmpty \proves \out x<> :: x : \mOne
  }
  \label{rule:emp-r}

  \infer*[lab=Emp-l]{
    \bctxt(\mEmpty) \proves P :: x:C
  }{
    \bctxt(x : \mOne) \proves \inp x().P :: x:C
  }
  \label{rule:emp-l}

 \infer*[lab=Conj-r]{
    \bunch_1 \proves P_1 :: y : A
    \\
    \bunch_2 \proves P_2 :: x : B
  }{
    \bunch_1\band \bunch_2 \proves \out x[y].(P_1 \| P_2) :: x : A \land B
  }
  \label{rule:conj-r}

  \infer*[lab=Conj-l]{
    \bctxt(x : B \band y : A) \proves P :: z : C
  }{
    \bctxt(x : A \land B) \proves \inp x(y). P :: z : C
  }
  \label{rule:conj-l}

  \infer*[lab=Impl-r]{
    \bunch \band y : A \proves P :: x : B
  }{
    \bunch \proves \inp x(y).P :: x : A \to B
  }
  \label{rule:impl-r}

  \infer*[lab=Impl-l]{
    \bunch \proves P :: y : A
    \\
    \bctxt(x : B) \proves Q :: z:C
  }{
    \bctxt(\bunch \band x : A \to B) \proves \out x[y].(P \| Q) :: z:C
  }
  \label{rule:impl-l}

  \infer*[lab=True-r]{}{
    \aEmpty \proves \out x<> :: x : \aOne
  }
  \label{rule:true-r}

  \infer*[lab=True-l]{
    \bctxt(\aEmpty) \proves P :: y:A
  }{
    \bctxt(x : \aOne) \proves \inp x().P :: y:A
  }
  \label{rule:true-l}

 \infer*[lab=Disj-r-inl]{
    \bunch \proves P :: x : A
  }{
    \bunch \proves \selL{x}.P :: x : A \vee B
  }
  \label{rule:disj-r-1}
  \label{rule:disj-r-inl}

  \infer*[lab=Disj-r-inr]{
    \bunch \proves P :: x : B
  }{
    \bunch \proves \selR{x}.P :: x : A \vee B
  }
  \label{rule:disj-r-1}
  \label{rule:disj-r-inr}

  \infer*[lab=Disj-l]{
    \bctxt(x : A) \proves P :: z : C
    \\
    \bctxt(x : B) \proves Q :: z : C
  }{
    \bctxt(x : A \vee B) \proves \caseLR{x}{P}{Q} :: z : C
  }
  \label{rule:disj-l}

 \infer*[lab=Fwd]{}{
    y : A \proves \fwd[x<-y] :: x : A
  }
  \label{rule:type-fwd}

  \infer*[lab=Cut]{
    \bunch \proves P :: x : A
    \\
    \bctxt(x : A) \proves Q :: z : C
  }{
    \bctxt(\bunch) \proves \new x.(P \| Q) :: z : C
  }
  \label{rule:cut}

  \infer*[lab=Struct]{
    \bunch_2 \proves P :: z : C
    \\
\senv \from \bunch_1 \tobunch \bunch_2
  }{
    \bunch_1 \proves \spw \senv.P :: z : C
  }
  \label{rule:struct}

  \infer*[lab=Bunch-equiv]{
    \bunch_2 \proves P :: x : C
    \\
    \bunch_2 \buncheq \bunch_1
  }{
    \bunch_1 \proves P :: x : C
  }
  \label{rule:bunch-equiv}

   \end{proofrules}

  \rulesection{Spawn binding}
  \begin{proofrules}
    \infer*[lab=spawn-contract]{}{
  \map*{x -> \set{x_1, \dots, x_n} | {x \in \bunch}}
    \from    \Gamma(\bunch)
    \tobunch \Gamma\bigl(\idx{\bunch}{1}\band \dots\band \idx{\bunch}{n}\bigr)
}
\label{rule:spawn-contract}

\infer*[lab=spawn-weaken]{}{
  \map*{x -> \emptyset | x \in \bunch_1}
  \from    \bctxt(\bunch_1\band \bunch_2)
  \tobunch \bctxt(\bunch_2)
}
\label{rule:spawn-weaken}

\infer*[lab=spawn-merge]{
  \spvar_1 \from \bunch_0 \tobunch \bunch_1
  \\
  \spvar_2 \from \bunch_1 \tobunch \bunch_2
}{
  (\spvar_1 \merge \spvar_2) \from \bunch_0 \tobunch \bunch_2
}
\label{rule:spawn-merge}
   \end{proofrules}
  \rulesectionend
  \caption{Types, typing rules and spawn binding rules for \piBI.}
  \label{fig:typing_rules}
\end{mathfig}
The top of \Cref{fig:typing_rules} gives types, bunches, and bunched contexts; we explain the session behavior associated with types when we discuss the typing rules below.
Bunches~$\bunch$ are binary trees
with internal nodes labelled with either `$\band$' or `$\bsep$', and
with leaves being either unit bunches ($\mEmpty$ or $\aEmpty$)
or typing assignments ($x : A$).
We write~$\ident(\bunch)$ for the set of names occurring in the bunch~$\bunch$, and write $x \in \bunch$ to denote $x \in \ident(\bunch)$.
As is standard for BI, we consider bunches modulo
the least congruence on bunches closed under
commutative monoid laws
for `$\bsep$' with unit $\mEmpty$,
and for `$\band$' with unit $\aEmpty$, denoted~$\buncheq$.
For example,
$
  (\bunch_1 \bsep \mEmpty) \band \bunch_2
  \buncheq
  \bunch_2 \band \bunch_1.
$

Bunched contexts $\bctxt(\hole)$ are bunches with a hole $(\hole)$.
As usual, we write $\bctxt(\bunch)$ for a bunch obtained by replacing $(\hole)$ with $\bunch$ in $\bctxt$.
We write $\bctxt(\hole \mid \dots \mid \hole)$ for a bunched context with multiple holes.

\Cref{fig:typing_rules} also gives the type system for \piBI.
We organize them in four groups:
the first six rules type communication primitives with multiplicative types, and the next six rules with additive types;
the following three rules type branching primitives using disjunction;
the final four rules type
forwarding,
structured parallel composition, and the structural rules.

One key design choice of our typing rules is that
the processes in the multiplicative and the additive groups of rules
are the same.
For example, the same send action can be typed with~$A*B$ or with $A\land B$.
Their difference lays purely in the way they manage their available resources, possibly enabling or restricting the use of \Cref{rule:struct} in other parts of the derivation.

\paragraph{Rules for multiplicative constructs}
The type~$A*B$ is assigned to a session that outputs a channel of type~$A$
and continues as~$B$.
\Cref{rule:sep-r} states that
to \emph{provide} a session of type~$A*B$ on~$x$,
a process must output on~$x$ a new name~$y$
and continue with a process providing a session of type~$A$ on~$y$ in parallel with
a process providing the continuation session~$B$ on~$x$.
\Cref{rule:sep-l} describes how to \emph{use}
a session of type~$A*B$ on~$x$:
a process must input on~$x$ a new name~$y$ which is to be used for the session of type~$A$, after which the process must provide the continuation session~$B$ on~$x$.

\Cref{rule:wand-r,rule:wand-l} describe
the type~$A\wand B$.
These rules are dual to the rules for $A*B$: providing $A \wand B$ requires an input, and using it requires an output.

\Cref{rule:emp-r} states how to close a session of type~$\mOne$ using an empty output, followed by termination.
The dual \Cref{rule:emp-l} uses the empty input prefix.
Note that \Cref{rule:emp-r} requires the context to be~$\mEmpty$,
effectively forcing processes to consume all the sessions they use before terminating.

\paragraph{Rules for additive constructs}

As already mentioned, the rules for sessions of additive type,
are identical to the ones for multiplicative types, except that the latter (de)composes bunches using `$\bsep$' while the former uses `$\band$'.
In particular, the process interpretation of the rules is identical
for both counterparts.
The difference has effect elsewhere in the derivation,
where the choice between `$\band$' and `$\bsep$' affects the possibility of using \Cref{rule:struct} (explained last).

\paragraph{Rules for disjunction}

Disjunction types branching constructs.
To provide on~$x$ a session of type~$A \lor B$, the process must select either $\inl$/$\inr$ on~$x$ and continues by providing $A$/$B$, respectively.
Using a session of type~$A \lor B$ on~$x$ requires a branching on~$x$, where the left branch uses~$x$ as~$A$ and the right branch as~$B$.
Curiously, there is no dual construct for disjunction in BI, meaning that there is no way to type a selection on a channel that is being used, or a branch on a channel that is being provided.
There is no canonical way of adding such a dual construct; there are however extensions of BI that incorporate one---see, e.g.,~\cite{docherty:2019,pym:2002,brotherston:2012,brotherston.calcagno:2010,brotherston.villard:2015}.

\paragraph{Forwarders, Cut, and structural rules}
\Cref{rule:type-fwd} types the forwarder~$\fwd[x<-y]$ as providing a session of type~$A$ on~$x$ as a copycat of a session of the same type on~$y$ in the context.
\Cref{rule:cut} connects processes $P$ and $Q$ along the channel $x$: $P$ must \emph{provide} a session of type $A$ on $x$, whereas $Q$ must \emph{use} the session of the same type on the same channel.

\Cref{rule:bunch-equiv} closes typing under bunch equivalence.
\Cref{rule:struct} extends indexed renaming (\Cref{{def:idx-renaming}}) to bunches as follows.
\begin{definition}[Indexed bunch renaming]
\label{def:idx-bunch-renaming}
  Let~$\bunch$ be a bunch with
  $ \fn(\bunch) = \set{a,b,\dots,z} $.
  Assuming $a_i,b_i,\dots,z_i \notin \fn(\bunch)$,
  we define
  $
    \idx{\bunch}{i} \is
      \bunch\subst{a->a_i,b->b_i,,z->z_i},
  $
  where $\bunch\theta$ is the bunch obtained by
applying the substitution $\theta$ to all the leaves of~$\bunch$.
\end{definition}
\Cref{rule:struct} subsumes and generalizes the two
structural rules of weakening and contraction.
To unpack the meaning of the rule,
\Cref{fig:weaken_contract_bi} gives rules
for weakening and contraction as usually presented for BI sequent calculi.
\begin{mathfig}[\small]
\begin{proofrules}
\infer*[lab=Weakening]{
    \bctxt(\bunch_2) \proves P :: z : C
    \\
    \senv = \map{x->\emptyset | x\in \bunch_1}
  }{
    \bctxt(\bunch_1 \band \bunch_2) \proves \spw \senv.P :: z : C
  }
  \label{rule:weakening}

  \infer*[lab=Contraction]{
    \bctxt(\idx{\bunch}{1} \band \idx{\bunch}{2}) \proves P :: z : C
    \\
    \senv = \map{x->x_1,x_2 | x\in \bunch}
  }{
    \bctxt(\bunch) \proves \spw \senv.P :: z : C
  }
  \label{rule:contraction}
\end{proofrules}
\caption{Usual presentations of weakening and contraction for BI sequent calculi.}
\label{fig:weaken_contract_bi}
\end{mathfig}
\Cref{rule:weakening} discards the unused resources in~$\bunch_1$.
The process interpretation is a spawn
that terminates the providers of sessions on channels in $\bunch_1$.
\Cref{rule:contraction} allows the duplication of the resources
in~$\bunch$.
These resources need to be renamed to keep the names unique,
hence the substitutions~$\idx{\bunch}{1}$ and $\idx{\bunch}{2}$ in the premise.
The process interpretation is again a spawn prefix that generates
two indexed variants of each name in~$\bunch$,
representing the duplicated resources.
For both rules, it is crucial that the affected bunches are combined using `$\band$'.

Both \Cref{rule:weakening,rule:contraction} transform bunches according to
the spawn binding of the involved names.
The idea behind \Cref{rule:struct} is to generalize weakening and contraction, and allow more general spawn bindings.
As such, the rule combines in a single application a number
of consecutive or independent applications of \Cref{rule:weakening,rule:contraction}.

To relate spawn bindings and their corresponding transformations of bunches,
we define a \emph{spawn binding typing} judgment
$\senv \from \bunch_1 \tobunch \bunch_2$;
the bottom of \Cref{fig:typing_rules} gives their rules.

The idea is to consider a binding~$\spb$
as the merge of a sequence of bindings
$\spb = \spb_1 \merge \dots \merge \spb_k$,
where each $\spb_i$ is either a weakening or a contraction binding.
The weakening and contraction bindings are typed using \Cref{rule:spawn-weaken,rule:spawn-contract}.
In case of contraction, when~$n=2$ we get pure contraction, when~$n>2$ it might represent a number of consecutive contractions applied to the same bunch; the corner case when~$n=1$ just renames the variables in the bunch, and might arise as the by-product of a contraction and a weakening (partially) canceling each other out.

\Cref{rule:spawn-weaken,rule:spawn-contract} combined with \Cref{rule:struct} offer a justification of the specialized \Cref{rule:weakening,rule:contraction}, respectively.
In the former case, the justification is direct.
The latter case holds for $n=2$, i.e., for pure contraction.

We wrap up the explanation of \Cref{rule:struct} by giving an example typing derivation.
\begin{example}
    Consider the following process, with contraction and weakening in one spawn:
    \[
        P \is \new x.(\inp z().Q \|_x \new y.(\out y<> \|_y \spw{x->x_1,x_2;y->\emptyset}.R))
    \]
    This process is well-typed, assuming $\Delta \proves Q :: x:A$ and $\Gamma(x_1:A \band x_2:A) \proves R :: v:B$, as follows:
    \begin{derivation}
        \infer*{
            \infer*{
                \infer*{
                    \Delta \proves Q :: x:A
                }{
                    \Delta \bsep \empM \proves Q :: x:A
                }
            }{
                \Delta \bsep z:\mOne \proves \inp z().Q :: x:A
            }
            \\
            \infer*{
                \infer*{
                    \empA \proves \out y<> :: y:\aOne
                    \\
                    \infer*{
                        \Gamma(x_1:A \band x_2:A) \proves R :: v:B
                        \\
                        \Psi
                    }{
                        \Gamma(x:A \band y:\aOne) \proves \spw{x->x_1,x_2;y->\emptyset}.R :: v:B
                    }
                }{
                    \Gamma(x:A \band \empA) \proves \new y.(\out y<> \|_y \spw{x->x_1,x_2;y->\emptyset}.R) :: v:B
                }
            }{
                \Gamma(x:A) \proves \new y.(\out y<> \|_y \spw{x->x_1,x_2;y->\emptyset}.R) :: v:B
            }
        }{
            \Gamma(\Delta \bsep z:\mOne) \proves P :: v:B
        }
    \end{derivation}

\noindent
        where $\Psi$ is as follows:
        \begin{derivation}
            \infer*{
                \map{x->x_1,x_2} : \Gamma(x:A \band y:\aOne) \tobunch \Gamma(x_1:A \band x_2:A \band y:\aOne)
                \\
                \map{y->\emptyset} : \Gamma(x_1:A \band x_2:A \band y:\aOne) \tobunch \Gamma(x_1:A \band x_2:A)
            }{
                (\map{x->x_1,x_2} \merge \map{y->\emptyset}) : \Gamma(x:A \band y:\aOne) \tobunch \Gamma(x_1:A \band x_2:A)
            }
        \end{derivation}
    \noindent
    Notice how the spawn binding must be split into a contracting and a weakening spawn binding to justify the transformation of the bunch.
\end{example}

It is worth noticing that the typing judgment
$\senv \from \bunch_1 \tobunch \bunch_2$
is not uniquely determined from $\senv$ and $\bunch_1$.
Hence, there is not always a unique derivation tree for a given judgment.
To recover unique typing, it should be sufficient to annotate all bindings with their respective types, including the $\senv$ in the spawn prefixes.

\paragraph{Empty spawn}
We briefly discuss a corner case:
according to the typing rules for spawn bindings, we can type the empty spawn $\spawn{\emptyset}$.
It is tempting to add a structural congruence or reduction that removes it, since an empty spawn does not do much operationally:
an empty spawn can only propagate along cuts and silently merge into other spawns.
However, adding a reduction such as $\spawn{\emptyset}.P \redd P$ will cause complications because the empty spawn prefix, though operationally vacuous, can influence the typing.
An example is the following application of weakening:
\[
  \infer*
  {\bctxt(\empA) \proves P :: x : A}
  {\bctxt(\empM) \proves \spawn{\emptyset}.P :: x : A}
\]
Thus, such a reduction might slightly change the typing of a process across reductions, disproving type preservation.
This would unnecessarily complicate the system and, arguably, would not be in line with the Curry-Howard correspondence.

The empty spawn prefixes are but a minor annoyance: reductions can still happen behind spawn prefixes.
We do have to take extra care of the empty spawn when we show deadlock-freedom in \Cref{sec:type_preservation_and_deadlock_freedom} and weak normalization in \Cref{sec:weak_normalization}.
Next, we discuss additional examples.

\subsection{Examples and Comparisons}
\label{sec:examples-comparisons}

The \piBI\ calculus is expressive enough
to represent many useful concurrency patterns.
Here we show three significant examples and contrast \piBI's
approach to related calculi. Below we write $P \redd^k Q$ to mean that $P$ reduces to $Q$ in $k > 1$ consecutive steps.

\paragraph{Server and clients}
Recall from \Cref{ex:contraction} the process $R = \new x.(\inp z().P \| \spw{x->x_1,x_2}.Q)$.
We can interpret $\inp z().P$ as a server providing a service on~$x$
while relying on another server providing a service on~$z$, and the spawn as a request for two copies of the server to be used in~$Q$ on~$x_1$ and~$x_2$.

In \piDILL and CP, servers and clients are expressed using replicated input
${!}\inp x(y).P$,
which upon receiving a channel~$y$ replicates~$P$ to provide its session on~$y$.
A client must then explicitly request a copy of the server by sending a fresh channel over~$x$.
The \piDILL analog of~$R$ would then be
$
  R' \is
    \new u.(
      {!}\inp u(x).\out z[z'].\inp z'().P \|
      \out u[x_1].\out u[x_2].Q
    )
$.
In general, \piDILL's servers and clients can be expressed in \piBI by removing the replicated inputs (\ie ${!}\inp x(y).P$ becomes $P\subst{y->x}$) and replacing request outputs with spawns (i.e., $\out x[x_1].Q$ becomes $\spw{x->x_1,x_2}.Q\subst{x->x_2}$).

There is a crucial difference in the two models of servers:
in \piDILL, the server itself is responsible for creating a new instance
of the session it provides, and thus needs to make sure
that the sessions on which the new instance depends are themselves provided by servers.
In \piBI\ the responsibility for duplication lies with the client;
the server does not need to make special arrangements to allow for duplication,
and its dependencies are duplicated on-the-fly by the spawn semantics.

The on-the-fly nature of spawn propagation makes the server/clients pattern
more concurrent in \piBI than in \piDILL.
Suppose we connect~$R$ to a process providing~$z$.
The communication on~$z$ can take place before the spawn reduction,
such that the spawn no longer needs to propagate to~$z$:
\[
  \new z.(\out z<> \| R) \redd \new x.(P \| \spw{x->x_1,x_2}.Q).
\]
This is not possible in \piDILL: the replicated input of the server is blocking the communication on~$z$.

\paragraph{Failures}
An important aspect of (distributed) programming is coping with failure.
For example, consider $P \is \inp x(y).\inp x().\out z[w].(\fwd[w<-y] \| \inp z().\out v<>)$, i.e., a process that receives a channel $y$ over $x$ and forwards it over $z$.
Suppose that the process providing $x$ is unreliable, and might not be able to send the channel $y$.
This provider process indicates availability by a selection on $x$: left means availability and right means the converse.
We can then embed $P$ in a branch on $x$, where the right branch propagates the failure to forward a channel by means of spawn: $P' \is \caseLR{x}{P}{\inp x().\spw{z->\emptyset}.\out v<>}$.
Let $\inp z(q).R$ denote the process providing the session on $z$, which expects to receive a channel.
The following is an example where the behavior on $x$ is indeed available:
\begin{align*}
& \new z.(\inp z(q).R \| \new x.(\selL{x}.\out x[u].(\out u<> \| \out x<>) \| P')
\\
 \redd^3~ & \new z.(\inp z(q).R \| \new u.(\out u<> \| \out z[w].(\fwd[w<-u] \| \inp z().\out v<>)))
 \\
\redd^3~ & \new u.(\out u<> \| \new z.(R\subst{q->u} \| \inp z().\out v<>))
    \end{align*}
In contrast, in the following example the behavior on $x$ is not available:
$$\new z.(\inp z(q).R \| \new x.(\selR{x}.\out x<> \| P')) \redd^2 \new z.(\inp z(q).R \| \spw{z->\emptyset}.\out v<>) \redd \spw{\emptyset}.\out v<>$$

The principle sketched in this example is inspired by the typed framework by~\citet{caires.perez:2017}, which supports communication primitives for non-deterministically available or unavailable behavior via a Curry-Howard interpretation of Classical LL with dedicated modalities.

\paragraph{Interaction between session delegation and spawn}
Session delegation (also known as higher-order session communication)
is the mechanism that enables to exchange channels themselves over channels,
dynamically changing the communication topology. In \piBI, delegation interacts with spawn,
in that changing process connections influences the propagation of spawn.
Let $P \is \new x.(\out x[y].(\out y<> \| \out x<>) \| \new z.(\inp x(w).\inp x().\inp w().\out z<> \| \spw{z->\emptyset}.\out v<>))$.
From $P$, we could either reduce the spawn prefix or synchronize on $x$.
If we first reduce the spawn, the spawn propagates to $x$: $$P \redd \new x.(\out x[y].(\out y<> \| \out x<>) \| \spw{x->\emptyset}.\out v<>).$$
However, if we first synchronize on $x$, the spawn propagates to the delegated channel $y$: $$P \redd^2 \new y.(\out y<> \| \new z.(\inp y().\out z<> \| \spw{z->\emptyset}.\out v<>)) \redd \new y.(\out y<> \| \spw{y->\emptyset}.\out v<>).$$

\paragraph{Incomparability with \piDILL}
As shown by~\citet{ohearn:2003},
DILL and BI are incomparable.
Examining two canonical distinguishing examples can shed some light
on the fundamental differences of the two logics,
and their interpretations as session type systems.

As we remarked in \cref{sec:intro},
DILL admits a ``number of uses'' interpretation,
where linear resources have to be used exactly once.
This interpretation is not supported by BI:
\begin{example}
\label{ex:unusual}
  In \piBI it is possible to input linearly (\ie with~$\wand$)
  a session and use it twice.
  The process
  $
    P \is
      \inp z(a).
      \inp z(y).
      \spw a->a_1,a_2.
      \out y[a_1'].
      \bigl(
        \fwd[a_1' <- a_1] \|
        \out y[a_2'].
        (
          \fwd[a_2' <- a_2] \|
          \fwd[z<-y]
        )
      \bigr)
  $
can be typed
  as providing a session~$A \wand (A \to A \to B) \to B$ on~$x$:
  \begin{derivation}
    \infer*{
    \infer*{
    \infer*{
    \infer*{
      \infer*{}{
        a_1:A  \proves \fwd[a_1' <- a_1] :: a_1' : A
      }
\and
\infer*{
        a_2:A  \proves \fwd[a_2' <- a_2] :: a_2' : A
\and
y:B \proves \fwd[z<-y] :: z : B
      }{
        a_2:A \band y:A \to B \proves
          \out y[a_2'].
          (
            \fwd[a_2' <- a_2] \|
            \fwd[z<-y]
          )
        :: z : B
      }
    }{
      a_1:A \band a_2:A \band y:A \to A \to B \proves
        \out y[a_1'].
        \bigl(
          \fwd[a_1' <- a_1] \|
          \out y[a_2'].
          (\dots)
        \bigr)
      :: z : B
    }
      \and
}{
      a:A \band y:A \to A \to B \proves
        \spw a->a_1,a_2.
        \bigl(
          \out y[a_1'].
          (\dots)
        \bigr)
      :: z : B
    }}{
      a:A \proves
        \inp z(y).
        \spw a->a_1,a_2.
        (\dots)
      :: z : (A \to A \to B) \to B
    }}{
      \mEmpty \proves
        \inp z(a).
        \inp z(y).
        \spw a->a_1,a_2.
        (\dots)
      :: z : A \wand (A \to A \to B) \to B
    }
  \end{derivation}
The process receives a single session of type~$A$ over~$a$
  through linear input.
  The session type of~$y$ inputs~$A$ twice,
  but allows these two $A$-typed sessions to share a common origin.
  The process can thus spawn two copies of $a:A$
  and use them to interact with~$y$.
\end{example}

The corresponding LL proposition
$ A \lolli (A \to A \to B) \to B $
is not derivable:
LL forbids using twice a resource obtained through linear input.
However, the notion of linearity in \piBI has a more subtle reading:
it restricts the origin of sessions.
In \cref{ex:unusual}, the use of~$\to$ allows
the duplication of the session at~$a$ into its copies~$a_1$ and~$a_2$;
this information about the ``origin'' of $a_1$ and $a_2$ is recorded in the bunch by the use of `$\band$'.

On the other hand, there are types provable in DILL that are not
provable in BI.
A simple example is $ A \lolli B \proves A \to B $,
converting an implication from linear to non-linear.
A ``number of uses'' interpretation of the conversion makes sense:
$A \lolli B$ promises to use~$A$ exactly once to produce~$B$;
$A \to B$ declares to produce~$B$ using~$A$ an unspecified number of times,
including exactly once.
The corresponding judgment~$ A \wand B \proves A \to B $
is not provable in BI (and thus in \piBI).
Intuitively, this is because $A \to B$ allows~$A$ to be obtained with
resources which share their origin with the resource~$A \wand B$;
however, $A \wand B$ can only be applied to resources that do not share its own origin.

\paragraph{The meaning of multiplicative and additive types}
A natural question arises: if the process interpretation of multiplicatives and
additives coincides, what is the difference in the types representing behaviorally?
The following example addresses the difference between linear and non-linear connectives;
in \cref{sec:denot} we formally elucidate this difference by giving a denotational semantics
which allows tracking the \emph{origin} of sessions.

\begin{example}
\label{ex:db-example}
  Assume an opaque base type~$\Data$ of data.
  The type of a stylized database could be
  $
    \DB \is (\Data \to \DB) \land (\Data \land \DB)
  $
where the first conjunct can receive some new data to overwrite the contents of the database  (the `put' operation), and the second  would provide the current data stored in it (the `get' operation).
  This is a recursive type, which is not currently supported by our calculus; for the purposes of this discussion,
  it is enough to consider some finite unfolding of the type
  (terminated with~$\aOne$).

  Just by looking at the type $\DB$,
  we can identify possible interactions with the database.
  A typical usage pattern of a resource~$\var{db} : \DB$ would be
  to input the `put' and the `get' components and weaken the
  one we are not intending to use in the current step.
  Imagine we want to put some $d : \Data$:
  then we would weaken the `get',
  and send~$d$ over $ \var{put} : (\Data \to \DB) $ to obtain
  a continuation of type~$\DB$ that represents the updated database.

  A second pattern of usage afforded by \piBI\ is to use contraction
  to spawn independent snapshots of the database.
  For example, using contraction we can obtain,
  from $ \var{db} : \DB  $, a copy $ \var{db}' : \DB  $.
  From then on, the two copies can be mutated independently without interference.

  Now consider two different \piBI processes, $P_{\lbl{a}}$ and $P_{\lbl{m}}$, with judgments:
  \begin{align*}
    \var{db}_1 : \DB \band \var{db}_2 : \DB \proves P_{\lbl{a}} :: z : C
    &&
    \var{db}_1 : \DB \bsep \var{db}_2 : \DB \proves P_{\lbl{m}} :: z : C
  \end{align*}
  $P_{\lbl{a}}$ has access to two databases that are allowed to ``overlap''
  since they are aggregated by a `$\band$'.
  In contrast, $P_{\lbl{m}}$ has access to two non-overlapping databases.
  Here ``overlapping'' has a subtle meaning:
  it refers to the \emph{provenance} of the data stored in the two databases,
  rather than the stored value itself.
  To see the difference concretely, imagine we interact, in both cases,
  with $\var{db}_1$ by weakening the `get',
  and with $\var{db}_2$ by weakening the `put' (and the continuation of `get'):
  \begin{align*}
    \var{put}_1 : \Data \to \DB \band \var{d} : \Data
      \proves P'_{\lbl{a}} :: z : C
    &&
    \var{put}_1 : \Data \to \DB \bsep \var{d} : \Data
      \proves P'_{\lbl{m}} :: z : C
  \end{align*}
  Process~$P'_{\lbl{a}}$ is now allowed to send~$d$ on channel $\var{put}_1$,
  updating the database's value to~$d$, thus inducing a flow of information
  from $\var{db}_2$ to $\var{db}_1$.
  This flow is however forbidden in the case of $P'_{\lbl{m}}$:
  the data sent through $\var{put}_1$ needs to be obtained from a resource
  that is separated with it by `$\band$' as per Rule~\ref{rule:impl-l}.
  The fact that~$d$ is separated using~`$\bsep$' fundamentally forbids
  it to flow into~$\var{put}_1$.

  Now suppose $ C = \Data * \Data $ and take $\DB$ to be the 1-unfolding of the recursive definition.
  The typing of~$P_{\lbl{m}}$ ensures that the two data values sent on
  the channel~$z$ would come one from~$\var{db}_1$ and the other from~$\var{db}_2$; the combinations where two values taken from the same database are sent on $z$ are disallowed by typing.
  As we will see in \cref{ex:provenance,ex:db-reprise},
  the denotational semantics developed in \cref{sec:denot}
  formally justifies these claims.
\end{example}

 \section{Meta-theoretical Properties}
\label{sec:meta}
A distinguishing feature of the propositions-as-sessions approach is that the main meta-theoretical properties of session-typed processes (e.g., type preservation and deadlock-freedom) follow immediately from the cut elimination property in the underlying logic.
In this section we show that \piBI satisfies these properties, which serves to validate the appropriateness of our interpretation.
We consider type preservation and deadlock-freedom, but also \emph{weak normalization}.
\appendixref{app:meta_props} gives additional properties and detailed proofs.

\subsection{Type Preservation and Deadlock-Freedom}
\label{sec:type_preservation_and_deadlock_freedom}
Essential correctness properties in session-based concurrency are that (i)~processes correctly implement the sessions specified by its types (session fidelity) and (ii)~there are no communication errors or mismatches (communication safety).
Both these properties follow from the \emph{type preservation} property, which ensures that typing is consistent across structural congruence and reduction.
\begin{restatable}{theorem}{subjectRed}
  \label{thm:subject_red}
  If $\Delta \proves P :: x:C$, then $P \congr Q$ and $P \redd Q$ imply $\Delta \proves Q :: x : C$.
\end{restatable}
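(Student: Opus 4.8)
The plan is to separate the statement into two independent claims---preservation under structural congruence and preservation under reduction---and prove each by induction, with the reduction case invoking the congruence case to discharge \cref{red-cong}. For congruence I would show that $\bunch \proves P :: x:C$ and $P \congr Q$ imply $\bunch \proves Q :: x:C$ by induction on the derivation of $P \congr Q$; the congruence-closure cases are routine, so the work is in the three base rules. The associativity rules \cref{cong-assoc-l,cong-assoc-r} are exactly the ``commuting cuts'' transformation: I would invert the two nested uses of \cref{rule:cut}, re-bracket them, and use the side conditions on free names together with \cref{rule:bunch-equiv} to re-associate the bunch. For \cref{cong-spawn-swap}, the hypothesis $\spb_1 \indep \spb_2$ makes the two spawn-binding typings compose in either order; I would record this as a small lemma stating that $\spb_1 \merge \spb_2 = \spb_2 \merge \spb_1$ for independent bindings and that both orders are derivable via \cref{rule:spawn-merge}.

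Before the reduction case I would set up the technical infrastructure. The key auxiliary results are: (i) a \emph{renaming lemma}, that typing is stable under injective renaming of free names---needed for the forwarder rules and, crucially, for the indexed copies in \cref{red-spawn}; (ii) \emph{inversion lemmas} for each connective, formulated modulo $\buncheq$ since \cref{rule:bunch-equiv} leaves the process unchanged and may be interleaved anywhere, whereas \cref{rule:struct} can only sit atop a spawn prefix; and (iii) structural facts about the spawn-binding judgment $\senv \from \bunch_1 \tobunch \bunch_2$, in particular that indexed renaming preserves typing (so $\bunch \proves P :: x:A$ yields $\idx{\bunch}{i} \proves \idx{P}{i} :: x_i : A$, the type $A$ being unchanged because types carry no names) and that the merge construction used in \cref{rule:spawn-merge} agrees with \cref{def:merge}.

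I would then prove preservation under reduction by induction on the derivation of $P \redd Q$. The communication rules \cref{red-comm-r,red-comm-l,red-unit-r,red-unit-l,red-case} follow the familiar principal-cut-elimination pattern: invert the cut to expose the matching right/left rules for the connective, then reassemble the reduct, re-forming a cut on the transmitted name $y$ with the appropriate separator (`$\bsep$' for multiplicatives and units, `$\band$' for additives, and the same bunch for disjunction). The forwarder rules \cref{red-fwd-r,red-fwd-l} reduce to the renaming lemma and \cref{rule:type-fwd}. The scope-extrusion rules \cref{red-spawn-r,red-spawn-l} and the merging rule \cref{red-spawn-merge} correspond directly to composability of spawn-binding typings and associativity of $\merge$; the closure rules \cref{red-eval-ctxt,red-cong} follow from the induction hypothesis and the congruence case.

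The hard part will be \cref{red-spawn}. There the provider $P$ of the cut channel $x$ is duplicated $n = \card{\spb(x)}$ times with each copy indexed-renamed, and a fresh binding $\spb'$ is manufactured that extends $\spb \setminus \set{x}$ with $z \mapsto \set{z_1,\dots,z_n}$ for every $z \in \fn(P)$. To type the reduct I must: (a) invert the typing of $\spawn{\spb}.Q$ through \cref{rule:struct} to recover the spawn-binding typing witnessing the bunch transformation; (b) show that the manufactured $\spb'$ again admits a typing $\spb' \from \bunch_1 \tobunch \bunch_2$ for the reconstructed source and target bunches, by decomposing it into a contraction on the copies of $\fn(P)$ and the residual of $\spb$ via \cref{rule:spawn-contract,rule:spawn-weaken,rule:spawn-merge}; and (c) verify that the $n$ nested cuts with the copies $\idx{P}{i}$ typecheck against the indexed contexts produced by the contraction. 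The genuine obstacle is the \emph{non-locality}: the new binding touches every free name of $P$, so I expect the crux to be a commutation lemma showing that contracting/weakening the provided channel $x$ commutes, at the level of spawn-binding typings and bunches, with the induced contraction/weakening of the whole context $\fn(P)$ on which $P$ depends. Proving this commutation precisely and reconciling it with \cref{def:merge} is where the main effort concentrates.
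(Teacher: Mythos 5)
Your proposal matches the paper's proof in both decomposition and substance: the paper likewise splits the theorem into preservation under congruence (induction on $\congr$, with the associativity rules handled as cut re-association and \cref{cong-spawn-swap} as commuting two independent applications of \cref{rule:struct}) and preservation under reduction (induction on $\redd$, principal-cut reductions for the communication rules, an injective-renaming lemma for forwarders, and composability of spawn-binding typings for \cref{red-spawn-r,red-spawn-l,red-spawn-merge}). The ``commutation lemma'' you identify as the crux of \cref{red-spawn} is exactly the paper's key auxiliary result (its Lemma~\ref{lem:typedS_cut}, proved by induction on the spawn-binding typing), which substitutes a bunch for the cut variable on both sides of the binding judgment and yields the modified binding $\spb'$.
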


\noindent
The theorem above is a consequence of the tight correspondence between \piBI and the BI proof theory, as structural congruence and reduction of typed processes correspond to proof equivalences and (principal) cut reductions in the BI sequent calculus (see \appendixref{app:subject_red:proof} for details).

Another important correctness property is \emph{deadlock-freedom}, the guarantee that processes never get stuck waiting on pending communications.
In general, deadlock-freedom holds for well-typed \piBI processes where all names are bound, except for the provided name, which must be used only to close a session.
Any process satisfying these typing conditions can then either reduce, or it is \emph{inactive}: only the closing of the session on the provided name is left, possibly prefixed by an empty spawn.
Because of bunches, a process with all names bound but one is typable in more ways than just under an empty typing context:
\begin{definition}[Empty bunch]
  \label{def:empty_bunch}
  An \emph{empty bunch}~$\emptybunch$ is a bunch such that
  $\ident(\emptybunch) = \emptyset$.
  Equivalently, a bunch is empty if each of its leaves is $\mEmpty$ or $\aEmpty$.
\end{definition}

\begin{restatable}[Deadlock-freedom]{theorem}{dlfreedomThm}
\label{t:dlfreedom}
    Given an empty bunch $\emptybunch$, if\/ $\emptybunch \proves P :: z : A$ with $A \in \set{\mOne,\aOne}$, then either (i) $P \congr \out z<>$, or (ii) $P \congr \spawn{\emptyset}.\out z<>$, or (iii) there exists $S$ such that $P \redd S$.
\end{restatable}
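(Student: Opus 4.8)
The plan is to derive the theorem from a general \emph{progress} lemma, proved by induction on the typing derivation: if $\bunch \proves P :: z : A$, then either $P \redd S$ for some $S$, or $P$ is \emph{poised} --- it admits no reduction and is merely blocked awaiting an interaction from the environment on one of its free channels $\ident(\bunch) \cup \set{z}$. Granting this lemma, the theorem is its instance $\bunch = \emptybunch$, $A \in \set{\mOne,\aOne}$. Here the only free channel is $z$, and the sole action a provider of $z:\mOne$ or $z:\aOne$ may offer on $z$ is the close $\out z<>$ (\cref{rule:emp-r,rule:true-r}); crucially, this close is typable only under an \emph{empty} context, so a poised process of this type can carry no pending sub-session (no residual cut on another channel). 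Reading off the remaining structure --- with at most one leading empty spawn, since a second one would fire \cref{red-spawn-merge} --- then pins $P$ down to $\congr \out z<>$ or $\congr \spawn{\emptyset}.\out z<>$, which are outcomes (i) and (ii); reducibility is outcome (iii).

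The induction is routine outside of \cref{rule:cut}. A right-rule conclusion exhibits $P$ as an action on $z$, hence poised; a left-rule or \cref{rule:type-fwd} conclusion exhibits an action on a context channel, hence poised (neither arises in the empty-context instance); and \cref{rule:bunch-equiv} is absorbed by the induction hypothesis, since $\buncheq$ preserves names. For \cref{rule:struct}, $P = \spawn{\spb}.P'$, I appeal to the hypothesis on $P'$: a reduction of $P'$ propagates under the prefix by \cref{red-eval-ctxt}; if $P'$ is poised but itself headed by a spawn, \cref{red-spawn-merge} fires and $P$ reduces; otherwise $P$ is poised, blocked on $\dom(\spb)$ --- and in the empty-context instance the source bunch forces $\spb = \emptyset$, contributing exactly the single optional empty spawn of the target shape.

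The crux is \cref{rule:cut}, $P = \new x.(P_1 \|_x Q)$ with $\bunch \proves P_1 :: x : A'$ and $\bctxt(x : A') \proves Q :: z : A$. I first apply the hypothesis to the provider $P_1$: if it reduces, so does $P$ by \cref{red-eval-ctxt}; if $P_1 \congr \spawn{\emptyset}.\alpha_1$, the empty spawn escapes the cut by \cref{red-spawn-l} (as $x \notin \dom(\emptyset)$), again a reduction. So I may assume $P_1$ is poised on $x$. I then run a nested induction on $Q$, following the unique occurrence of $x$: if $Q$ reduces, so does $P$; if $Q$ is a spawn, it consumes $x$ via \cref{red-spawn} or bubbles past the cut via \cref{red-spawn-r}; if $Q$ is a cut $\new w.(Q_1 \|_w Q_2)$, then $x$ lies in exactly one premise, and I reassociate by \cref{cong-assoc-l,cong-assoc-r} to bring $P_1$ beside that premise before recursing on the smaller client; and if the exposed action of $Q$ is on $x$, the shared type $A'$ forces it to be \emph{dual} to that of $P_1$, so the two synchronise through the matching rule among \cref{red-comm-r,red-comm-l,red-unit-r,red-unit-l,red-case,red-fwd-r,red-fwd-l}. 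The point that closes the case for the theorem is that $Q$ can never be poised on $z$: closing $z$ demands an empty context, incompatible with $x$ still being present, so $Q$ must either reduce or engage $x$ --- the cut always makes progress.

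I expect \cref{rule:cut} to be the main obstacle. The structural congruence is deliberately weak --- just \cref{cong-assoc-l,cong-assoc-r,cong-spawn-swap} --- so there is no commuting conversion with which to float a cut outward past a prefix; the nested induction that tracks the active use of $x$, combined with the reassociations above (applied on the provider side as well, to expose its $x$-action at the cut interface), is precisely what substitutes for commuting conversions, and establishing its termination --- via a measure such as the size of $Q$'s derivation, which strictly decreases when one descends into a cut premise --- demands care. In tandem, one must keep the empty-spawn accounting coherent throughout: leading empty spawns must always either bubble out and merge into a genuine reduction (\cref{red-spawn-l,red-spawn-r,red-spawn-merge}) or collapse into the single optional prefix, so that in the $\set{\mOne,\aOne}$ instance the poised characterisation yields precisely the two normal forms $\out z<>$ and $\spawn{\emptyset}.\out z<>$ and nothing else.
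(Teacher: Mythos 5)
Your proposal is correct and follows essentially the same route as the paper: a progress-style lemma (the paper phrases it via a syntactic \emph{readiness} predicate on processes and a lemma that ready, well-typed processes reduce, rather than your ``reduce or poised'' dichotomy proved by induction on typing), whose crux is exactly the cut case handled by re-associating nested cuts with \cref{cong-assoc-l,cong-assoc-r} to bring the provider next to the unique user of the shared channel and then letting type duality force a synchronization, followed by a structural analysis showing that under an empty bunch a non-reducing process of unit type must be $\out z<>$ or $\spawn{\emptyset}.\out z<>$. The only imprecision worth noting is that in the general lemma you silently skip the case where $P_1$ is poised on one of its own context channels rather than on $x$ (which merely yields ``$P$ is poised'' once $Q$ is also checked, and cannot arise in the empty-bunch instance), so nothing breaks.
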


\noindent
The property stated above is an important feature of \piBI derived from its logical origin.
The \piBI interpretation of \Cref{rule:cut} combines restriction and parallel, ensuring that parallel processes never share more than one channel and thus preventing processes such as $\new x.\new y.(\inp y().\out x<> \| \inp x().\out y<>)$ where the subprocesses are stuck waiting for each other.
The proof follows from a property that we call \emph{progress}, which  ensures that processes of a given syntactical shape can reduce.
Although weak by itself, this property is useful in providing a reduction strategy for practical implementation of \piBI.
Moreover, it simplifies the proof of deadlock-freedom (given in \appendixref{app:dlfree:proof}), which reduces to proving that processes typable under empty bunches are in the right syntactical shape to invoke progress.

\subsection{Weak Normalization}
\label{sec:weak_normalization}
We now turn our attention to proving that our calculus is weakly normalizing, that is, for every process $P$ there exists some process $Q$ such that $P \reddStar Q \nredd$.
This is a result of independent interest, which we will use to show soundness of denotational semantics in \Cref{sec:denot}.
The normalization proof that we give here is of combinatorial nature.
Before writing out the necessary auxiliary definitions and lemmas, we first outline the main ideas.

Given a process $P$, what kind of reductions can $P$ make and can we come up with some kind of measure that would strictly decrease and disallow infinite reduction sequences?
If we did not have the spawn prefix, then the answer to this problem would be simple: each reduction is an instance of communication (or a forwarder reduction), which decreases the total number of communication prefixes in the process.
However, in presence of spawn, counting the total number of prefixes does not work.
For example, consider the following reduction, where $\fn(R) = \{x,y\}$,
\begin{equation}\label{eq:dummy_spawn_red}
\new x.\big(R \| \spawn{x -> x_1,x_2}.Q \big)
\redd
\spawn{y -> y_1,y_2}. \new x_1. (\idx{R}{1} \| \new x_2.(\idx{R}{2} \| Q)).
\end{equation}
In this reduction the prefixes in the sub-process $R$ get duplicated, so the total number of prefixes increases.
What has also changed is that the spawn prefix $\spawn{x->x_1,x_2}$ turned into the prefix $\spawn{y -> y_1,y_2}$ with a larger scope.
As a result, the communication prefixes in $Q$ went from being guarded directly by $\spawn{x->x_1,x_2}$, to being guarded by a prefix $\spawn{y -> y_1,y_2}$, with the latter prefix being ``smaller'' in the sense that it is closer to the top-level of the process.

Furthermore, if the reduction (\ref{eq:dummy_spawn_red}) occurs in some evaluation context $\ectxt$, then we can use \Cref{red-spawn-r,red-spawn-l} to actually propagate the spawn prefix to the top-level:
\begin{align}
\label{eq:dummy_spawn_red_two}
\ectxt[\new x.(R \| \spawn{x -> x_1,x_2}.Q)]
& \redd
\ectxt[\spawn{y -> y_1,y_2}. \new x_1. (\idx{R}{1} \| \new x_2.(\idx{R}{2} \| Q))] \\
& \reddStar
\spawn{y -> y_1,y_2}.\ectxt[\new x_1. (\idx{R}{1} \| \new x_2.(\idx{R}{2} \| Q))],
\nonumber
\end{align}
assuming $\ectxt$ has no other spawn prefixes that would interfere with $\spawn{y -> y_1,y_2}$.

Following this observation, the trick is to stratify the number of prefixes at each \pre\spawnsym-\emph{depth}, which is the number of spawn prefixes behind which the said prefix occurs.
So, if we examine the previous reduction sequence (\ref{eq:dummy_spawn_red_two}) and ignore the top-level spawn prefix,
the communication prefixes in $Q$ went from being at depth $n+1$ to being at depth $n$.
While the number of prefixes at depth $n$ has increased, the number of prefixes at depth $n+1$ has decreased.
This suggests that we should consider
a progress measure that aggregates the number of prefixes,
giving more weight to prefixes at greater \pre\spawnsym-depths.

Our reduction strategy for weak normalization is then as follows.
If a process can perform a communication reduction or a forwarder reduction, then we do exactly that reduction.
If a process can only perform a reduction that involves a spawn prefix, then we
\begin{enumerate*}
\item select (an active) spawn prefix with the least depth;
\item perform the spawn reduction;
\item propagate the newly created spawn prefix to the very top-level, merging it with other spawn prefixes along the way.
\end{enumerate*}

To show that this reduction strategy terminates, we adopt a measuring function that assigns to each process $P$ a finite mapping $\mu(P) : \Nat \to \Nat$ assigning to each number $n$ the number of communication prefixes at depth $n$ and above.
In order to handle the special case of a top-level prefix, the measure function simply skips it, i.e. $\mu(\spawn{\spvar}.P) = \mu(P)$ for a top-level $\spawn{\spvar}$.
We then define an ordering $\skelLt$ on such mappings
which prioritizes the number of prefixes at greater depths,
and show that it is well-founded.

Then, we argue that each clause of our reduction strategy strictly decreases the measure.
Since the relation $\skelLt$ is well-founded, it guarantees that our strategy terminates.
If we perform a communication reduction, then the number of communication prefixes at a given depth decreases, which strictly decreases the measure.
If we perform a spawn reduction, then the number of prefixes at some depth $n+1$ might decrease, but the number of prefixes at depth $n$ might increase, because of the propagated spawn prefix.
In this case, we keep propagating the spawn prefix to the top-level as much as possible, either leaving it at the top-level (to be skipped by the measure function), or merging it with an existing top-level prefix.
In both cases, the maximal prefix depth of the process decreases, which results in a strictly decreased measure.

Due to space limitations, we refer the interested reader to \appendixref{appendix:sec:wn} for the full details.
\begin{restatable}{theorem}{weakNormThm}
  \label{thm:weak_norm}
  If $\bunch \proves P :: z : A$ is a typed process, then $P$ is weakly normalizing, i.e., there exists some $Q$ such that $P \reddStar Q \nredd$.
\end{restatable}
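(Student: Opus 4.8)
The plan is to exhibit an explicit reduction strategy and prove its termination by a well-founded measure, following the combinatorial outline above. First I would make precise the notion of \emph{spawn-depth}: the spawn-depth of an occurrence of a communication or forwarder prefix in $P$ is the number of spawn prefixes it lies underneath. Then I would define the measure $\mu(P) : \Nat \to \Nat$ so that $\mu(P)(n)$ records the number of such prefixes occurring at spawn-depth at least $n$, adopting the convention that a single outermost top-level spawn is skipped, i.e. $\mu(\spw{\spvar}.P) = \mu(P)$, reflecting that a spawn pushed to the very top can do nothing but merge. Counting prefixes ``at depth $n$ and above'' bakes the depth-weighting directly into $\mu$: a prefix at depth $d$ contributes to $\mu(P)(n)$ for every $n \le d$, so deeper prefixes weigh more.

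Second, I would introduce the ordering $\skelLt$ on finite-support maps $\Nat \to \Nat$ that compares two measures by giving priority to the entries at the largest indices (a comparison read ``from the deepest level down''), and prove it well-founded. Since each $\mu(P)$ has finite support — a process has finitely many prefixes and hence a bounded maximal spawn-depth — well-foundedness follows by an order-embedding of these maps into the ordinals below $\omega^\omega$, where a higher index dominates lexicographically. Notably this whole apparatus is purely syntactic, so typing is not really used in the termination argument; the hypothesis $\bunch \proves P :: z : A$ merely fixes the class of processes of interest.

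Third comes the strategy and the core decrease lemmas. The strategy is: (i) if $P$ admits a communication or forwarder reduction, perform one; (ii) otherwise select an active spawn of least depth, fire \Cref{red-spawn}, and then propagate the freshly generated spawn all the way to the top level using \Cref{red-spawn-r,red-spawn-l,red-spawn-merge}, merging it into any spawn it meets. For case (i), a communication or forwarder reduction deletes prefixes at a fixed depth and creates none, so $\mu$ strictly decreases. For case (ii), I would treat the spawn firing together with its full propagation as a single macro-step; this needs an auxiliary lemma that propagation terminates, reaches the top level, and does not increase $\mu$, after which I show the macro-step is a strict $\skelLt$-decrease. Since the strategy covers every reduction shape (communication, forwarder, and all four spawn rules), it applies exactly when $P$ can reduce at all, so its endpoint is a genuine $\redd$-normal form.

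The main obstacle is the strict-decrease argument for the spawn macro-step, because \Cref{red-spawn} duplicates the provider via $\idx{R}{1}, \idx{R}{2}$ and therefore \emph{increases} the raw prefix count. The resolution, which the depth-weighting is engineered to deliver, is twofold: the prefixes formerly guarded by the fired spawn (those of the continuation) drop by one in depth once that spawn has been pulled up to the skipped top level, so the counts at the higher indices strictly decrease; and propagation plus merging strictly lowers the maximal spawn-depth of the process, which under $\skelLt$ dominates the increase of the shallower counts contributed by the duplicated copies of $R$. Making this accounting precise — tracking how the depths of the copies of $R$, of the continuation, and of any spawn met during propagation change, and verifying that the highest index of $\mu$ that changes does so downwards — is the delicate part. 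With the two decrease lemmas established, well-foundedness of $\skelLt$ forces the strategy's reduction sequence to be finite, yielding a $Q$ with $P \reddStar Q \nredd$, as required.
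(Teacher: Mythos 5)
Your proposal is correct and follows essentially the same route as the paper's proof: the same cumulative depth-indexed measure $\mu$ with the top-level spawn skipped, the same ordering $\skelLt$ prioritizing deeper indices, the same two-case strategy, and the same key decrease lemma for the spawn-fire-then-propagate macro-step. The only cosmetic difference is that you establish well-foundedness of $\skelLt$ via an embedding into ordinals below $\omega^\omega$, where the paper gives a direct pigeonhole argument; both work equally well.
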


\Cref{thm:weak_norm} thus captures the fact that, starting from a process $P$, different reductions may be applicable, or that there might be multiple spawn prefixes that can be brought to the top-most level.

  Strictly speaking, we do not require well-typedness assumptions for establishing weak  normalization; this property is enforced by the reduction semantics.
    This is a pleasant consequence of our design for the syntax of processes, which already incorporates some of the structure imposed by typing; this structure is then preserved via the correspondence between commuting conversions and reductions.
  As such, even the untyped processes are ``well-scoped'' in the sense that they conform to the tree-like structure typical of session-based interpretations of intuitionistic logics.

The weak normalization theorem is related to cut elimination in BI, but the two theorems are not equivalent.
The main discrepancy lies in the fact that not all cut reductions in BI correspond to reductions of \piBI processes;
process reductions correspond to reductions of cuts which are not guarded by an input or an output prefix.
Consecutively, we cannot directly adopt the usual cut elimination procedure for BI \cite{arisaka.qin:2012} for the purposes of showing weak normalization.

 \section{Translating the \texorpdfstring{\alcalc}{alpha-lambda-calculus} into \texorpdfstring{\piBI}{piBI}}
\label{sec:translation}
The \alcalc is a functional calculus that is in a Curry-Howard correspondence with the natural deduction representation of BI~\cite{ohearn:2003,pym:2002}.
Here we develop a type-preserving translation from the \alcalc to \piBI, and establish its correctness in a very strong sense: the translation satisfies an \emph{operational correspondence} property, which asserts how reduction steps in the source and target calculi are preserved and reflected (cf.\ \Cref{thm:completeness,thm:soundness}, respectively).

\subsection{The \texorpdfstring{\alcalc}{Alpha-Lambda-Calculus} and its Translation into \texorpdfstring{\piBI}{piBI}}
\begin{mathfig}[\small]
\adjustfigure
\rulesection*{Type system}
\begin{proofrules}
\infer*[lab=N-id]
{}
{x : A \vdash x : A}
\label{rule:N-id}

\infer*[lab=N-W]
{\Gamma(\Delta) \vdash M : A}
{\Gamma(\Delta\band \Delta') \vdash M : A}
\label{rule:N-W}

\infer*[lab=N-C]
{\Gamma(\Delta^{(1)}\band \Delta) \vdash M : A}
{\Gamma(\Delta) \vdash \substS{M}{\ident(\Delta^{(1)})}{\ident(\Delta)} : A}
\label{rule:N-C}

\infer*[lab=$\wand$I]
{\Delta\bsep x : A \vdash M : B}
{\Delta \vdash \Lam x. M : A \wand B}
\label{rule:wand-intro}

\infer*[lab=$\to$I]
{\Delta\band x : A \vdash M : B}
{\Delta \vdash \Alp x. M : A \to B}
\label{rule:impl-intro}

\infer*[lab=$\wand$E]
{\Delta_1 \vdash M : A \wand B
\and \Delta_2 \vdash N : A}
{\Delta_1\bsep \Delta_2 \vdash M\ N : B}
\label{rule:wand-elim}

\infer*[lab=$\to$E]
{\Delta_1 \vdash M : A \to B
\and \Delta_2 \vdash N : A}
{\Delta_1\band \Delta_2 \vdash M @ N : B}
\label{rule:impl-elim}
\end{proofrules}

 \rulesection{Reduction rules}
\begin{proofrules}
  \infer*[lab=red-beta-$\lambda$]
  {}
  {(\Lam x. M)\ N \alredto \substS{M}{x}{N}}
  \label{rule:red-beta-lam}

  \infer*[lab=red-beta-$\alpha$]
  {}
  {(\Alp x. M) @ N \alredto \substS{M}{x}{N}}
  \label{rule:red-beta-alp}

  \infer
  {M \alredto M'}
  {M\ N \alredto M'\ N}

  \infer
  {M \alredto M'}
  {M @ N \alredto M' @ N}
\end{proofrules}
 \caption{Selected rules of the type system and reduction rules for the \alcalc.}
\label{fig:alpha_lambda_typing}
\end{mathfig}
We first recall the statics and dynamics of the \alcalc.
Our formulation of the type system is based on the presentations by~\citet{ohearn:2003} and \citet[Chapter 2]{pym:2002}.

We use $M,N,L,\ldots$ for terms, and $a,b,c,\ldots,x,y,z,\ldots$ for variables.
The \alcalc is based on the $\lambda$-calculus, but with two separate kinds of function binders: $\Lam x.M$ with its corresponding function application $M\ N$ for the magic wand $A \wand B$, and $\Alp x.M$ with its corresponding function application $M @ N$ for the intuitionistic implication $A \to B$.
Selected typing rules are given in the top of \Cref{{fig:alpha_lambda_typing}}; the full type system can be found in \appendixref{sec:appendix:translation}.

We write $\fv(M)$ to denote the free variables of $M$.
As usual, substitution of a term $N$ for a variable $x$ in a term $M$ is denoted $\substS{M}{x}{N}$.
We write $M\subst{x_1->N_1,,x_n->N_n}$ for the sequence of substitutions $M\subst{x_1->N_1}\ldots\subst{x_n->N_n}$.
The reduction semantics of the \alcalc, denoted $\alredto$, follows a call-by-name  strategy for the $\lambda$-calculus, extended to cover two kinds of function binders.
Selected reduction rules are given in the bottom of \Cref{fig:alpha_lambda_typing}.
\paragraph{Typed translation}
\begin{figure}[t]
  \adjustfigure[\footnotesize]
  \begin{tabular}{@{}cc@{}}
  \toprule
  \alcalc typing of~$M_0$ & \piBI encoding~$\bTrans_z(M_0)$
\\ \midrule
  $ \infer{}{x : A \proves x : A} $
  &
  $ \infer{}{x : A \proves \fwd[z<-x] :: z : A} $
  \\[.7em]$ \infer{
    \Gamma(\Delta) \proves M : A
  }{
    \Gamma(\Delta \band \Delta') \proves M : A
  } $
  &
  $ \infer{
    \Gamma(\Delta) \proves \Trans_z(M) :: z : A
  }{
    \Gamma(\Delta\band \Delta') \proves
    \spawn{x -> \emptyset | x \in \ident(\Delta')}. \Trans_z(M) :: z : A
  } $
  \\[1.7em]$ \infer{
    \Delta \bsep x : A \proves M : B
  }{
    \Delta \proves \Lam x. M : A \wand B
  } $
  &
  $ \infer{
    \Delta \bsep x : A \proves \Trans_z(M) :: z : B
  }{
    \Delta \proves \inp z(x).\Trans_z(M) :: z : A \wand B
  } $
  \\[1.7em]$ \infer{
    \Delta \band x : A \proves M : B
  }{
    \Delta \proves \Alp x. M : A \to B
  } $
  &
  $ \infer{
    \Delta \band x : A \proves \Trans_z(M) :: z : B
  }{
    \Delta \proves \inp z(x).\Trans_z(M) :: z : A \to B
  } $
  \\[1.7em]$ \infer{
    \Delta_1 \proves M : A \wand B
    \and
    \Delta_2 \proves N : A
  }{
    \Delta_1 \bsep \Delta_2 \proves M\ N : B
  } $
  &
  $ \infer{
    \Delta_1 \proves \Trans_x(M) :: x : A \wand B
    \and
    \infer*{
      \Delta_2 \proves \Trans_y(N) :: y : A
      \and
      x : B \proves \fwd[z<-x] :: z : B
    }{
      x : A \wand B \bsep \Delta_2 \proves
      \out x[y].\big(\Trans_y(N) \| \fwd[z<-x]\big) :: z : B
    }
  }{
    \Delta_1 \bsep \Delta_2 \proves
    \new x.\bigl(
      \Trans_x(M) \| \out x[y].(\Trans_y(N) \| \fwd[z<-x])
    \bigr) :: z : B
  } $
  \\[1.7em]$ \infer{
    \Delta_1 \proves M : A \to B
    \and
    \Delta_2 \proves N : A
  }{
    \Delta_1 \band \Delta_2 \proves M @ N : B
  } $
  &
  $ \infer{
    \Delta_1 \proves \Trans_x(M) :: x : A \to B
    \and
    \infer*{
      \Delta_2 \proves \Trans_y(N) :: y : A
      \and
      x : B \proves \fwd[z<-x] :: z : B
    }{
      x : A \to B \band \Delta_2 \proves
      \out x[y].\big(\Trans_y(N) \| \fwd[z<-x]\big) :: z : B
    }
  }{
    \Delta_1 \band \Delta_2 \proves
    \new x.\bigl(
      \Trans_x(M) \| \out x[y].(\Trans_y(N) \| \fwd[z<-x])
    \bigr) :: z : B
  } $
\\ \bottomrule
\end{tabular}

   \caption{Translation from \alcalc to \piBI (selected clauses).}
  \label{fig:translation}
\end{figure}
Given a typed term $\Gamma \vdash M : A$ and a variable $z \notin \fv(M)$, we inductively translate the typing derivation of $M$ to a \piBI typing derivation, denoted $\Gamma \proves \Trans_z(\Gamma \proves M : A) :: z : A$.
As customary in translations of $\lambda$ into $\pi$ (cf.\ \cite{milner:1992,sangiorgi.walker:2003,wadler:2014}), the parameter $z$ is a name on which the behavior of the source term $M$ is made available.
By abuse of notation, we often write $\Gamma \proves \Trans_z(M) :: z : A$.
The translation is inspired by a canonical translation of proofs in natural deduction from into sequent calculus from (cf.~\cite[Section 6.3]{pym:2002}), and it is type-preserving by construction.
The translations of selected rules from \cref{fig:alpha_lambda_typing} is given in \cref{fig:translation}.
The identity derivation is translated into a forwarder, and the introduction rules are translated using right rules for the associated connectives.
The elimination rules are translated using the corresponding left rule in combination with a cut.
The weakening and contraction rules, which use implicit substitutions in \alcalc, are translated explicitly using the \ref{rule:struct} rule.

\begin{example}
  Consider the following \alcalc\ derivation for the term $M \is \Lam a. \Alp y. (y@a)@a$:
  \begin{derivation}
    \infer*{
    \infer*{
    \infer*{
    \infer*{
      \infer*{}{
        a_1:A  \proves a_1:A
      }
\and
\infer*{
        a_2:A  \proves a_2:A
\and
y:A \to A \to B \proves y:A \to A \to B
      }{
        a_2:A \band y:A \to A \to B \proves
        y@a_2 : A \to B
      }
    }{
      a_1:A \band a_2:A \band y:A \to A \to B \proves
      (y@a_2)@a_1 : B
    }}{
      a:A \band y:A \to A \to B \proves
      (y@a)@a : B
    }}{
      a:A \proves
        \Alp y. (y@a)@a
      : (A \to A \to B) \to B
    }}{
      \mEmpty \proves
        M \is \Lam a. \Alp y. (y@a)@a
      : A \wand (A \to A \to B) \to B
    }
  \end{derivation}
  The translation of  $M$ into \piBI is \[\Trans_z(M) = \inp z(a).\inp z(y).\spw{a->a_1,a_2}.\new x.\begin{array}[t]{@{}l@{}}
    (\new w.(\fwd[w<-y] \| \out w[a'_2].(\fwd[a'_2<-a_2] \| \fwd[x<-w])) \\
    {} \| \out x[a'_1].(\fwd[a'_1<-a_1] \| \fwd[z<-x])).
\end{array}\]
  This corresponds to the \piBI derivation
  in \cref{ex:unusual}, modulo additional cuts on forwarders due to the translation of variables and function applications.
\end{example}

\subsection{Operational Correspondence}
Here we show that the translation $\Trans_z(-)$ preserves and reflects behavior of processes and terms.
We formulate this important property in terms of an \emph{operational correspondence} result, following established criteria (cf.\ \cite{peters:2019,gorla:2010}).
Concretely, we establish the result in two parts: \emph{completeness} and \emph{soundness}.
The former states that reduction of \alcalc terms induces corresponding reductions of their process translations into \piBI; conversely, the latter states that reductions of translated terms are reflected by corresponding reductions of the source terms in the \alcalc.
\appendixref{sec:appendix:translation} gives detailed proofs.

\subsubsection{Completeness}
For completeness, we want to mimic every \alcalc reduction with one or multiple \piBI reductions.
That is, we would like to show that the translation induces a simulation.
To accurately characterize this, we need to address the discrepancy between the way the substitutions and function application are handled in \alcalc and in \piBI.
Unfortunately, the reductions of the translated term (a \piBI process) might diverge from the source term, due to the way the substitution and function application are handled in the \alcalc.
A function application $(\Alp x.M)\ N$ results in a term $\substS{M}{x}{N}$ with a substitution.
If there are multiple occurrences of $x$ in $M$---which is possible due to contraction---, they all get substituted with $N$.
On the \piBI side, substitution is represented as a composition $\new x. (\Trans_x(N) \| \Trans_z(M))$, in which \emph{one copy} of $\Trans_x(N)$ gets connected with the body $\Trans_z(M)$ through the endpoint $x$.
The contraction of the multiple occurrences of $x$ in $M$ is handled with a spawn prefix in $\Trans_z(M)$.
To address this discrepancy, we formulate completeness in a generalized way: following the approach by~\citet{toninho.etal:2012}, we define a  \emph{substitution lifting} relation which we show to be a simulation.
\begin{definition}[Substitution lifting]
    Given a term $M$ and a process $P$ of the same typing, we say $P$ lifts the substitutions of $M$, denoted $\bunch \proves P \sublift M :: z : A$, or $P \sublift M$  for short, if:
    \begin{enumerate}
        \item $P \equiv \spw{\spvar_s}.\ldots.\spw{\spvar_1}.\new x_n.(\Trans_{x_n}(N_n) \| \ldots \new x_1.(\Trans_{x_1}(N_1) \| \Trans_z(M')) \ldots)$ where for each $i \in [1,s]$, $\spvar_i = \map{y_1->\emptyset;\ldots;y_m->\emptyset}$ (only weakening) or $\spvar_i = \map{y_1->y_1,y'_1;\ldots;y_m->y_m,y'_m}$ (only contraction);
        \item $M = M'\subst{x_1 -> N_1,,x_n -> N_n}\map{\tilde{\spvar_1},\ldots,\tilde{\spvar_s}}$ where for each $i \in [1,s]$, the substitution $\tilde{\spvar_i}$ denotes a substitution corresponding to the spawn binding $\spvar_i$.
          Specifically, $\tilde{\spvar_i}$ is an empty substitution if $\spvar_i$ is weakening,
          and is the substitution $\subst{y'_1->y_1,,y'_m->y_m}$ if $\spvar_i$ is contraction.
    \end{enumerate}
    That is, both $P$ and $M$ are composed of $n$ cuts with (the translations of) the terms $M,N_1,\dots, N_n$.
Note that for any well-typed $N$ we have $\bTrans_z(N) \sublift N$.
\end{definition}
\noindent We then show the completeness result.
\begin{restatable}[Completeness]{theorem}{completenessThm}
  \label{thm:completeness}
  Given $\bunch \proves M : A$ and $\bunch \proves P :: z : A$ such that $P \sublift M$, if $M \alredto N$, then there exists $Q$ such that $P \reddStar Q \sublift N$.
\end{restatable}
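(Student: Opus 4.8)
The plan is to prove completeness by induction on the derivation of the reduction $M \alredto N$, following the structure of the $\alcalc$ reduction rules. The substitution lifting relation $\sublift$ is designed precisely so that it decouples the ``administrative'' work of propagating substitutions (realized in $\piBI$ by spawn prefixes bubbling up and cuts on forwarders resolving) from the ``genuine'' $\beta$-steps. I would therefore first establish that $\sublift$ is stable under the relevant $\piBI$ reductions: that is, a process $P \sublift M$ can always perform a (possibly empty) sequence of administrative reductions—forwarder eliminations (\Cref{red-fwd-r,red-fwd-l}), spawn propagation (\Cref{red-spawn-r,red-spawn-l}), spawn reductions (\Cref{red-spawn}), and spawn merges (\Cref{red-spawn-merge})—to reach a process still lifting $M$, in a canonical form matching the two clauses of the substitution lifting definition.

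\emph{Base cases.} The interesting base cases are the two $\beta$-rules. Consider \Cref{rule:red-beta-alp}, $(\Alp x.M)@N \alredto \substS{M}{x}{N}$ (the case for \Cref{rule:red-beta-lam} is analogous, using $\bsep$ and multiplicative rules in place of $\band$ and additives). By the definition of $\sublift$ and type-preservation of the translation, $P$ lifts $(\Alp x.M)@N$ means $P$ reduces administratively to a process whose head is the translation of the application, namely a cut $\new v.(\Trans_v(\Alp x.M) \| \out v[y].(\Trans_y(N) \| \fwd[z<-v]))$ placed under the appropriate spawns and cuts. Since $\Trans_v(\Alp x.M) = \inp v(x).\Trans_v(M)$, the head cut fires by \Cref{red-comm-l} (or \Cref{red-comm-r}), synchronizing the input with the output and yielding a fresh cut on $y$ that plays the role of $x$; eliminating the trailing forwarder via \Cref{red-fwd-r} then produces exactly $\new x.(\Trans_x(N) \| \Trans_z(M))$ up to renaming. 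This resulting process lifts $\substS{M}{x}{N}$: the new cut records the substitution of $N$ for $x$, and any contraction on $x$ inside $M$ has already been compiled into a spawn prefix in $\Trans_z(M)$, so that the multiple textual occurrences of $x$ on the term side correspond to a single cut plus a spawn on the process side—precisely the phenomenon the substitution lifting relation is built to absorb.

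\emph{Inductive cases.} The congruence rules $M \alredto M' \implies M@N \alredto M'@N$ (and the $\wand$ analogue) are handled by the induction hypothesis: $M@N$ translates to a cut with $\Trans_z(M')$'s head sitting in a larger evaluation context, so a reduction of $\Trans(M)$ lifts through \Cref{red-eval-ctxt} to a reduction of the compound process, and $\sublift$ is preserved componentwise. Here I would lean on the fact that evaluation contexts $\ectxt$ in $\piBI$ are closed under the cut and spawn framings appearing in the translation of applications.

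\emph{Main obstacle.} The hard part will be managing the bookkeeping of spawn prefixes when a $\beta$-step fires under a spawn—that is, showing that the administrative reductions genuinely reconstitute the canonical lifted form required by clauses (1)--(2) of \Cref{def:spawn-env}'s companion definition. When $N$ has free variables shared (via $\band$) with the surrounding context, firing the head cut and duplicating occurrences of $x$ forces spawn reductions (\Cref{red-spawn}) that generate and propagate new spawn bindings, which must then be merged (\Cref{red-spawn-merge}) so that the sequence $\spvar_s,\dots,\spvar_1$ in the lifted form is correctly reassembled, and the corresponding term-level substitutions $\tilde{\spvar_i}$ match. Verifying that these merges and propagations exactly realize the term-level substitution $\substS{M}{x}{N}$ composed with the $\tilde{\spvar_i}$'s—and that weak normalization (\Cref{thm:weak_norm}) guarantees this administrative phase always terminates so that $Q$ with $Q \sublift N$ is actually reached—is the technical crux, and I expect it to require a careful auxiliary lemma relating merge of spawn bindings to composition of the induced substitutions.
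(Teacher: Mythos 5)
Your overall mechanism is right---$\beta$-steps are simulated by firing the head cut of the translated application and then eliminating the trailing forwarder, and congruence steps are handled inductively---but your proof architecture has a gap that the paper's decomposition is specifically designed to avoid. The paper first proves a \emph{Basic Completeness} lemma (\Cref{lem:basic_completeness}) for pure translations: if $M \alredto N$ then $\Trans_z(M) \reddStar Q \sublift N$, by induction on the reduction derivation; there the induction hypothesis applies cleanly to subterms because the translation is compositional. The full theorem is then \emph{not} another induction on the reduction derivation: it is a two-case analysis on whether the redex of $M = M'\subst{x_1->N_1,,x_n->N_n}$ is already present in the skeleton $M'$ (apply the lemma to $\Trans_z(M')$ and commute the result past the outer cuts) or is only \emph{created} by one of the substitutions, i.e.\ some $x_i$ sits in head position of $M'$; in that case the corresponding subprocess is a forwarder, which is eagerly eliminated via \Cref{red-fwd-l,red-fwd-r} to expose the redex, reducing to the first case. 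Your proposal runs the induction directly on the statement with hypothesis $P \sublift M$, and this breaks in two places. First, in the congruence case you claim that ``$\sublift$ is preserved componentwise'', but $\sublift$ is not a congruence: the substitution cuts and spawn prefixes all live at the top of $P$, outside the application cut, so the subprocess corresponding to the reducing subterm is not itself in the $\sublift$ relation with that subterm; to even invoke the induction hypothesis you would have to split the outer environment across the application and re-commute cuts (via \Cref{cong-assoc-r}), which is exactly the work the paper's lemma/theorem split avoids. Second, your base case asserts that a process lifting $(\Alp x. M) @ N$ ``reduces administratively to a process whose head is the translation of the application''---but that assertion \emph{is} the substitution-created-redex case, the one genuinely new obstacle introduced by generalizing from $\Trans_z(-)$ to $\sublift$; it requires the explicit forwarder-reduction argument and cannot be read off from the canonical form that the definition of $\sublift$ already provides.

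Relatedly, you locate the technical crux in the spawn-binding bookkeeping triggered by $\beta$-steps. In the paper's completeness proof that bookkeeping is comparatively routine (it is far more delicate in the soundness direction); the crux for completeness is precisely the head-variable/substitution interaction above. Your plan is salvageable, but only by in effect reintroducing the paper's two ingredients: a compositional lemma stated for $\Trans_z(-)$ alone, and the eager-forwarder case analysis for redexes enabled by substitution.
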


\subsubsection{Soundness}
The completeness theorem shows that the reductions of terms are preserved by the translation.
We now show that reductions of translated processes are reflected by reductions of source terms.
There is a caveat, though: the translated processes are ``more concurrent'', and have more possible reductions that cannot  be immediately matched in source terms.
\begin{example}
    For some term $M\subst{x->N}$ and a corresponding substitution-lifted process $P \is \new x. (\Trans_x(N) \| \Trans_z(M))$,
    suppose that the subterm $N$ has a reduction $N \alredto N'$.
    The process $P$ can mimic this reduction:
    \[
    P \redd \new x. (Q \| \Trans_z(M)),
    \]
    for some $Q$.
    However, we do not necessarily have a corresponding reduction $M\subst{x->N} \alredto M\subst{x -> N'}$, since the variable $x$ might occur at a position where it is not enabled (e.g., under a $\lambda$-binder).
\end{example}
In order to be able to reflect all the reductions in translated processes, we state soundness in terms of an extended class of reductions for terms, denoted $\alfredto$ (with reflexive, transitive closure denoted $\alfredto*$).
To be precise, let $\mathfrak{C}$ be an arbitrary \alcalc context. In addition to the reductions in \Cref{fig:alpha_lambda_typing}, we consider reductions under arbitrary contexts:
\[
\infer
{M \alfredto M'}
{\mathfrak{C}[M] \alfredto \mathfrak{C}[M']}
\]
\begin{restatable}[Soundness]{theorem}{soundnessThm}
\label{thm:soundness}
    Given $\Delta \proves P \sublift M :: z : A$, if $P \reddStar Q$, then there exist $N$ and $R$ such that $M \alfredto* N$ and $Q \reddStar R \sublift N$.
\end{restatable}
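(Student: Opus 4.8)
The plan is to prove soundness by induction on the length of the reduction sequence $P \reddStar Q$, reducing the general case to an analysis of a single \piBI reduction step $P \redd P'$ and showing that any such step keeps us within (or advances along) the substitution-lifting relation. Concretely, I would first establish a \emph{single-step} soundness lemma: if $\Delta \proves P \sublift M :: z : A$ and $P \redd P'$, then either (i) $P' \sublift M$ still holds (the step was an ``administrative'' reduction that does not correspond to a $\beta$-reduction on the term side---e.g.\ a spawn propagation via \ref{red-spawn-r}/\ref{red-spawn-l}, a spawn merge via \ref{red-spawn-merge}, a forwarder elimination, or a communication internal to some $\Trans_{x_i}(N_i)$ or $\Trans_z(M')$ that corresponds to a reduction of a subterm), or (ii) there exists a term $N$ with $M \alfredto N$ and $P' \sublift N$. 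The soundness theorem then follows by chaining: each step either stays within the same equivalence class of terms under $\sublift$, or witnesses a $\alfredto$ step; iterating along $P \reddStar Q$ collects the term-side steps into $M \alfredto* N$, and the residual administrative reductions are absorbed by the $Q \reddStar R \sublift N$ tail.

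The key structural step is a careful case analysis on the shape of the redex in $P$, using the canonical form guaranteed by the definition of $\sublift$, namely $P \equiv \spw{\spvar_s}.\cdots.\spw{\spvar_1}.\new x_n.(\Trans_{x_n}(N_n) \| \cdots \| \Trans_z(M'))$. Since reductions close under evaluation contexts (\ref{red-eval-ctxt}) and structural congruence (\ref{red-cong}), I would classify each redex by \emph{where} it occurs. A communication redex at the top-level cut $\new x_i$ corresponds precisely to a function-application $\beta$-reduction in $M'$ (the output/forwarder pattern $\out x[y].(\Trans_y(N)\|\fwd[z<-x])$ meeting the input produced by $\Trans_x(\Lam\text{-}/\Alp\text{-term})$), which is exactly the translation of \ref{rule:red-beta-lam}/\ref{rule:red-beta-alp}; here I use the extended relation $\alfredto$ so that the redex need not be at the term's head. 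A spawn-induced reduction (\ref{red-spawn}) duplicates or drops one of the $\Trans_{x_i}(N_i)$ and propagates the spawn outward; I must show this reconstitutes the canonical form with the contraction/weakening substitutions $\tilde{\spvar_i}$ correctly updated, so that clause (i) holds and the term $M$ is unchanged. Reductions strictly inside some $\Trans_{x_i}(N_i)$ or inside $\Trans_z(M')$ are handled by an appeal to the inductive hypothesis on the immediate subterm together with compositionality of the translation under cuts.

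The main obstacle I anticipate is proving that the substitution-lifting invariant is genuinely preserved by spawn reductions, because spawn propagation is \emph{non-local}: a \ref{red-spawn} step fires a spawn against a cut, duplicates the entire provider process (via indexed renaming $\idx{(-)}{i}$), and generates a new outward-bubbling spawn whose binding must be reconciled---through \ref{red-spawn-merge} and the merge operation $\merge$---with the already-present top-level spawns $\spvar_1,\dots,\spvar_s$. I must verify that after normalizing the resulting prefixes back into the alternating weakening/contraction shape required by clause~(1) of the $\sublift$ definition, the induced term-side substitution $M'\subst{x_1->N_1,\dots}\map{\tilde{\spvar_1},\dots,\tilde{\spvar_s}}$ is unchanged, i.e.\ that duplicating a provider on the process side corresponds exactly to the contraction substitution already recorded in $M$. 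This hinges on the coherence between indexed renaming on processes (\Cref{def:idx-renaming}) and the contraction substitutions $\tilde{\spvar_i}$, and on associativity of $\merge$; I expect this bookkeeping, rather than the communication cases, to be the technically delicate part, and it is where the precise definition of $\sublift$ and the reduction rules for spawn must be used in full.
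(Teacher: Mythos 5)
Your overall strategy---case analysis on where the redex sits inside the canonical $\sublift$ form, matching cut-level communications to $\beta$-steps and treating spawn/forwarder steps as administrative---is the right shape and matches the paper's proof in spirit. But your key lemma is too strong, and this creates a genuine gap.

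The single-step lemma as you state it---$P \redd P'$ implies $P' \sublift M$ or $P' \sublift N$ for some $M \alfredto N$---is false. A single reduction frequently lands in a process that is \emph{not} in the canonical form required by clause (1) of the definition of $\sublift$. The clearest instance is \ref{red-spawn}: when a spawn in $\Trans_z(M')$ fires against the cut on some $x_i$, the freshly generated spawn binding sits \emph{in the middle} of the nest of cuts, not at the top level, and no rule of structural congruence can move it there---only further reductions (\ref{red-spawn-r}, \ref{red-spawn-l}, \ref{red-spawn-merge}) can. The same problem occurs when an internal reduction of some $\Trans_{x_i}(N_i)$ produces a subprocess that is itself only $\sublift$-related after exposing its own cuts and spawns, which then must be commuted past the enclosing cut on $x_i$; and in the translation of $\pi_i$ and of {\sc Case}, where reaching a $\sublift$-related state takes three or four steps with genuinely intermediate configurations in between. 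Your own text betrays the tension: if every single step preserved $\sublift$-relatedness there would be no "residual administrative reductions" left for the $Q \reddStar R$ tail to absorb.

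Weakening the lemma to "$P \redd P'$ implies $M \alfredto* N$ and $P' \reddStar R \sublift N$" fixes its truth but breaks the chaining: the completing reductions $P' \reddStar R$ chosen by the lemma need not be the ones actually taken in the given sequence $P' \redd P'' \redd \cdots \redd Q$, so you cannot apply the lemma again at $P''$ (its hypothesis fails there). Closing this gap requires exactly the device the paper uses: an inner induction on the \emph{number} of reduction steps $k$ in $P \redd^k Q$, combined with an explicit argument that the reductions needed to restore the canonical form are \emph{independent} of the remaining given reductions and can therefore be performed first (equivalently, the others postponed), after which the inner IH applies to the residual $k-k'$ steps from the restored state $Q'$. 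This reordering/postponement argument---not the spawn bookkeeping you flag---is the load-bearing part of the proof, and it is absent from your proposal. Your identification of the coherence between indexed renaming and the contraction substitutions $\tilde{\spvar_i}$ as delicate is correct, but it is subordinate to, and only meaningful once you have, the postponement machinery.
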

\noindent
Note that the premise in the theorem above permits arbitrarily many reduction steps from $P$ to $Q$ (i.e., $P \reddStar Q$), assuring that \emph{every} sequence of reductions of $P$ is reflected by a corresponding sequence of reductions of the source term $M$.
The alternative with a single reduction in the premise  (i.e., $P \redd Q$) being a much weaker property.
The proof of soundness proceeds by cases on the possible reductions of $P$, informed by the structure and typing of the source term $M$.
The key point in the proof is to postpone certain independent reductions of the target process, which cannot be immediately matched by reductions in the source term.

 \section{Observational Equivalence and Denotational Semantics}
\label{sec:denot}
Here we develop the theory of observational equivalence for \piBI processes.
To this end, we first define   barbed equivalence and observational equivalence.
Then, we provide a denotational semantics and show that processes that have the same denotation are observationally equivalent.

\begin{figure}[t]
  \adjustfigure[\small]
  \begin{proofrules}
    \infer*[lab=barb-select]
    {\ell \in \{\inl, \inr\}}
    {\sel{x}{\ell}.P \barb{\sel{x}{\ell}}}
    \label{barb-select}

    \infer*[lab=barb-branch]
    {\ell \in \{\inl, \inr\}}
    {\caseLR{x}{Q_{\inl}}{Q_{\inr}} \barb{\recvIn{x}{\ell}}}
    \label{barb-branch}

    \infer*[lab=barb-send]
    {}
    {\out x[y].\big(P \| Q\big) \barb{\barbOut{x}}}
    \label{barb-send}

    \infer*[lab=barb-recv]
    {}
    {\inp x(y).Q \barb{\barbIn{x}}}
    \label{barb-recv}

    \infer*[lab=barb-wait]
    {}
    {\inp x().P \barb{\inp x()}}
    \label{barb-wait}

    \infer*[lab=barb-close]
    {}
    {\out x<> \barb{\out x<>}}
    \label{barb-close}

    \infer*[lab=barb-restr-l]
    {P \barb{\alpha} \and x \not= \chanOf(\alpha)}
    {\new x.(P \| Q) \barb{\alpha}}
    \label{barb-restr-l}

    \infer*[lab=barb-restr-r]
    {Q \barb{\alpha} \and x \not= \chanOf(\alpha)}
    {\new x.(P \| Q) \barb{\alpha}}
    \label{barb-restr-r}

    \infer*[lab=barb-spawn]
    {P \barb{\alpha} \and \chanOf(\alpha) \notin \restrOf(\spvar)}
    {\spawn{\spvar}. P \barb{\alpha}}
  \end{proofrules}
  \caption{Barbs for \piBI processes.}
  \label{fig:barbs}
\end{figure}
We first define \emph{barbs}---observations that we can make on processes.
Their formulation is  standard:
\begin{grammar}
  \alpha \is
  \selL{x}
  | \selR{x}
  | \recvIn{x}{\inl}
  | \recvIn{x}{\inr}
  | \barbIn{x}
  | \barbOut{x}
  | \inp x()
  | \out x<>
\end{grammar}
By $\chanOf(\alpha)$ we denote the channel associated to the barb $\alpha$.
We say that process $P$ has a barb $\alpha$, if the relation $P \barb{\alpha}$ is derivable from the rules in \Cref{fig:barbs}.
Now we define observational equivalence.
\begin{definition}[Barbed equivalence]
  Barbed equivalence is the largest equivalence relation $\barbeq$ on processes of the same type that is closed under reductions and that satisfies the following condition.
  If $P \barbeq Q$ and $P \downarrow_{\alpha}$ then there exists $Q'$ such that $Q \reddStar Q' \downarrow_{\alpha}$.
\end{definition}

We will be mainly concerned with barbed equivalence of \emph{closed} processes.
A process $P$ is closed if it is typeable as $\Sigma \proves P :: y : B$, where $\fn(\Sigma) = \emptyset$, i.e. $\Sigma$ is an empty bunch (cf.~\Cref{def:empty_bunch}).
Note that a closed process can only have barbs associated to its provided channel $y$.

A program context $\pctxt[\hole]$ is a \piBI process with a hole in it.
Given $\bunch \proves P :: x : A$, a \emph{closing program context} $\pctxt$ is a program context such that $\Sigma \proves \pctxt[P] :: y : B$ for some empty bunch $\Sigma$.
\begin{definition}[Observational equivalence]
\label{def:obs_equiv}
  Two processes $\bunch \proves P :: T$ and $\bunch \proves Q :: T $ are observationally equivalent, denoted $\bunch \proves P \obseq Q :: T$, if, for any closing program context $\pctxt$ and any type $A$ such that $\Sigma \proves \pctxt[P] :: z : A$ and $\Sigma \proves \pctxt[Q] :: z : A$, it is the case that $\pctxt[P]$ and $\pctxt[Q]$ are barbed equivalent.
\end{definition}

Observational equivalence is a strong notion, because it relates two processes in \emph{any} well-typed program context.
As a consequence, proving observational equivalence of two processes directly is challenging, as it requires reasoning about an arbitrary context $\pctxt$.
Next, we describe a sound, more compositional approach to proving equivalence,
based on a denotational semantics for \piBI.

\subsection{Denotational Semantics}
Our motivation for developing a denotational semantics for \piBI is two-fold. First, it will provide a sound technique for establishing observational equivalence.
Second, it will prove useful to illustrate the aspects of separation and sharing through tracking of the origins of different processes, thus explaining the fundamental differences between multiplicative and additive connectives in \piBI.
Intuitively, if we have a closed process of the type $\Sigma \proves P :: x : A \ast B$, then this process outputs a fresh channel $y$ on $x$, and then separates into two processes providing $A$ and $B$.
These two resulting processes will have a \emph{different origin}: they are not results of duplication via a spawn prefix.
Crucially, the two processes obtained from the prefix $\spawn{x \mapsto x_1, x_2}{}$ on channels $x_1$ and $x_2$ do have the same origin as the process on channel $x$.

In order to make this insight precise, we extend the type system with \emph{atomic types}, denoted $\aty_1, \aty_2, \dots$, which we use to represent abstract channels/resources.
The extension is conservative, as we do not introduce any rules for atomic types.
Also, we slightly modify our notion of empty bunches (cf.~\Cref{def:empty_bunch}) to allow names as long as they are associated with atomic types.
\begin{definition}[Atomic bunch]
  \label{def:empty_bunch}
  An \emph{atomic bunch}~$\emptybunch$ is a bunch such that
  any type assignment $x : A$ in $\emptybunch$ is such that $A$ is an atomic type.
\end{definition}
This way, e.g., we consider $\Sigma = (x : \aty_1 \band y : \aty_2) \bsep z : \aty_3$ an atomic bunch.
Since we do not add any typing rules for atomic processes, well-typed processes cannot ``break down'' sessions $\aty_i$ and cannot communicate/block on the channels associated with atomic types.
In fact, the only thing that a well-typed process can do with a channel associated to an atomic type is forwarding.
Hence, this extension retains the essential properties of the system (e.g., \Cref{{t:dlfreedom}}).

We now define the denotational semantics of types and processes.
We start with a fixed set $\Tag$ of primitive tags.
A tag represents an origin of a process, as in, e.g., the ID of a node in which a particular process is executed.
As such, these tags will represent different origins/provenances of processes.
We interpret every type as a $\pset{\Tag}$-valued set: $\Sem{A} : \pset{\Tag} \to \Set$.
We have:
\begin{align*}
\Sem{1_m}(D) & = \singleton
&
\Sem{\aty}(D) &= D
\\
\Sem{1_a}(D) &= \singleton
&
\Sem{A \vee B}(D) & = \Sem{A}(D) + \Sem{B}(D)\\
\Sem{A \wedge B}(D) & = \Sem{A}(D) \times \Sem{B}(D)
& \Sem{A \sep B}(D) & = \Sigma_{D = D_1 \dunion D_2}.\Sem{A}(D_1) \times \Sem{B}(D_2)  \\
\Sem{A \to B}(D) & = \Sem{A}(D) \to \Sem{B}(D)
& \Sem{A \wand B}(D) & = \Pi_{D' \inters D = \emptyset}. \Sem{A}(D') \to \Sem{B}(D \union D')
\end{align*}
where
  $\singleton$ is the (terminal) set with one element, and
      $D = D_1 \dunion D_2$ holds if
    $D_1 \cap D_2 = \emptyset$ and
    $D = D_1 \union D_2$.
The interpretation of bunches $\Sem{\Delta}$ is defined analogously
by treating~`$\band$' and `$\bsep$' as `$\land$' and `$\sep$', respectively.
A process is interpreted as a function polymorphic in a set of tags $D$:
\[
  \Sem{\Delta \proves P :: x : A}_D : \Sem{\Delta}(D) \to \Sem{A}(D).
\]
The interpretation follows the standard interpretation of BI in doubly closed categories~\cite[Chapter~3.3]{pym:2002}.
Specifically, we interpret types as presheaves $\Set^{\pset{\Tag}}\!$, where $\pset{\Tag}$ is interpreted as a discrete category.
The interpretation of type formers corresponds to the Cartesian closed structure and a closed monoidal structure on $\Set^{\pset{\Tag}}\!$.
As the construction is standard, we omit the details in the interest of space.

\paragraph{The meaning of multiplicative and additive types, formally}
As we claimed in \cref{ex:db-example}, the difference between the multiplicative
and additive types can be explained in terms of data flow.
Equipped with the denotational semantics,
we can now make this intuition formal.
The idea is that applying contraction to some resource~$A$,
leaves a trace in the form of
the `$\band$' in the resulting resource $A\band A$.
This records the fact that,
although we can interact with each copy of~$A$ independently,
the duplicated resources share a common provenance.

Atomic types represent the base types, whose provenance
we want to track.
Assigning the same tag to two atomic types indicates that they
have a common provenance.
For example, to inhabit $ \Sem{\aty_1 * \aty_2}(\set{t_1, t_2}) $
a term needs to decide how to split~$\set{t_1, t_2}$ into two disjoint sets,
therefore assigning $t_1$ to $\aty_1$ and $t_2$ to $\aty_2$ (or vice versa),
forcing the two atomic types to have different provenance
(as expected for separation).
The following examples show what this means for \piBI\ processes.

\begin{example}
\label{ex:provenance}
  Let $\Delta \proves P :: x : \aty_1 \ast \aty_2$.
  Then for any $D \in \pset{\Tag}$, and $v \in \Sem{\Delta}(D)$,
  $
    \Sem{P}_D(v) = (t_1, t_2)
  $
  with $t_1 \neq t_2$.
  In other words, the two sessions $\aty_1$ and $\aty_2$ that are sent over the channel~$x$ by~$P$ have a disjoint provenance.
  Note that this would not hold for a process typed
  $\Delta \proves Q :: x : \aty_1 \land \aty_2$.
  For example,
  if~$Q$ is
  $
    {\spawn{y->y_1,y_2}.\out x<y_1>.\fwd[x<-y_2]}
  $,
  we have $\Sem{y : \aty \proves Q :: x : \aty \land \aty}_D(t) = (t,t)$,
  for all~$t\in D$,
  \ie the sessions that are sent over the channel~$x$ by~$Q$ do share their provenance.

  Furthermore, since the denotational semantics is compositional, we can generalize the argument above.
  Suppose we place~$P$ in some enclosing context $\pctxt$ such that
  $y : \aty'_1 \bsep z : \aty'_2 \proves \pctxt[P] :: x : \aty_1 \sep \aty_2$.
  Then
  $\Sem{y : \aty'_1 \bsep z : \aty'_2 \proves \pctxt[P] :: x : \aty_1 \sep \aty_2}_{\set{t_1, t_2}}(t_1, t_2) = (t'_1, t'_2)$ and either ${t_1 = t'_1},{t_2 = t'_2}$ or ${t_1 = t'_2}, {t_2 = t'_1}$,
  as these are the only available functions in the denotational semantics.
  That is, the process $\pctxt[P]$ sends over $x$ the sessions either with the same provenance as $y$ and $z$, or with swapped provenance.
  If we instead consider a program typed with additive conjunction
  $y : \aty'_1 \band z : \aty'_2 \proves \pctxt[Q] :: x : \aty_1 \land \aty_2$,
  then it may further send over $x$ two sessions both with the same provenance
  (either the one of~$y$ or of~$z$).
\end{example}

\begin{example}
\label{ex:db-reprise}
  Recall the database type from \cref{ex:db-example}.
  Here we let the data stored in the database be of atomic type~$\aty$,
  and we consider the 1-unfolding of the original type:
  $
    \DB_{\aty} \is (\aty \to \aOne) \land (\aty \land \aOne).
  $
  In \cref{ex:db-example} we claimed that a process with access to two databases
  can generate a flow of data from one to the other only if they
  are additively composed.
  Let us see how the denotational semantics makes this evident at the type level.
Consider the additive case, symbolized by process~$P'_{\lbl{a}}$.
  For any set of tags~$D$, we have:
  \begin{align*}
    \Sem{
      \var{put}_1 : \aty \to \aOne \band d : \aty
        \proves \ghost[l]{P'_{\lbl{m}}}{P'_{\lbl{a}}} :: z : C
    }_{D}
    &:
    ((D \to \singleton) \times D) \to \sem{C}_D
\intertext{Therefore,
  the denotation of $\var{put}_1$, with domain~$D$, can be applied to the tag of $d$, which is a member of~$D$.
  Data flow from one database to the other is allowed
  since the databases are
  already declared to have shared provenance.
In contrast, for the multiplicative case represented by~$P'_{\lbl{m}}$,
  we have:
  }
\Sem{
      \var{put}_1 : \aty \to \aOne \bsep d : \aty
        \proves P'_{\lbl{m}} :: z : C
    }_{D}
    &:
    ((D_1 \to \singleton) \times D_2) \to \sem{C}_{D},
  \end{align*}
  for sets of tags $D_1$ and $D_2$ such that $D_1 \dunion D_2 = D$.
  In this case, because $\var{put}_1$ and $d$ are separated by `$\bsep$', the tags associated with $\aty$ in these types must be disjoint.
  Therefore, in the case of $P'_{\lbl{m}}$ it is impossible to apply~$\var{put}_1$
  (the denotation of which has domain~$D_1$)
  to the data from the second database
  (which comes from~$D_2$).
\end{example}

\subsection{Properties}
We write $\Sem{P}$ for $\Sem{\Delta \proves P :: x : A}$ when $\Delta$ and $x:A$ are unambiguous.
Our interpretation is indeed a valid model of \piBI, as it satisfies the following lemmas:
\begin{lemma}
  \label{lem:denot_congr}
  Let $\bunch \proves P :: z : A$ and $\bunch \proves Q :: z : A$ be processes such that $P \congr Q$. Then $\Sem{P} = \Sem{Q}$.
\end{lemma}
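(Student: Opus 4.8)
The plan is to prove \Cref{lem:denot_congr} by induction on the derivation of the structural congruence $P \congr Q$. Since $\congr$ is the smallest congruence closed under the three axioms in \Cref{fig:congr} (\cref{cong-assoc-l}, \cref{cong-assoc-r}, and \cref{cong-spawn-swap}), it suffices to check that each generating axiom preserves the denotation, and then that the denotation is preserved under reflexivity, symmetry, transitivity, and the congruence (contextual) closure. The last four are routine: symmetry and transitivity are immediate from the equality of denotations being an equivalence relation, and the congruence closure cases follow because the denotation is defined compositionally over the process constructors, so equal denotations of subprocesses yield equal denotations of the whole.

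The substance is therefore in the three axiom cases. For \cref{cong-spawn-swap}, I would unfold the denotation of $\spawn{\spb_1}.\spawn{\spb_2}.Q$ and $\spawn{\spb_2}.\spawn{\spb_1}.Q$ under the typing given by \Cref{rule:struct}, where each spawn is interpreted via the semantic action of its spawn-binding typing $\spb_i \from \bunch \tobunch \bunch'$ on $\Sem{\bunch}(D)$. The independence hypothesis $\spb_1 \indep \spb_2$ guarantees that the two bindings touch disjoint parts of the bunch (disjoint domains and disjoint restrictions), so their induced transformations on the interpretation of the bunch commute. Concretely, I would argue that the duplication/weakening operations on the disjoint `$\band$'-components they act on can be applied in either order with the same result, since contraction is interpreted by the diagonal and weakening by a projection, and diagonals and projections on independent factors commute.

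For \cref{cong-assoc-l} and \cref{cong-assoc-r}, the two sides are two nested cuts typed via \Cref{rule:cut}, differing only in the associativity/commutativity of how the three processes $P,Q,R$ are composed along the two restricted channels $x$ and $y$. The side conditions ($x \notin \fn(Q) \land y \notin \fn(P)$, etc.) ensure that the two cut channels are genuinely independent, so that neither process depends on both restricted names simultaneously. Denotationally, \Cref{rule:cut} is interpreted by substituting the denotation of the provider into the appropriate argument slot of the denotation of the user; the associativity and commutativity of this substitution, under the disjointness guaranteed by the side conditions, is exactly the associativity/symmetry coherence of the monoidal and Cartesian structure on $\Set^{\pset{\Tag}}$ underlying the model (cf.\ \cite[Chapter~3.3]{pym:2002}). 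I would verify that the relevant diagrams of substitutions commute, splitting the tag set $D$ consistently across the two independent cuts.

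The main obstacle I anticipate is bookkeeping rather than conceptual: in each axiom case the two sides are typed by genuinely different derivation \emph{trees} (the bunch is re-associated, or the spawn-binding typing judgments are decomposed differently via \Cref{rule:spawn-merge}), and the denotation is defined on derivations, so I must first argue that the relevant denotations do not depend on the particular derivation chosen. This is where \Cref{rule:bunch-equiv} and the non-uniqueness of spawn-binding typings noted after the typing rules bite. I would address this by appealing to the coherence of the categorical interpretation: the symmetric-monoidal and Cartesian structure on $\Set^{\pset{\Tag}}$ is coherent, so all re-associations of bunches and all decompositions of a given merge $\spb = \spb_1 \merge \dots \merge \spb_k$ into weakening/contraction steps yield canonically isomorphic (here, equal, since the category is a presheaf category over a discrete base) interpretations. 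Establishing this coherence lemma once, up front, reduces each of the three axiom cases to a direct comparison of two equal composite functions.
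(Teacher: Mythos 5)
The paper states \Cref{lem:denot_congr} without an explicit proof (neither in \Cref{sec:denot} nor in the appendix), so there is nothing to diverge from; your argument is the standard one and matches what the paper leaves implicit, namely that each congruence axiom corresponds to a proof conversion (as spelled out in the structural-congruence part of the proof of \Cref{thm:subject_red}) that the doubly-closed-category semantics validates via functoriality/interchange of the bunched-context functors for \cref{cong-assoc-l,cong-assoc-r} and commutation of independent spawn-binding actions for \cref{cong-spawn-swap}. Your explicit flagging of the derivation-dependence/coherence issue is a genuine gap in the paper's presentation that your up-front coherence lemma correctly addresses.
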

\begin{lemma}
  \label{lem:denot_redd}
  Let $\bunch \proves P :: z : A$ and $\bunch \proves Q :: z : A$ be processes such that $P \redd Q$. Then $\Sem{P} = \Sem{Q}$.
\end{lemma}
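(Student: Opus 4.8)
The plan is to prove the statement by induction on the derivation of $P \redd Q$, using two properties of the interpretation: that it is defined \emph{compositionally} on typing derivations (following the doubly-closed-category interpretation of BI in $\Set^{\pset{\Tag}}$), and that each reduction rule realizes a proof transformation validated by that categorical structure. Since we only reduce typed processes, type preservation (\Cref{thm:subject_red}) together with inversion of the typing rules guarantees that all subprocesses occurring in a reduction are themselves typed, so every denotation appearing below is defined. First I would dispatch the two congruence-closure rules. For \ref{red-cong}, where $P' \congr P \redd Q \congr Q'$, \Cref{lem:denot_congr} gives $\Sem{P'} = \Sem{P}$ and $\Sem{Q} = \Sem{Q'}$, while the induction hypothesis gives $\Sem{P} = \Sem{Q}$; chaining these equalities yields $\Sem{P'} = \Sem{Q'}$. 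For \ref{red-eval-ctxt} I would first record a compositionality lemma: an evaluation context $\ectxt$ is built from cuts and spawn prefixes, so its denotation is a function of the denotation of the process in the hole (a cut contributes composition with a fixed morphism; a spawn contributes precomposition with a fixed structural map). Hence $\Sem{P} = \Sem{Q}$ implies $\Sem{\ectxt[P]} = \Sem{\ectxt[Q]}$.

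The base cases are the axiom rules, which split into principal cut reductions and spawn reductions. The communication rules \ref{red-comm-r}, \ref{red-comm-l}, \ref{red-unit-r}, \ref{red-unit-l} and the branching rule \ref{red-case} each pair a right rule with the matching left rule across a cut, i.e.\ a principal cut on $*\,/\,\wand$, on $\mOne$, or on $\vee$. In the model these are validated by the (co)unit laws of the monoidal-closed adjunction for the multiplicatives and by the coproduct universal property for $\vee$; concretely I would unfold both sides using the clauses for $\Sem{A * B}$, $\Sem{A \wand B}$, $\Sem{\mOne}$ and $\Sem{A \vee B}$ and check that the two composites coincide. The forwarder rules \ref{red-fwd-r}, \ref{red-fwd-l} are handled by the identity law: a forwarder denotes the identity morphism, so cutting against it is composition with the identity, which denotes exactly the substitution appearing on the right-hand side.

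The main obstacle is the spawn group \ref{red-spawn}, \ref{red-spawn-r}, \ref{red-spawn-l}, \ref{red-spawn-merge}, since these are the genuinely new reductions. The key auxiliary result I would establish is that a spawn-binding judgment $\spb \from \bunch_1 \tobunch \bunch_2$ induces, by induction on its derivation, a canonical structural morphism $\sigma_\spb : \Sem{\bunch_1}(D) \to \Sem{\bunch_2}(D)$ — the diagonal for \ref{rule:spawn-contract} and the projection for \ref{rule:spawn-weaken} — such that $\Sem{\spw\spb.P}_D = \Sem{P}_D \circ \sigma_\spb$. Because the spawn-binding judgment is not uniquely determined by $\spb$, this requires a coherence statement, namely that $\sigma_\spb$ is independent of the chosen derivation; I expect this to follow from the coherence of the Cartesian comonoid structure on the additive fragment. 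Granting this, \ref{red-spawn-merge} reduces to $\sigma_{\spb_1 \merge \spb_2} = \sigma_{\spb_2} \circ \sigma_{\spb_1}$, which mirrors \ref{rule:spawn-merge} and the reading of $\merge$ as relational composition, and \ref{red-spawn-r}, \ref{red-spawn-l} follow from functoriality of the cut with respect to a structural map on names untouched by the restriction.

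The crux is \ref{red-spawn}: writing $\bunch_P$ for the context of $P$ and $x : A$ for the session it provides, the rule asserts that copying (or dropping) $A$ and then feeding it to the continuation is the same as copying (or dropping) $\bunch_P$ first and applying $\Sem{P}_D$ to each copy, after the indexed renaming recorded by $\spb'$. Semantically this is exactly the \emph{naturality} of the diagonal and of the terminal map in the Cartesian structure of $\Set^{\pset{\Tag}}$: for contraction, $\mathrm{copy}_A \circ \Sem{P}_D = (\Sem{P}_D \times \dots \times \Sem{P}_D) \circ \mathrm{copy}_{\bunch_P}$, and for weakening the two maps into the terminal object agree by uniqueness. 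I expect the delicate part to be matching the explicit definition of the enlarged binding $\spb'$ and of the indexed copies $P^{(i)}$ against the two sides of this naturality square; once that bookkeeping is aligned, the remaining base cases are routine unfoldings of the semantic clauses, completing the induction.
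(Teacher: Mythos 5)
Your proposal is correct and matches the approach the paper takes (and largely leaves implicit, since it omits the proof as "standard"): each reduction rule corresponds to a proof transformation — the same ones exhibited case by case in the subject-reduction proof — and these are validated by the doubly-closed-category semantics via the unit/counit laws for the principal cuts on the multiplicatives, the coproduct property for $\vee$, the identity law for forwarders, and naturality of the diagonal and terminal maps for the spawn group. Your observation that the structural morphism $\sigma_{\spb}$ induced by a spawn binding must be shown independent of the chosen derivation of $\spb \from \bunch_1 \tobunch \bunch_2$ is a point the paper only gestures at (noting that type-annotating bindings would recover unique derivations), so your treatment is, if anything, slightly more careful on that front.
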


\paragraph{Denotational semantics and observational equivalence.}
As already mentioned, we will use denotational semantics to verify observational equivalences of processes.
Formally, we have:
\begin{theorem}
  \label{thm:denot_equiv_sound}
  Given two processes $\bunch \proves P :: z : C$ and $\bunch \proves Q :: z : C$, if $\Sem{P} = \Sem{Q}$, then $\bunch \proves P \obseq Q ::  z : C$.
\end{theorem}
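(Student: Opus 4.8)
The plan is to prove soundness by combining the three facts already in hand---invariance of the denotation under structural congruence (\Cref{lem:denot_congr}) and under reduction (\Cref{lem:denot_redd}), and weak normalization (\Cref{thm:weak_norm})---with a compositionality property of the interpretation. First I would establish that the denotation is \emph{compositional with respect to program contexts}: since $\Sem{\cdot}$ is defined by induction on typing derivations, with each rule's denotation built solely from the denotations of its premises, the assignment $\Sem{\pctxt[-]}$ factors through $\Sem{-}$. Concretely, for every program context $\pctxt$ with hole of type $\bunch \proves [\hole] :: x : A$ and every plugging $\Sigma \proves \pctxt[P] :: z : B$, the value $\Sem{\pctxt[P]}$ depends on $P$ only via $\Sem{P}$; hence $\Sem{P} = \Sem{Q}$ entails $\Sem{\pctxt[P]} = \Sem{\pctxt[Q]}$ for every such context. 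This is the usual adequacy-style congruence argument and is routine given the compositional definition of the model.

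Unfolding \Cref{def:obs_equiv}, it then suffices to show that any two \emph{closed} processes of the same type with equal denotation are barbed equivalent: for a closing context $\pctxt$, compositionality gives $\Sem{\pctxt[P]} = \Sem{\pctxt[Q]}$, and both $\pctxt[P]$ and $\pctxt[Q]$ are closed of the same type. I would therefore define the candidate relation
\[
  \mathcal{R} \is \set{ (R_1, R_2) | R_1,\, R_2 \text{ closed of the same type and } \Sem{R_1} = \Sem{R_2} }
\]
and prove $\mathcal{R} \subseteq {\barbeq}$ by checking that $\mathcal{R}$ is an equivalence relation that is closed under reduction and preserves barbs, so that it is contained in the largest such relation. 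That $\mathcal{R}$ is an equivalence is clear. Reduction-closure is immediate from \Cref{lem:denot_redd} together with type preservation (\Cref{thm:subject_red}): if $(R_1, R_2) \in \mathcal{R}$ and $R_1 \redd R_1'$, then $R_1'$ is still closed of the same type and $\Sem{R_1'} = \Sem{R_1} = \Sem{R_2}$, so $(R_1', R_2) \in \mathcal{R}$, matched by zero steps of $R_2$.

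The substantive step is barb preservation, and here weak normalization does the work. Suppose $(R_1, R_2) \in \mathcal{R}$ and $R_1 \barb{\alpha}$; since $R_1$ is closed, $\chanOf(\alpha)$ is the provided name $z$. By \Cref{thm:weak_norm} reduce $R_2 \reddStar R_2^0 \nredd$, and note $\Sem{R_2^0} = \Sem{R_2} = \Sem{R_1}$ by \Cref{lem:denot_redd}. A \emph{canonical forms} analysis of closed normal forms---obtained by generalizing the progress argument behind \Cref{t:dlfreedom} from $\set{\mOne,\aOne}$ to arbitrary provided types---shows that, up to a leading (barb-transparent) empty spawn, the head action of $R_2^0$ on $z$ is fixed by the top-level connective of its type: an empty output for $\mOne/\aOne$, an output for $*/\land$, an input for $\wand/\to$, a selection for $\lor$, and a forwarder (which carries no barb) for an atomic type. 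For every connective other than $\lor$ the barb $\alpha$ is determined by the type alone, so $R_2^0 \barb{\alpha}$ and we are done with $R_2' = R_2^0$.

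The only case in which the two admissible barbs differ is disjunction, $A = A_1 \lor A_2$, and this is exactly where the denotation is needed---and where I expect the main obstacle to lie. Here I would prove the key correspondence that $R \barb{\selL{z}}$ forces $\Sem{R}$ to factor through the left injection of $\Sem{A_1 \lor A_2}(D) = \Sem{A_1}(D) + \Sem{A_2}(D)$, and symmetrically for $\selR{z}$. This requires tracking how restriction (cut) and spawn (contraction/weakening) act on the $\lor$-summand of the provided channel along a barb derivation, and verifying that neither operation can flip the committed side; the disjointness of the two coproduct injections, evaluated at a set of tags $D$ large enough to make $\Sem{\Sigma}(D)$ inhabited, then yields the conclusion. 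Granting this, from $R_1 \barb{\selL{z}}$ we obtain $\Sem{R_1}$ in the left summand, hence $\Sem{R_2^0}$ in the left summand, hence $R_2^0 \barb{\selL{z}}$, so $R_2 \reddStar R_2^0 \barb{\alpha}$ as required. The effort is thus concentrated in the canonical-forms lemma for closed normal forms and in this summand-tracking lemma for disjunction; the remaining structure of the argument is routine given \Cref{lem:denot_congr,lem:denot_redd,thm:weak_norm}.
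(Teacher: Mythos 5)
Your proposal is correct and follows essentially the same route as the paper: compositionality reduces the problem to closed processes, \Cref{lem:denot_redd} gives reduction-closure of the candidate relation, weak normalization plus a progress/canonical-forms argument (the paper's Lemmas \ref{l:observeWeak} and \ref{t:observability}) produces a barb on the provided channel, and the coproduct structure of $\Sem{A \vee B}$ forces matching selections in the only case where two distinct barbs are possible. The paper's write-up is terser---it treats only the disjunction case explicitly and packages the normal-form analysis into the observability lemma---but the underlying argument is the one you describe.
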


\noindent
In order to prove this theorem we will need the following two lemmas:
\begin{restatable}{lemma}{observeWeak}
  \label{l:observeWeak}
  Suppose given a process $P$ such that $\Gamma \vdash P :: z:C$, where $P \nredd$ and $P$ does not have any barbs on channels from $\Gamma$.
  Then $P$ has a barb on the channel $z$.
\end{restatable}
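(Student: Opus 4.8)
The plan is to prove the statement by induction on the structure of $P$ (equivalently, on its typing derivation), reasoning by cases on the outermost construct. The guiding principle is a \emph{provide/use dichotomy}: under the provide/use reading of the typing rules, any top-level communication prefix of $P$ is either an action on the provided channel $z$ (introduced by a right rule such as \Cref{rule:sep-r}) or an action on a used channel $w \in \Gamma$ (introduced by a left rule such as \Cref{rule:sep-l}). Since by \Cref{fig:barbs} each prefix immediately yields a barb on its subject channel, the hypothesis that $P$ has no barb on any channel of $\Gamma$ repeatedly forces the exposed action to sit on $z$.

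For the prefix cases this is immediate. If the outermost construct is an output, input, close, wait, or a selection/branch, it produces a barb on its subject $w$ by the corresponding rule of \Cref{fig:barbs}. By inspection of the typing rules, $w = z$ exactly when the prefix is typed by a right rule and $w \in \Gamma$ when it is typed by a left rule; the latter would give a barb on a channel of $\Gamma$, contradicting the hypothesis, so $w = z$ and we obtain the required barb. In particular, the constructs that can only be typed by a left rule (\texttt{wait} and \texttt{branch}) give rise to vacuous cases under the hypothesis.

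The crux is the cut case $P \congr \new x.(P_1 \| P_2)$, typed by \Cref{rule:cut} with $P_1$ providing $x$ and $P_2$ using $x$. First I would observe that $P \nredd$ forces each $P_i$ to be irreducible and, by the spawn and forwarder reductions (\Cref{red-spawn-l,red-spawn-r,red-fwd-l,red-fwd-r}), that neither $P_i$ may be a top-level spawn nor a forwarder on $x$. Applying the induction hypothesis to the provider $P_1$---whose context barbs would propagate to $\Gamma$ through \Cref{barb-restr-l} and are therefore excluded---yields an exposed action of $P_1$ on $x$. The key step is then to show that $P_2$ cannot also expose an action on $x$: such an action would be \emph{dual} to that of $P_1$ (input versus output, wait versus close, branch versus selection, as dictated by the shared type $A$), and bringing the two into contact would enable a communication step, contradicting $P \nredd$. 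This is precisely the \emph{progress} property underlying \Cref{t:dlfreedom}, and making it rigorous---in particular, propagating a barb that may be nested under inner cuts up to a genuine redex via structural congruence---is the main obstacle. Once $P_2$ has no barb on $x$ (and none elsewhere in its context, again by \Cref{barb-restr-r}), the induction hypothesis gives $P_2$ a barb on $z$, which survives the restriction since $z \neq x$, so $P$ has a barb on $z$.

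Two cases remain delicate. For a spawn $P = \spawn{\spvar}.P'$ I would rely on the fact that the provided channel is never a restriction of a spawn binding, i.e.\ $z \notin \restrOf(\spvar)$; hence a barb of $P'$ on $z$ always survives the spawn barb rule of \Cref{fig:barbs}, reducing the case to the induction hypothesis on $P'$. The subtlety is that a contraction binding may \emph{hide} barbs of $P'$ on its fresh copies in $\restrOf(\spvar)$; ruling out a stuck $P'$ whose only exposable action sits on such a hidden copy is what forces us to use that the channels of $\Gamma$ carry atomic types (\Cref{def:empty_bunch}), so that $P'$ cannot consume a copy with a communication prefix but may only forward it. The forwarder base case $\fwd[z<-y]$ (typed by \Cref{rule:type-fwd}) is then the residual situation: its used channel $y$ lies in $\Gamma$, while its copycat role provides on $z$, and it must be reconciled with the chosen barb set so that this provision counts as an observation on $z$. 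I expect the localized progress argument in the cut case to be the principal difficulty, with the treatment of barbs hidden behind contraction spawns a close second.
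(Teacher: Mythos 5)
Your overall strategy coincides with the paper's: structural induction driven by the provide/use dichotomy for the prefix cases, and a cut case $\new x.(Q \|_x R)$ resolved by applying the induction hypothesis to the provider to expose an action on $x$ and then excluding a dual action in the user. That exclusion, which you flag as the principal difficulty, is not a fresh obstacle in the paper: it is a one-line appeal to the already-established \Cref{t:progress} through the readiness clause $x \in \an(Q) \inters \an(R) \implies \ready(\new x.(Q \|_x R))$ of \Cref{def:ready}, together with the observation (again via readiness) that neither component may be a top-level spawn or a forwarder on $x$ when $P \nredd$.

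The genuine divergence---and the gap---is in the two cases you call delicate. The paper's opening move, which your outline lacks, is to recast the hypothesis ``no barbs on channels from $\Gamma$'' as $\ident(\Gamma) \inters \an(P) = \emptyset$ using the active-names function of \Cref{d:activeNames}. Under that reading $\an(\fwd[x<-y]) = \set{x,y}$, so a top-level forwarder contradicts the hypothesis outright and the case is vacuous; and since $\dom(\spvar) \subseteq \an(\spw\spvar.P')$ while $\dom(\spvar) \subseteq \ident(\Gamma)$ for any typed binding, a spawn prefix is forced to have $\dom(\spvar) = \emptyset$ and hence $\restrOf(\spvar) = \emptyset$, so the scenario you worry about---a stuck $P'$ whose only exposable action sits on a hidden contraction copy---cannot arise and the case reduces immediately to the induction hypothesis. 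Your proposed patch does not close this: atomicity of $\Gamma$ is a hypothesis of \Cref{t:observability}, not of this lemma, so importing it proves a different statement; and your suggestion that the forwarder's provision should ``count as an observation on $z$'' amounts to changing \Cref{fig:barbs} rather than proving the claim. Adopt the active-names reformulation up front and both cases dissolve; everything else in your outline matches the paper's argument.
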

\begin{lemma}[Observability]\label{t:observability}
    Suppose $\Sigma \proves P :: z : C$ such that $\Sigma$ is an atomic bunch.
    Then there exists a process $Q$ such that $P \reddStar Q$ where $Q \barb{\alpha(z)}$.
\end{lemma}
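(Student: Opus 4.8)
The plan is to reduce $P$ to a normal form and then appeal to \Cref{l:observeWeak}. First I would invoke weak normalization (\Cref{thm:weak_norm}) to obtain a process $Q$ with $P \reddStar Q \nredd$; applying type preservation (\Cref{thm:subject_red}) along each step of this reduction sequence keeps the typing intact, so $\Sigma \proves Q :: z : C$. It then remains only to verify the hypotheses of \Cref{l:observeWeak} for $Q$, namely that $Q$ exhibits no barb on any channel coming from the context $\Sigma$. Granting this, \Cref{l:observeWeak} yields a barb $\alpha$ with $\chanOf(\alpha) = z$, which is exactly the required $Q \barb{\alpha(z)}$.

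The crux is therefore the auxiliary claim that \emph{if $\bunch \proves Q :: z : C$ and $x : \aty \in \bunch$ with $\aty$ atomic, then $Q$ has no barb on $x$}. Since $\Sigma$ is an atomic bunch, every context channel carries an atomic type, so this claim discharges the side condition of \Cref{l:observeWeak}. I would establish it by induction on the derivation of $Q \barb{\alpha}$ with $\chanOf(\alpha) = x$, using inversion on the typing of $Q$. In each base case the top-level prefix whose subject is $x$ is a communication action (send, receive, selection, branch, wait, or close); since $x$ lies in $\bunch$ it is a \emph{used} channel and is distinct from the provided channel $z$ (by the Barendregt convention), so the applicable left rule (\Cref{rule:sep-l,rule:conj-l,rule:wand-l,rule:impl-l,rule:disj-l,rule:emp-l,rule:true-l}) forces $x$ to carry a structured type built from $*$, $\wand$, $\land$, $\to$, $\vee$, $\mOne$, or $\aOne$. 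As atomic types admit no such left rule, this contradicts $x : \aty$. The inductive cases propagate the barb through a restriction or a spawn: for a cut (\Cref{rule:cut}) the name $x$ occurs in exactly one premise's bunch, still atomically typed, so the induction hypothesis applies; for a spawn (\Cref{rule:struct}) the side condition $\chanOf(\alpha) = x \notin \restrOf(\spvar)$ ensures $x$ is untouched by the binding, so it still appears atomically in the continuation's context and the induction hypothesis again applies.

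The main obstacle is this auxiliary claim—and in particular the bookkeeping in its inductive cases, where one must track that a free, atomically typed context name is preserved with the same type as barbs are pushed through cuts and spawns. This makes precise the informal slogan that \emph{the only thing a well-typed process can do with an atomic channel is forward it}: forwarders and spawn bindings never generate barbs, and no communication prefix can take an atomic channel as its subject. Once the claim is in place, the remaining steps follow immediately from weak normalization, type preservation, and \Cref{l:observeWeak}, which together localize all the genuine work into \Cref{l:observeWeak} and the atomicity invariant above.
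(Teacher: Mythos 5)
Your proof is correct and follows essentially the same route as the paper, which simply cites weak normalization, subject reduction, and \Cref{l:observeWeak} in a one-line proof. The auxiliary claim you spell out — that a well-typed process can exhibit no barb on an atomically typed context channel, since atomic types admit no left rules and forwarders generate no barbs — is exactly the detail the paper leaves implicit when discharging the side condition of \Cref{l:observeWeak}, and your induction for it is sound.
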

\begin{proof}
    By weak normalization and subject reduction (\Cref{thm:weak_norm,thm:subject_red}), and \Cref{l:observeWeak}.
\end{proof}
\noindent We can use the observability lemma to show \Cref{thm:denot_equiv_sound}:
\begin{proof}[Proof (of \Cref{thm:denot_equiv_sound})]
  Let $\pctxt$ be a closing program context.
  We are to show $\Sigma \proves \pctxt[P] \barbeq \pctxt[Q] :: z : D$ for some context $\pctxt$.

  Since the denotational semantics is compositional, we have $\Sem{\pctxt[P]} = \Sem{\pctxt[Q]}$.
  Furthermore, the relation $P, Q \mapsto \Sem{\pctxt[P]} = \Sem{\pctxt[Q]}$ is an equivalence relation and is closed under reductions (\Cref{lem:denot_redd}).
  Therefore, it suffices to consider only the main clause of barbed equivalence.
  That is, if $\Sem{P} = \Sem{Q}$ and $P \downarrow_{\alpha(z)}$, then there exists $Q'$ such that $Q \reddStar Q' \barb{\alpha(z)}$.

  For simplicity, let us consider a case where the type $D$, on which we are making observations, is of the form $A \vee B$.
  Then the only possible observations for $P$ and for $Q$ are $\sendInl{z}$ and $\sendInr{z}$.
  Suppose, without loss of generality, that $P \barb{\sendInl{z}}$.
  Then, by \Cref{t:observability}, we have $Q \reddStar Q' \barb{\alpha(z)}$.

  By \Cref{lem:denot_redd}, we have $\Sem{P} = \Sem{Q} = \Sem{Q'}$.
  Because $P$ has a barb $\sendInl{z}$, the interpretation $\Sem{P}$ must be of the form
  $\inl \circ (\dots)$, where $\inl$ is the embedding  $\Sem{A} \to \Sem{A} + \Sem{B}$.
  It the must be the case that $\Sem{Q'}$ is also of the same shape, and, hence, $Q'\barb{\sendInl{z}}$.
\end{proof}

As we have seen, the proof of \Cref{thm:denot_equiv_sound}
relies on \Cref{t:observability},
which in turn relies on weak normalization (\Cref{thm:weak_norm}).
In extensions of \piBI for which weak normalization does not hold, the proof strategy would need to appeal to observability based on other techniques, such as logical relations.
We elaborate on this in \Cref{{sec:rel_work}}.

\paragraph{Equivalence induced from the translation of the \alcalc.}
We close this section by demonstrating an application of denotational semantics in the context of the correctness of the translation from \Cref{sec:translation}.
Specifically, we show that the relation $\sublift$ from \Cref{sec:translation} decomposes as a translation $\bTrans_z(-)$ and an observational equivalence $\obseq$:
\begin{theorem}
 If $\bunch \proves P \sublift M :: z : A$, then $\Sem{P} = \Sem{\Trans_z(M)}$, and, consequently, $P \obseq \Trans_z(M)$.
 \label{thm:sublift_denot_eq}
\end{theorem}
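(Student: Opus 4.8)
The ``consequently'' part is immediate: once we establish $\Sem{P} = \Sem{\Trans_z(M)}$, observational equivalence $P \obseq \Trans_z(M)$ follows directly from \Cref{thm:denot_equiv_sound}, since both processes carry the typing $\bunch \proves \cdot :: z : A$ (for $\Trans_z(M)$ by type-preservation of the translation). Hence the entire task reduces to the denotational equality $\Sem{P} = \Sem{\Trans_z(M)}$. The plan is to route both sides through the \emph{standard} interpretation of the \alcalc in the doubly closed category $\Set^{\pset{\Tag}}$ underlying our semantics, and then reconcile them by induction following the two components of the substitution-lifting definition: the $n$ nested cuts on the $\Trans_{x_i}(N_i)$, and the $s$ outer structural (weakening/contraction) spawns $\spw{\spvar_i}$.

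I would first isolate two lemmas. (a) \emph{The translation preserves denotation}: for every typed term $\bunch \vdash M : A$, the process denotation $\Sem{\Trans_z(M)}$ coincides with the standard categorical interpretation of the derivation $\bunch \vdash M : A$ in $\Set^{\pset{\Tag}}$ (both being maps $\Sem{\bunch}(D) \to \Sem{A}(D)$). This is proved by induction on the \alcalc typing derivation, checking each clause of \Cref{fig:translation} against the interpretation of the corresponding natural-deduction rule: the identity/forwarder clause is the identity morphism; each introduction rule is interpreted via the Cartesian or monoidal closed adjunction, matching the right rule used in the translation; and each elimination rule is evaluation precomposed with a pairing, which---by compositionality of $\Sem{-}$, with \Cref{rule:cut} interpreted as categorical composition---matches the translation's left-rule-plus-cut. (b) \emph{Cut realises substitution and spawns realise the structural substitutions}: concretely, $\Sem{\new x.(\Trans_x(N) \| \Trans_z(M'))} = \Sem{\Trans_z(M'\subst{x->N})}$, and a weakening (resp.\ contraction) spawn $\spw{\spvar_i}$, interpreted via \Cref{rule:spawn-weaken} (resp.\ \Cref{rule:spawn-contract}), realises precisely the term-level substitution $\tilde{\spvar_i}$ as the terminal projection (resp.\ diagonal) morphism. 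The first half of (b) is exactly the categorical substitution lemma (``substitution $=$ composition'') transported along (a).

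With these in hand the proof is a layer-by-layer unfolding. Using \Cref{lem:denot_congr} to pass to the $\congr$-representative of $P$ given by the substitution-lifting definition, I compute $\Sem{P}$ compositionally: the outer spawns contribute the contraction/weakening morphisms of (b), and the nested cuts contribute the composition of the $\Sem{\Trans_{x_i}(N_i)}$ with $\Sem{\Trans_z(M')}$. By repeated application of the substitution half of (b) (peeling off one cut at a time) and the structural half of (b) (peeling off one spawn at a time), this composite equals the standard interpretation of $M = M'\subst{x_1->N_1,,x_n->N_n}\map{\tilde{\spvar_1},\ldots,\tilde{\spvar_s}}$; by lemma (a) this interpretation is in turn $\Sem{\Trans_z(M)}$, closing the argument.

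The main obstacle I anticipate is lemma (b), and specifically the semantic substitution equality $\Sem{\new x.(\Trans_x(N) \| \Trans_z(M'))} = \Sem{\Trans_z(M'\subst{x->N})}$. The delicate point is that this equality does \emph{not} arise operationally---the cut on $x$ is typically guarded by an input prefix in $\Trans_z(M')$ (e.g.\ when $M'$ is an abstraction), so \Cref{lem:denot_redd} does not apply directly---and must be verified in the model. One must track how the provenance index $D \in \pset{\Tag}$ is threaded across the cut, carefully distinguishing the multiplicative splitting $D = D_1 \dunion D_2$ from the additive sharing of $D$, and check that it matches how the substituted variable's tags are distributed in $\Sem{\Trans_z(M'\subst{x->N})}$. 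Reconciling this splitting with the duplication and erasure performed by the contraction and weakening spawns---i.e.\ showing the diagonal and terminal morphisms commute appropriately with cut composition---is where the bookkeeping is heaviest.
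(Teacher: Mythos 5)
Your proposal is correct and takes essentially the same route as the paper: the paper's (very terse) proof consists precisely of establishing $\Sem{\Trans_z(M\subst{{x}->{N}})} = \Sem{\new x.(\Trans_x(N) \| \Trans_z(M))}$ by induction on the typing derivation, generalized to multiple substitutions, which is exactly your lemma (b) combined with your layer-by-layer peeling of cuts and spawns. Your additional detour through the standard interpretation of the \alcalc in $\Set^{\pset{\Tag}}$ (lemma (a)) is scaffolding the paper leaves implicit, and your identification of the guarded cut---where the equality cannot be obtained operationally via \Cref{lem:denot_redd} and must be verified in the model---is exactly the right diagnosis of where the work lies.
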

\begin{proof}
  Essentially, we need to show that for any $\bctxt(x : A) \proves M : B$ and $\bunch \proves N : A$, we have
  \[
    \Sem{\Trans_z(M\subst{{x}->{N}})} = \Sem{\new x.(\Trans_x(N) \| \Trans_z(M))}.
  \]
  We do this by induction on the typing derivation, generalizing to multiple substitutions.
\end{proof}
Recall that in \Cref{sec:translation} we could not use the translation function $\bTrans_z(-)$ itself to establish a simulation; instead we had to take a coarser relation $\sublift$.
\Cref{{thm:sublift_denot_eq}} shows that this does not introduce any observable difference.

 \section{Related Work}
\label{sec:rel_work}
We have already discussed some of the most closely related works, and we have given some comparisons with previous works by means of examples in \Cref{sec:examples-comparisons}.
Here we discuss other related literature along several dimensions.

\paragraph{BI and process calculi.}
To our knowledge, the work of \citet{anderson.pym:2016} is the only prior work that connects BI with process calculi. Their technical approach and results are very different from ours. They introduce a process calculus (a synchronous CCS) with an explicit representation of (bunched) resources, in which processes and resources evolve hand-in-hand.
Rather than a typed framework for processes or an interpretation in the style of propositions-as-types, they use a logic related to BI to specify rich properties of processes, in the style of Hennessy-Milner logic.

\paragraph{BI and Curry-Howard correspondence}
The works of \citet{ohearn:2003} and \citet{pym:2002} are, to our knowledge, the only prior investigations into (non-concurrent) Curry-Howard correspondences based on BI.
These works were later extended to cover polymorphism~\cite{collinson.pym.robinson:2008} and store with strong update~\cite{berdine.ohearn:2006}.
An extension $\lambda_{\mathsf{sep}}$ of an affine version of the \alcalc with a more fine-grained notion of separation was studied by \citet{atkey:2004,atkey:2006}.

\paragraph{Previous works on propositions-as-sessions.}
Starting with the works by~\citet{caires.pfenning:2010} and \citet{wadler:2012}, the line of work on propositions-as-sessions has exclusively relied on (variants of) LL, which is incomparable to BI; this immediately separates those prior works from our novel approach based on BI.

Our work adapts to the BI setting  key design principles in~\cite{caires.pfenning:2010,wadler:2012}: the interpretation of multiplicative conjunction as output, linear implication as input, and the interpretation of `cut' as the coalescing of restriction and parallel composition.
Those works use input-guarded replication to accommodate non-linear sessions, typed with the modality $!A$; in contrast, \piBI handles structural principles directly at the process level with the new spawn prefix.

Our adaptation is novel and non-trivial, and cannot be derived from prior interpretations based on LL.
Still, certain aspects of \piBI bear high-level similarities with elements from those interpretations.
The semantics of our spawn prefix borrows inspiration from the treatment of aliases in Pruiksma and Pfenning's interpretation of asynchronous binary sessions, based on adjoint logic, in which structural rules are controlled via modalities~\cite{pruiksma.pfenning:2021}.
Thanks to spawn binders (\Cref{def:spawn-env}), our semantics explicitly handles duplication and disposal of services; this is similar in spirit to the syntax and semantics of replicated servers in HCP, an interpretation based on a hypersequent presentation of classical LL~\cite{kokke.montesi.peressotti:2019}.
The behavioral theory of HCP consists of a labeled transition semantics for processes, a denotational semantics for processes, and a full abstraction result.
The work of~\citet{qian.kavvos.birkedal:2021} extends linear logic with coexponentials with the aim of capturing client-server interactions not expressible in preceding interpretations of linear logic.
Precise comparisons between the expressivity of such interactions and the connection patterns enabled by our spawn prefix remains to be determined. Concerning failures, as discussed in \Cref{sec:examples-comparisons}, the work of \citet{fowler.etal:2019} develops a linear functional language with asynchronous communication and support for failure handling, closely related to Wadler's CP.

\paragraph{Observational equivalence}
Observational equivalence compares the behaviour of two processes
in \emph{every} (well-typed) program context.
This universal quantification makes  direct proofs of equivalence
very hard and non-compositional.
Observational equivalence is therefore usually established
using more compositional methods that do not involve reasoning
about a program in a context.
Examples of these methods are bisimulations and logical relations; in the session-typed setting, such methods have been addressed in~\cite{kouzapas.yoshida.honda:2011} and \cite{caires.perez.pfenning.toninho:2013,perez.etal:2014,DBLP:conf/esop/Atkey17,derakhshan.balzer.jia:2021}, respectively.
In \cref{sec:denot} we followed an approach based on denotational semantics, exploiting a canonical construction. 
Our denotational semantics also serves to
elucidate the difference between
the additive and multiplicative types.
Our proof of adequacy of the denotational semantics
for proving observational equivalence (\cref{thm:denot_equiv_sound})
relies on weak normalization (\cref{thm:weak_norm}).
In extensions of \piBI for which weak normalization does not hold,
our proof strategy would need to be revised.
Recent work on denotational semantics and logical relations for session-typed languages~\cite{derakhshan.balzer.jia:2021,Kavanagh22}
could provide the basis for handling such extensions.

 \section{Concluding Remarks and Future Perspectives}
\label{sec:conclusion}
In this paper we present a fresh look at logical foundations for message-passing concurrency.
We have cast the essential principles of propositions-as-sessions, initially developed upon LL, in the unexplored context of BI.
We introduced the typed process calculus \piBI, explored its operational and type-theoretical contents, illustrated its expressiveness, and established the meta-theoretical framework needed to study the behavioral consequences of the BI typing discipline for concurrency.

Our results unlock a number of enticing future directions.
First, because \piBI targets binary session types (between two parties) with synchronous communication, it would be interesting to study variants of \piBI with multiparty, asynchronous communication~\cite{honda.yoshida.carbone:2008,scalas.yoshida:2019}.
An asynchronous version of \piBI could be defined by following the work of~\citet{deyoung.etal:2012} to maximize concurrency.
Also, the works~\cite{carbone.etal:2016,caires.perez:2016} already provide insights on how to exploit \piBI to analyze  multiparty protocols.

Second, variations and extensions of BI could provide new insights.
For example, the $\bang{A}$ modality is not incompatible with BI,
and can be added to obtain a type $\bang{A} \simeq A * \dots * A $.
The intuitive interpretation is that the provider of~$\bang{A}$
can create an instance of~$A$ from scratch,
thus not sharing its origin with the other instances.
This new type would seem incomparable with the corresponding modality of LL,
which makes it interesting to study what  interpretations could admit.

\begin{acks}
  The authors would like to thank
  Revantha Ramanayake for providing comments on an early version of the paper
  and
  the anonymous referees for their valuable comments and helpful suggestions.
  The authors also would like to thank Vasilios Andrikopoulos for helpful discussions.

This work was supported by the
  \grantsponsor{NWO}{Dutch Research Council (NWO)}{}
  under project No.~\grantnum{NWO}{016.Vidi.189.046}
  (Unifying Correctness for Communicating Software);
  and by a
  \grantsponsor{ERC}{European Research Council}{https://erc.europa.eu/} (ERC)
  Consolidator Grant for the project ``PERSIST'' under the
  European Union's Horizon 2020 research and innovation programme
  (grant agreement No.~\grantnum{ERC}{101003349}).
\end{acks}

\setlabel{LAST}\label{paper-last-page}
\clearpage

\appendix

\section{Meta-theoretical properties}
\label{app:meta_props}

In this section we first introduce auxiliary lemmas about (typed) spawn bindings and free/bound names of processes.
We then present omitted proofs of subject congruence and subject reduction in \Cref{{app:subject_red:proof}}.
After that we prove deadlock-freedom (\Cref{t:dlfreedom}) in \Cref{app:dlfree:proof}, introducing a \emph{progress} lemma.
Finally, in \Cref{{appendix:sec:wn}} we detail our proof of weak normalization.

\subsection{Spawn bindings}
We will need the following properties for proving subject reduction, all shown by induction on spawn binding typing.
\begin{lemma}
  \label{lem:typedS_ctx}
  If $\typedS{\spvar}{\bunch_2}{\bunch_1}$, then $\typedS{\spvar}{\bctxt(\bunch_2)}{\bctxt(\bunch_1)}$ for any bunched context $\bctxt$.
\end{lemma}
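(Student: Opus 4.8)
The plan is to argue by induction on the derivation of the spawn binding typing judgment. Recall that this judgment is written $\spvar \from \bunch \tobunch \bunch'$ and is generated by the three rules \ref{rule:spawn-contract}, \ref{rule:spawn-weaken}, and \ref{rule:spawn-merge}; the claim is exactly that this relation between a source bunch and a target bunch is preserved when both are placed inside a common bunched context $\bctxt$. The one structural fact I would establish first, by a routine induction on the grammar of bunched contexts, is that bunched contexts are closed under \emph{plugging}: for any $\bctxt$ and $\bctxt'$, the assignment $\bunch \mapsto \bctxt(\bctxt'(\bunch))$ is again a single-hole bunched context, which I write $\bctxt \circ \bctxt'$. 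This is what lets the base cases go through without introducing new rule shapes.

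For the base cases, the key observation is that \ref{rule:spawn-contract} and \ref{rule:spawn-weaken} are already stated relative to an \emph{arbitrary} bunched context, so an additional outer context can simply be absorbed. Concretely, if $\spvar \from \bunch \tobunch \bunch'$ is an instance of \ref{rule:spawn-contract}, then $\bunch = \Gamma(\bunchB)$ and $\bunch' = \Gamma(\idx{\bunchB}{1} \band \dots \band \idx{\bunchB}{n})$ for some context $\Gamma$; since $\bctxt(\Gamma(\hole)) = (\bctxt \circ \Gamma)(\hole)$ is itself a bunched context, re-applying \ref{rule:spawn-contract} with the context $\bctxt \circ \Gamma$ (and the very same binding $\spvar$) yields $\spvar \from \bctxt(\bunch) \tobunch \bctxt(\bunch')$. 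The weakening case is identical, absorbing $\bctxt$ into the context of \ref{rule:spawn-weaken}.

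The inductive case is \ref{rule:spawn-merge}: here $\spvar = \spvar_1 \merge \spvar_2$ with premises $\spvar_1 \from \bunch_0 \tobunch \bunch_1$ and $\spvar_2 \from \bunch_1 \tobunch \bunch_2$. Applying the induction hypothesis to each premise gives $\spvar_1 \from \bctxt(\bunch_0) \tobunch \bctxt(\bunch_1)$ and $\spvar_2 \from \bctxt(\bunch_1) \tobunch \bctxt(\bunch_2)$, and one application of \ref{rule:spawn-merge} recombines them into $(\spvar_1 \merge \spvar_2) \from \bctxt(\bunch_0) \tobunch \bctxt(\bunch_2)$, as required. The only point that needs care --- and the closest thing to an obstacle --- is the freshness side condition hidden in the indexed renamings $\idx{\bunchB}{i}$ of the contraction rule (\Cref{def:idx-bunch-renaming}): the fresh copies must now also avoid the names occurring in the enclosing context $\bctxt$. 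This is harmless under the Barendregt variable convention adopted in \Cref{sec:procs}; alternatively, since the judgment is invariant under \pre\alpha-renaming of these fresh copies, we may always choose them outside $\bctxt$. No genuinely new reasoning is required beyond this bookkeeping.
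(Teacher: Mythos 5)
Your proposal is correct and follows exactly the route the paper takes: the paper proves this lemma (without spelling out the details) by induction on the derivation of the spawn binding typing judgment, absorbing the outer context $\bctxt$ into the ambient bunched context of \ref{rule:spawn-contract} and \ref{rule:spawn-weaken} in the base cases and recombining the two induction hypotheses via \ref{rule:spawn-merge} in the inductive case. Your additional remarks on context composition and on the freshness of the indexed renamings are sensible bookkeeping that the paper leaves implicit under the variable convention.
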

\begin{lemma}
  \label{lem:typedS_cut_nil}
  Suppose that $
    \spvar \from \bctxt_1(x:A) \tobunch \bunch_2
  $ and $x \notin \dom(\spvar)$.
  Then $\bunch_2 = \bctxt_2(x:A)$ for some $\bctxt_2$.
  Furthermore, for any~$\bunch$ we have
  $
    \spvar \from \bctxt_1(\bunch) \tobunch \bctxt_2(\bunch).
  $
\end{lemma}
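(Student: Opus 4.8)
The plan is to proceed by induction on the derivation of the spawn binding typing judgment $\spvar \from \bctxt_1(x:A) \tobunch \bunch_2$, following the three rules that generate $\tobunch$ (namely \ref{rule:spawn-contract}, \ref{rule:spawn-weaken}, and \ref{rule:spawn-merge}). In each case I would use the hypothesis $x \notin \dom(\spvar)$ to locate where the assignment $x:A$ survives in $\bunch_2$, thereby exhibiting the witnessing context $\bctxt_2$, and simultaneously track how an arbitrary bunch $\bunch$ plugged in place of $x:A$ is transported by $\spvar$.

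First I would handle the two base cases. For \ref{rule:spawn-weaken}, the binding is $\map{x->\emptyset \mid x \in \bunch_1'}$ acting on some $\bctxt(\bunch_1' \band \bunch_2')$ and producing $\bctxt(\bunch_2')$; since $x \notin \dom(\spvar)$, the assignment $x:A$ cannot lie in the weakened part $\bunch_1'$, so it must persist into $\bunch_2'$ or into the surrounding context, giving $\bctxt_2$ directly. For \ref{rule:spawn-contract}, the binding is $\map{y->\set{y_1,\dots,y_n} \mid y \in \bunch}$ on some $\Gamma(\bunch)$; again $x \notin \dom(\spvar)$ forces $x:A$ to sit outside the contracted bunch $\bunch$, hence in $\Gamma(\hole)$, so it is copied identically and $\bctxt_2$ is read off from the image context. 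In both base cases, replacing $x:A$ by $\bunch$ amounts to replacing a leaf in a part of the tree untouched by $\spvar$, so the transport claim $\spvar \from \bctxt_1(\bunch) \tobunch \bctxt_2(\bunch)$ follows because the same rule applies verbatim with the leaf swapped for $\bunch$.

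The inductive case is \ref{rule:spawn-merge}, where $\spvar = \spvar_1 \merge \spvar_2$ factors through an intermediate bunch $\bunch_1$, with $\spvar_1 \from \bctxt_1(x:A) \tobunch \bunch_1$ and $\spvar_2 \from \bunch_1 \tobunch \bunch_2$. Here the main obstacle is bookkeeping: I must first argue that $x \notin \dom(\spvar_1)$ and $x \notin \dom(\spvar_2)$ follow from $x \notin \dom(\spvar_1 \merge \spvar_2)$ together with the structure of $\merge$ (using that $x:A$ is preserved, so $x$ is never consumed by a weakening or renamed by a contraction at either stage). Granting this, the induction hypothesis applied to $\spvar_1$ yields an intermediate context $\bctxt'$ with $\bunch_1 = \bctxt'(x:A)$ and $\spvar_1 \from \bctxt_1(\bunch) \tobunch \bctxt'(\bunch)$; applying it again to $\spvar_2$ (on $\bctxt'(x:A)$) yields $\bctxt_2$ with $\bunch_2 = \bctxt_2(x:A)$ and $\spvar_2 \from \bctxt'(\bunch) \tobunch \bctxt_2(\bunch)$. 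Re-merging via \ref{rule:spawn-merge} then gives $\spvar \from \bctxt_1(\bunch) \tobunch \bctxt_2(\bunch)$, closing the case.

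I expect the delicate point to be the claim that $x \notin \dom(\spvar)$ propagates to both factors of a merge, since the definition of $\merge$ (\Cref{def:merge}) mixes domains through $\spvar_2[\spvar_1(x)]$ and the side condition $x \notin \restrOf(\spvar_1)$. The cleanest route is probably to prove a small auxiliary fact first: if $x:A$ is a leaf preserved by $\spvar$ (i.e. $x \notin \dom(\spvar) \cup \restrOf(\spvar)$ on the relevant branch), then in any factorization $\spvar = \spvar_1 \merge \spvar_2$ the name $x$ is preserved by each $\spvar_i$ as well. This isolates the combinatorial content of $\merge$ into one reusable observation and keeps the main induction routine.
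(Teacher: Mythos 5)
Your proposal is correct and follows essentially the same route as the paper's proof: induction on the derivation of the spawn binding typing, with the base cases observing that $x:A$ lies outside the weakened/contracted sub-bunch and can be swapped for an arbitrary bunch verbatim, and the merge case applying the induction hypothesis to each factor and re-merging. The "delicate point" you flag is exactly the one the paper handles (it first derives $x \notin \dom(\spvar_1)$ from the definition of $\merge$, then uses $x \notin \restrOf(\spvar_1)$ to conclude $x \notin \dom(\spvar_2)$); your suggested auxiliary fact just packages that same inline argument.
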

\begin{proof}
  We proceed by induction on $\typedS{\spvar}{\bunch_2}{\bctxt_1(x:A)}$.
  \begin{induction}
    \step[Case \ref{rule:spawn-weaken}]
    If $
      \map{x -> \emptyset | {x \in \ident(\bunch)}}
        \from \bctxt(\bunch; \bunch') \tobunch \bctxt(\bunch')
    $, then, from the assumption that $x \notin \dom(\spvar)$,
    we know that~$x$ does not occur~$\bunch$.
    That means that~$x$ it is either part of~$\bunch$ or~$\bctxt$.

    In either case, we can freely replace~$x:A$
    with an arbitrary bunch.

    \step[Case \ref{rule:spawn-contract}]
    Similar to the previous case.

    \step[Case \ref{rule:spawn-merge}]
    Suppose we have
    $
      (\spvar_1 \merge \spvar_2) \from
        \bctxt_0(x:A) \tobunch \bunch_2
    $
    with $
      \spvar_1 \from \bctxt_0(x:A) \tobunch \bunch_1
    $
    for some intermediate bunch~$\bunch_1$.

    Since $x \notin \dom(\spvar_1 \merge \spvar_2)$,
    we know that $x \notin \dom(\spvar_1)$.
    Hence,
    ${\bunch_1 = \bctxt_1(x:A)}$
    by induction hypothesis,
    and we have $\spvar_1\from \bctxt_1(\bunch) \tobunch \bctxt_2(\bunch)$
    for any bunch~$\bunch$.

    Furthermore, $x \notin \restrOf(\spvar_1)$
    (otherwise we would not be able to replace~$x : A$
    with an arbitrary bunch~$\bunch$).
    Hence, $x \notin \dom(\spvar_2)$,
    and by induction hypothesis we have
    $\bunch_2 = \bctxt_2(x:A)$ for some $\bctxt_2$, and
    $
      \spvar_2 \from \bctxt_1(\bunch) \tobunch \bctxt_2(\bunch)
    $ for any bunch~$\bunch$.
    The desired result then follows by using \ref{rule:spawn-merge} again.
  \qedhere
  \end{induction}
\end{proof}
\begin{lemma}
  \label{lem:typedS_cut}
  Suppose that
    $\spvar \from \bunch_1 \tobunch \bunch_2$ and
    $\spvar(x) = \set{x_1, \dots, x_n}$.
  Then
    $\bunch_1 = \bctxt_1(x:A)$ and
    $\bunch_2 = \bctxt_2(x_1 : A \mid \dots \mid x_n : A)$,
  for some~$\bctxt_1$ and~$\bctxt_2$.
Furthermore, for any bunch~$\bunch$ we have:
  \[
    (\spvar \setminus \set{x})
    \union
    \map{ y -> \set{y_1, \dots, y_n} | {y \in \ident(\bunch)} }
    \from
      \bctxt_1(\bunch).
    \tobunch
      \bctxt_2\bigl(\idx{\bunch}{1} \mid \dots \mid \idx{\bunch}{n}\bigr)
  \]
\end{lemma}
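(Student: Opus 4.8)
The plan is to prove the statement by induction on the derivation of the spawn binding typing judgment $\spvar \from \bunch_1 \tobunch \bunch_2$, closely mirroring the proof of \Cref{lem:typedS_cut_nil}; throughout, $\spvar(x) = \set{x_1,\dots,x_n}$ is assumed non-empty (so $n \geq 1$). The \ref{rule:spawn-weaken} case is vacuous: a weakening binding maps every name in its domain to $\emptyset$, so $\spvar(x) = \emptyset$ would contradict the hypothesis. The \ref{rule:spawn-contract} case is direct: here $\spvar = \map{z -> \set{z_1,\dots,z_n} | z \in \bunchB}$ with $\bunch_1 = \bctxt(\bunchB)$ and $\bunch_2 = \bctxt(\idx{\bunchB}{1} \band \dots \band \idx{\bunchB}{n})$ for some subbunch $\bunchB$. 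Since $x \in \dom(\spvar)$, the name $x$ occurs as a leaf of $\bunchB$, say $\bunchB = \bctxt'(x:A)$, which exhibits $\bunch_1 = \bctxt_1(x:A)$ and $\bunch_2 = \bctxt_2(x_1:A \mid \dots \mid x_n:A)$, the $i$-th hole sitting inside $\idx{\bunchB}{i}$. For the substitution property, replacing the leaf $x:A$ by an arbitrary bunch $\bunch$ turns the contracted subbunch into $\bctxt'(\bunch)$; the contraction binding over $\bctxt(\bctxt'(\bunch))$ is exactly $(\spvar \setminus \set{x}) \union \map{y -> \set{y_1,\dots,y_n} | y \in \ident(\bunch)}$, and \ref{rule:spawn-contract} delivers the target $\bctxt_2(\idx{\bunch}{1} \mid \dots \mid \idx{\bunch}{n})$.

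The work concentrates in the \ref{rule:spawn-merge} case, where $\spvar = \spvar_1 \merge \spvar_2$ with $\spvar_1 \from \bunch_1 \tobunch \bunchB$ and $\spvar_2 \from \bunchB \tobunch \bunch_2$; I would split on how $x$ is routed through the merge. Consider first $x \notin \dom(\spvar_1)$. By the merge definition this forces $x \notin \restrOf(\spvar_1)$ and $\spvar_2(x) = \set{x_1,\dots,x_n}$. The induction hypothesis on $\spvar_2$ gives $\bunchB = \bctxt_m(x:A)$, $\bunch_2 = \bctxt_2(x_1:A \mid \dots \mid x_n:A)$, and the substitution property for $\spvar_2$. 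Using the auxiliary fact that a name outside $\dom(\spvar_1) \union \restrOf(\spvar_1)$ occurs as a leaf of the target iff it does so in the source, we get $\bunch_1 = \bctxt_1(x:A)$, and then \Cref{lem:typedS_cut_nil} applied to $\spvar_1$ yields $\spvar_1 \from \bctxt_1(\bunch) \tobunch \bctxt_m(\bunch)$ for any $\bunch$. Recomposing these two facts via \ref{rule:spawn-merge} and checking that the merged binding coincides with $(\spvar \setminus \set{x}) \union \map{y -> \set{y_1,\dots,y_n} | y \in \ident(\bunch)}$ — which holds because the fresh names of $\bunch$ are untouched by $\spvar_1$ — closes this subcase.

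The delicate subcase is $x \in \dom(\spvar_1)$, say $\spvar_1(x) = \set{u_1,\dots,u_k}$, so that $\set{x_1,\dots,x_n} = \spvar_2[\set{u_1,\dots,u_k}] \union (\set{u_1,\dots,u_k} \setminus \dom(\spvar_2))$. The induction hypothesis on $\spvar_1$ produces $k$ indexed copies $\idx{\bunch}{1},\dots,\idx{\bunch}{k}$ of the substituted bunch at positions $u_1,\dots,u_k$, and $\spvar_2$ then acts on each such position independently — splitting those $u_j \in \dom(\spvar_2)$ further and passing through (via \Cref{lem:typedS_cut_nil}) those $u_j \notin \dom(\spvar_2)$. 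The obstacle here is purely one of bookkeeping: I would apply the substitution property at each of the $k$ holes in turn (a multi-hole use of the two lemmas), keeping the indices introduced by $\spvar_1$ disjoint from those introduced by $\spvar_2$, and then reconcile the composite two-level indexing with the flat indexing $1,\dots,n$ on the right-hand side, up to the implicit renaming built into $\merge$. Making precise that ``contraction into $k$ copies followed by contraction/weakening into $n$'' agrees with a single contraction into $n$ under $\merge$ is the least routine part of the argument, and it is where I expect the proof to require the most care.
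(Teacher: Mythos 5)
Your proposal takes essentially the same route as the paper: the paper proves this lemma only with the remark ``Similar to the previous lemma,'' i.e., by the induction on the spawn-binding typing derivation with cases for \Cref{rule:spawn-weaken}, \Cref{rule:spawn-contract}, and \Cref{rule:spawn-merge} that you spell out, and your identification of the merge subcase with $x \in \dom(\spvar_1)$ (composite two-level indexing reconciled with the flat indexing of $\merge$, via a multi-hole strengthening of the induction hypothesis) as the genuinely delicate step is accurate. One caveat: you declare the \Cref{rule:spawn-weaken} case vacuous by assuming $n \geq 1$, but the lemma is invoked in the subject-reduction case for \cref{red-spawn}, which also covers $\spvar(x) = \emptyset$ (weakening, $n = 0$, cf.\ \Cref{ex:weakening}); there the conclusion must be read with a zero-hole context $\bctxt_2$, the weaken case is no longer vacuous (though it is immediate from \Cref{rule:spawn-weaken} itself), and the same $n=0$ possibility reappears inside your merge analysis when $\spvar_1(x)$ or the images under $\spvar_2$ are empty. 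With that case restored and the multi-hole bookkeeping carried out as you sketch, the argument is sound.
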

\begin{proof}
  Similar to the previous lemma.
\end{proof}

\subsection{Names and substitutions}
\begin{lemma}\label{l:fn_ident}
If $\Delta \proves P :: x :C$, then $\fn(P) = \fn(\Delta) \cup \{ x \}$.
\end{lemma}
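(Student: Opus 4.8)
The statement $\fn(P) = \fn(\Delta) \cup \{x\}$ for a typed process $\Delta \proves P :: x : C$ is an exact characterization linking the free names of a process to the identifiers appearing in its typing context plus the provided channel. The plan is to prove it by straightforward induction on the typing derivation $\Delta \proves P :: x : C$. For each typing rule, I would check that the free names of the conclusion's process agree with $\fn(\Delta) \cup \{x\}$, using the inductive hypothesis on the premises. Note that here $\fn(\Delta)$ should be read as $\ident(\Delta)$, the set of names occurring in the bunch, and I would use freely that $\ident$ respects the bunch operations $\bsep$ and $\band$ (each just takes the union of the identifiers of the two sub-bunches) and that $\ident(\emptybunch) = \emptyset$ for unit bunches.

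First I would handle the axiom-like base cases. For \ref{rule:type-fwd}, $y : A \proves \fwd[x<-y] :: x : A$, we have $\fn(\fwd[x<-y]) = \{x,y\}$ which matches $\{y\} \cup \{x\}$. For \ref{rule:emp-r} and \ref{rule:true-r}, the process is $\out x<>$ with an empty context, so $\fn = \{x\}$ matches $\emptyset \cup \{x\}$. Then I would proceed through the inductive cases: for the input/output rules (\ref{rule:sep-r}, \ref{rule:sep-l}, \ref{rule:wand-r}, \ref{rule:wand-l}, and their additive counterparts), I apply the inductive hypotheses to the premises and track how the bound name $y$ is introduced and removed. For example, in \ref{rule:sep-r} we form $\out x[y].(P_1 \| P_2)$ where $P_1$ provides $A$ on $y$ and $P_2$ provides $B$ on $x$; the name $y$ is bound, so $\fn(\out x[y].(P_1\|P_2)) = (\fn(P_1)\setminus\{y\}) \cup \fn(P_2)$, which by the IH equals $\ident(\bunch_1) \cup (\ident(\bunch_2)\cup\{x\}) = \ident(\bunch_1\bsep\bunch_2)\cup\{x\}$. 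The \ref{rule:cut} case is similarly routine: in $\new x'.(P \| Q)$ the cut channel $x'$ is bound, and combining the IHs for $P$ and $Q$ cancels it out, yielding $\ident(\bctxt(\bunch))$ together with the overall provided name. The \ref{rule:bunch-equiv} case is immediate since $\buncheq$ preserves $\ident$.

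The case requiring the most care is \ref{rule:struct}, the spawn rule, since this is where the free-name bookkeeping departs from the standard $\pi$-calculus. Here $\bunch_1 \proves \spw\senv.P :: z : C$ is derived from $\bunch_2 \proves P :: z : C$ with $\senv \from \bunch_1 \tobunch \bunch_2$. By the IH, $\fn(P) = \ident(\bunch_2)\cup\{z\}$, and by the definition of free names for spawn, $\fn(\spw\senv.P) = (\fn(P)\setminus\restrOf(\senv))\cup\dom(\senv)$. The obligation is to show this equals $\ident(\bunch_1)\cup\{z\}$. I expect this to be the main obstacle, and I would discharge it by an auxiliary lemma (proved by induction on the spawn-binding typing $\senv \from \bunch_1 \tobunch \bunch_2$, using the rules \ref{rule:spawn-weaken}, \ref{rule:spawn-contract}, \ref{rule:spawn-merge}) establishing that whenever $\senv \from \bunch_1 \tobunch \bunch_2$ we have $\ident(\bunch_2) = (\ident(\bunch_1)\setminus\dom(\senv))\cup\restrOf(\senv)$ and $\dom(\senv)\subseteq\ident(\bunch_1)$. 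With this, the set equation for the spawn case follows by elementary set manipulation, noting that $z\notin\dom(\senv)\cup\restrOf(\senv)$ since $z$ is the provided channel and the spawn binding only touches names in the context. The subtlety is that the transition $\bunch_1 \tobunch \bunch_2$ simultaneously removes the domain names (replaced by restriction names) and introduces the restriction names, so the spawn-binding lemma must track both effects precisely; the merge rule in particular requires carefully composing the two transformations, which is exactly the content captured by \Cref{lem:typedS_cut,lem:typedS_cut_nil}.
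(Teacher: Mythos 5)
The paper states this lemma without proof, and your structural induction on the typing derivation is exactly the routine argument it leaves implicit; all the cases you sketch, including the auxiliary lemma $\ident(\bunch_2) = (\ident(\bunch_1)\setminus\dom(\senv))\cup\restrOf(\senv)$ with $\dom(\senv)\subseteq\ident(\bunch_1)$ for typed spawn bindings, check out. The only point worth making explicit is that both the spawn case of the main induction and the \ref{rule:spawn-merge} case of the auxiliary lemma additionally rely on the freshness of the restriction names, i.e.\ $\restrOf(\senv)\cap\ident(\bunch_1)=\emptyset$ (guaranteed by the indexed-renaming convention), without which the set-theoretic cancellations you invoke would not go through.
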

\begin{lemma}
  \label{lem:fn_typed_cut2}
  If $\new x.(P \| \new y. (Q \| R))$ is a well-typed process, then $x$ is shared either between $P$ and $Q$, or between $P$ and $R$, but not between all the three subprocesses, and $y$ is shared between $Q$ and $R$.
  That is, either $x \in \fn(P) \cap \fn(Q)$ and $x \notin \fn(R)$, or $x \in \fn(P) \cap \fn(R)$ and $x \notin \fn(Q)$.
  And in both cases we have $y \notin \fn(P)$
\end{lemma}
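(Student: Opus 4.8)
The plan is to invert the typing derivation twice along \Cref{rule:cut} and then convert each statement about the occurrence of a name in a bunch into a statement about free names via \Cref{l:fn_ident}. First I would fix a derivation of $\Sigma \proves \new x.(P \| \new y.(Q \| R)) :: z : C$. The only rule whose conclusion is a restriction–parallel process is \Cref{rule:cut}, possibly wrapped in applications of \Cref{rule:bunch-equiv}, which leave the process untouched and only rewrite the context modulo $\buncheq$; \Cref{rule:struct} cannot be the last rule, since it would prefix the process with a spawn. Hence, modulo $\buncheq$, the last essential rule is \Cref{rule:cut}, yielding a type $A$, a bunch $\Delta_P$, and a bunched context $\Gamma(\hole)$ with
\[
  \Delta_P \proves P :: x : A, \qquad \Gamma(x : A) \proves \new y.(Q \| R) :: z : C, \qquad \Sigma \buncheq \Gamma(\Delta_P).
\]
From \Cref{l:fn_ident} on $P$ I then get $\fn(P) = \fn(\Delta_P) \cup \set{x}$, so in particular $x \in \fn(P)$.

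Applying the same inversion to $\Gamma(x:A) \proves \new y.(Q \| R) :: z : C$ gives a type $B$, a bunch $\Delta_Q$, and a context $\Gamma'(\hole)$ with
\[
  \Delta_Q \proves Q :: y : B, \qquad \Gamma'(y : B) \proves R :: z : C, \qquad \Gamma(x : A) \buncheq \Gamma'(\Delta_Q).
\]
Since $\buncheq$ preserves the names occurring in a bunch, $x$ occurs in $\Gamma'(\Delta_Q)$, and by Barendregt's convention each name occurs at most once in a bunch. Therefore $x$ sits either inside the plugged sub-bunch $\Delta_Q$ or in the surrounding frame $\Gamma'$, but not both, and I would split on these two cases. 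If $x \in \ident(\Delta_Q)$, then \Cref{l:fn_ident} on $Q$ gives $x \in \fn(Q)$, while uniqueness forces $x \notin \ident(\Gamma')$, hence $x \notin \ident(\Gamma'(y:B))$ and, by \Cref{l:fn_ident} on $R$, $x \notin \fn(R)$; this is the first alternative. If instead $x \in \ident(\Gamma')$, then $x \in \ident(\Gamma'(y:B))$, so $x \in \fn(R)$, whereas $x \notin \ident(\Delta_Q)$ gives $x \notin \fn(Q)$ (here $x \neq y$, as both are bound by distinct restrictions); this is the second alternative. Either way $x$ is shared with exactly one of $Q, R$, hence not among all three. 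For the last claim, $y$ is bound by the inner restriction, so by the variable convention it differs from every free name of the ambient process and of its subterm $P$, giving $y \notin \fn(P)$.

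The routine parts are the applications of \Cref{l:fn_ident} and the identification $\fn(\Delta) = \ident(\Delta)$ for bunches. The main obstacle I expect is justifying the two inversion steps cleanly: I must argue that, despite the presence of \Cref{rule:struct} and \Cref{rule:bunch-equiv}, any derivation of a restriction–parallel process can be taken to end in \Cref{rule:cut} up to $\buncheq$, and then track that the successive bunch equivalences do not disturb the unique occurrence of $x$ on which the whole case analysis rests. Once the uniqueness-of-names bookkeeping across $\buncheq$ is pinned down, the remaining reasoning is purely a matter of where the leaf $x:A$ lands relative to the inner hole.
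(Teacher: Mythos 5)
Your proposal is correct and follows exactly the intended argument: the paper states \Cref{lem:fn_typed_cut2} without proof, and the evident justification is precisely your double inversion on \Cref{rule:cut} (modulo \Cref{rule:bunch-equiv}), the use of \Cref{l:fn_ident} to convert bunch membership into free-name membership, and the case split on whether the unique occurrence of $x$ lands in $\Delta_Q$ or in the frame $\Gamma'$. The only clause you leave implicit, that $y \in \fn(Q) \cap \fn(R)$, falls out of the very same applications of \Cref{l:fn_ident} ($y$ is the provided name of $Q$ and occurs in $\Gamma'(y:B)$), so nothing is missing.
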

This lemma implies that whenever we have a typed process $\new x.(P \| \new y. (Q \| R))$, then one of the congruences \ref{cong-assoc-l} or \ref{cong-assoc-r} apply.
\begin{lemma}
\label{lem:fn_typed_cut_spawn}
If $\new x.(P \| \spawn{\spvar}.Q)$ is a well-typed process, and $x \notin \spvar$, then $\fn(P) \cap \rng(\spvar) = \emptyset$.
\end{lemma}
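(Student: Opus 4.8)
The plan is to invert the typing of the composite process and then appeal to the free/bound name discipline fixed in \Cref{sec:procs}. First I would record that, writing $\restrOf(\spvar) = \Union_{y \in \dom(\spvar)} \spvar(y)$, we have $\rng(\spvar) = \restrOf(\spvar)$, so it suffices to establish $\fn(P) \cap \restrOf(\spvar) = \emptyset$. Inverting the derivation of the composite process (modulo \Cref{rule:bunch-equiv}), the last applied rule must be \Cref{rule:cut}, which supplies some context $\bunch$ with $\bunch \proves P :: x : A$ and $\bctxt(x:A) \proves \spawn{\spvar}.Q :: z : C$; inverting the latter with \Cref{rule:struct} exposes the spawn-binding judgment $\spvar \from \bctxt(x:A) \tobunch \bunch_2$ together with $\bunch_2 \proves Q :: z : C$. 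By \Cref{l:fn_ident} this fixes $\fn(P) = \fn(\bunch) \cup \set{x}$, with $x \notin \fn(\bunch)$.

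The core step would be a bound-versus-free name separation. From the free-name equation for spawn, $\fn(\spawn{\spvar}.Q) = (\fn(Q) \setminus \restrOf(\spvar)) \cup \dom(\spvar)$, and since $\dom(\spvar) \cap \restrOf(\spvar) = \emptyset$ by \Cref{def:spawn-env}, no name of $\restrOf(\spvar)$ is free in $\spawn{\spvar}.Q$: these names are bound by the spawn prefix, hence are bound names of the whole process $\new x.(P \| \spawn{\spvar}.Q)$. On the other hand, the names in $\fn(P) \setminus \set{x}$ are free in the whole process (immediate from the free-name clause for restriction-plus-parallel, $\fn(\new x.(P\|Q)) = (\fn(P)\cup\fn(Q))\setminus\set{x}$), while $x$ itself is bound by the enclosing $\new x$. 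Barendregt's variable convention, adopted in \Cref{sec:procs}, guarantees that bound names are pairwise distinct and disjoint from free names; applying it twice yields $\restrOf(\spvar) \cap (\fn(P) \setminus \set{x}) = \emptyset$ and $x \notin \restrOf(\spvar)$. Combining the two gives $\fn(P) \cap \restrOf(\spvar) = \emptyset$, as required.

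I would close with two remarks on the hypotheses and the real difficulty. The assumption $x \notin \spvar$ (i.e.\ $x \notin \dom(\spvar)$) is not actually consumed by the argument above; it is carried along because the lemma is tailored to discharge the freshness side condition of \Cref{red-spawn-r}, whose premise it mirrors, thereby justifying that commuting $\spawn{\spvar}$ outside $\new x$ never captures a free name of $P$. The only genuinely load-bearing point is that the names freshly introduced by $\spvar$ are fresh relative to $\fn(P)$, and the variable convention supplies this at once; this is why the proof is short rather than deep. The main obstacle appears only if one insists on avoiding the convention: one would then prove by induction on the derivation of $\spvar \from \bunch_1 \tobunch \bunch_2$ that $\restrOf(\spvar)$ avoids the ambient names. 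The base cases \Cref{rule:spawn-weaken,rule:spawn-contract} are immediate from the freshness built into indexed renaming, and the delicate case is \Cref{rule:spawn-merge}, where one must rule out collisions between $\restrOf(\spvar_2)$ and $\dom(\spvar_1)$ using the inclusion $\restrOf(\spvar_1 \merge \spvar_2) \subseteq \restrOf(\spvar_1) \cup \restrOf(\spvar_2)$ together with the freshness of each component binding.
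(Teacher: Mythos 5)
The paper states \Cref{lem:fn_typed_cut_spawn} without proof---it is one of the unproven name-hygiene lemmas in the appendix---so there is no official argument to compare against; judged on its own, your proof is correct. The load-bearing step is exactly the one you identify: the names in $\restrOf(\spvar)$ occur bound (by the spawn prefix) in the subterm $\spawn{\spvar}.Q$, the names in $\fn(P)\setminus\set{x}$ are free in the whole process and $x$ is bound by the enclosing restriction, so Barendregt's convention (explicitly adopted in \Cref{sec:procs}) forces disjointness; reading $\rng(\spvar)$ as $\restrOf(\spvar)$ is clearly the intended convention, given its use in the definition of active names. Your observation that neither well-typedness nor $x \notin \dom(\spvar)$ is actually consumed is accurate---the hypotheses are there because the lemma is tailored to license \Cref{red-spawn-r} on typed processes, and the typing inversion you perform is scaffolding rather than substance. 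The closing sketch of a convention-free argument by induction on the spawn-binding judgment is also sound (the inclusion $\restrOf(\spvar_1 \merge \spvar_2) \subseteq \restrOf(\spvar_1) \cup \restrOf(\spvar_2)$ follows directly from \Cref{def:merge}), and is plausibly closer to what the authors had in mind given that the neighbouring lemmas are all proved ``by induction on spawn binding typing''; but it is not needed once the convention is invoked.
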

Similarly to the previous lemma, this lemma implies that for a well-typed process of the form $\new x.(P \| \spawn{\spvar}.Q)$ either \ref{red-spawn} or \ref{red-spawn-r} apply.
\begin{lemma}
\label{lem:spawn_typed_indep}
If $\new x. (\spawn{\spvar_1}.P \| \spawn{\spvar_2}.Q)$ is a well-typed process, then $\spvar_1$ and $\spvar_2$ are independent.
\end{lemma}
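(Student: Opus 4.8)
The plan is to invert the typing derivation down to the \ref{rule:cut} that introduces the restriction on $x$, and then read off where the domains and restrictions of $\spvar_1$ and $\spvar_2$ live, so as to check the four clauses of independence (\Cref{def:spawn-env}) one by one. Concretely, the derivation of $\bctxt(\bunch) \proves \new x.(\spawn{\spvar_1}.P \| \spawn{\spvar_2}.Q) :: z : C$ must end, up to applications of \ref{rule:bunch-equiv} (which leave $\ident(\cdot)$ unchanged, hence are harmless for a name-based argument), with \ref{rule:cut}, yielding $\bunch \proves \spawn{\spvar_1}.P :: x : A$ and $\bctxt(x:A) \proves \spawn{\spvar_2}.Q :: z : C$. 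Inverting each of these, again up to \ref{rule:bunch-equiv}, through \ref{rule:struct} produces spawn-binding typings $\spvar_1 \from \bunch \tobunch \bunch_2$ and $\spvar_2 \from \bctxt(x:A) \tobunch \bunch'$ for some $\bunch_2, \bunch'$.

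First I would record a small auxiliary fact, provable by a routine induction on spawn-binding typing using the shapes in \ref{rule:spawn-contract}, \ref{rule:spawn-weaken}, and \ref{rule:spawn-merge}: whenever $\spvar \from \bunch \tobunch \bunchB$ we have $\dom(\spvar) \subseteq \ident(\bunch)$ and $\restrOf(\spvar) \subseteq \ident(\bunchB)$. Instantiating this gives $\dom(\spvar_1) \subseteq \ident(\bunch)$, $\restrOf(\spvar_1) \subseteq \ident(\bunch_2)$, $\dom(\spvar_2) \subseteq \ident(\bctxt) \union \set{x}$, and $\restrOf(\spvar_2) \subseteq \ident(\bunch')$. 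Next I would collect the disjointness facts the argument needs. Since $\bctxt(\bunch)$ is a well-formed bunch, $\ident(\bctxt) \inters \ident(\bunch) = \emptyset$. Since $x$ is the session \emph{provided} by the first subprocess, it is not a context name there, so $x \notin \ident(\bunch) \union \ident(\bunch_2)$, and hence $x \notin \dom(\spvar_1) \union \restrOf(\spvar_1)$. Finally, the names in $\restrOf(\spvar_1)$ and $\restrOf(\spvar_2)$ are precisely the names bound by the two (distinct) spawn prefixes; by Barendregt's variable convention they are mutually disjoint and disjoint from every free name of the enclosing process, so in particular $\restrOf(\spvar_1) \inters \restrOf(\spvar_2) = \emptyset$, $\restrOf(\spvar_2) \inters \ident(\bunch) = \emptyset$, and $\restrOf(\spvar_1) \inters (\ident(\bctxt) \union \set{x}) = \emptyset$. (Here \Cref{l:fn_ident} can be used to justify that $\ident(\bunch)$ and $\ident(\bctxt)$ are indeed free names of the subprocesses.)

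With these in hand, the four clauses of $\spvar_1 \indep \spvar_2$ fall out directly: $\dom(\spvar_1) \inters \dom(\spvar_2) \subseteq \ident(\bunch) \inters (\ident(\bctxt) \union \set{x}) = \emptyset$; $\dom(\spvar_1) \inters \restrOf(\spvar_2) \subseteq \ident(\bunch) \inters \restrOf(\spvar_2) = \emptyset$; $\restrOf(\spvar_1) \inters \restrOf(\spvar_2) = \emptyset$; and $\dom(\spvar_2) \inters \restrOf(\spvar_1) \subseteq (\ident(\bctxt) \union \set{x}) \inters \restrOf(\spvar_1) = \emptyset$. The step I expect to be the main obstacle is the careful use of the variable convention in the previous paragraph: one must justify that the restrictions of the two spawn prefixes really are fresh bound names, so that they are separated both from one another and from the free names $\ident(\bunch)$, $\ident(\bctxt)$, and the cut name $x$. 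Everything else is bookkeeping on bunch identifiers.
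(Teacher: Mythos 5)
Your argument is correct. Note that the paper itself states \Cref{lem:spawn_typed_indep} without proof, treating it as routine; your write-up supplies essentially the intended justification: invert the derivation through \ref{rule:cut} and the two applications of \ref{rule:struct}, observe that the domain of a typed spawn binding is drawn from the (free) names of its source bunch while its restrictions are freshly bound, and invoke Barendregt's convention to separate the bound restriction names from one another, from every free name, and from the cut name $x$. The one place I would tighten the write-up is your auxiliary induction on $\spvar \from \bunch \tobunch \bunchB$: the invariant ``$\dom(\spvar) \subseteq \ident(\bunch)$ and $\restrOf(\spvar) \subseteq \ident(\bunchB)$'' is not by itself strong enough to close the \ref{rule:spawn-merge} case, because for $\spvar_1 \merge \spvar_2$ the domain also contains $\dom(\spvar_2) \setminus \restrOf(\spvar_1)$ and the restrictions contain $\restrOf(\spvar_1) \setminus \dom(\spvar_2)$, and to place these inside the correct bunches you need to know how $\ident$ is transformed along a typed binding, namely $\ident(\bunchB) = (\ident(\bunch) \setminus \dom(\spvar)) \cup \restrOf(\spvar)$. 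Strengthening the induction hypothesis to include this equation (which holds for \ref{rule:spawn-contract} and \ref{rule:spawn-weaken} by inspection and composes through \ref{rule:spawn-merge}) makes the induction go through; the subsequent verification of the four clauses of independence is then exactly the bookkeeping you describe.
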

\begin{lemma}
\label{lem:inj_subst}
  If $\Delta \proves P :: x : C$ and $\theta$ is an injective substitution, then
  $\Delta\theta \proves P\theta :: \theta(x) : C$.
\end{lemma}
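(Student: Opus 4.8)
The plan is to proceed by induction on the derivation of $\Delta \proves P :: x : C$. The overall strategy is routine for a renaming lemma: since $\theta$ is injective, it is a bijection between the names it touches, so it distributes over the bunch constructors (so $(\bunch_1 \band \bunch_2)\theta = \bunch_1\theta \band \bunch_2\theta$ and likewise for $\bsep$), it commutes with hole-filling in bunched contexts (so $(\bctxt(\bunch))\theta = (\bctxt\theta)(\bunch\theta)$ for the evident renaming $\bctxt\theta$ of the context), and it preserves bunch equivalence ($\bunch_1 \buncheq \bunch_2$ implies $\bunch_1\theta \buncheq \bunch_2\theta$). Injectivity is what guarantees that the renamed bunch is still well-formed, i.e.\ that distinct leaves stay distinct; a non-injective $\theta$ could collapse two channels and destroy this invariant. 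For the rules that bind a name --- \ref{rule:sep-r}, \ref{rule:sep-l}, \ref{rule:wand-r}, \ref{rule:wand-l}, \ref{rule:cut}, and so on --- I first use alpha-conversion (Barendregt's convention) to assume the bound name lies outside $\dom(\theta) \cup \rng(\theta)$, so that $\theta$ neither captures it nor is affected by it; this keeps the freshness side conditions intact after renaming.

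With this machinery, the logical cases are immediate. For the base case \ref{rule:type-fwd} we have $y : A \proves \fwd[x<-y] :: x : A$, which maps under $\theta$ to $\theta(y) : A \proves \fwd[\theta(x)<-\theta(y)] :: \theta(x) : A$, again an instance of \ref{rule:type-fwd}. For each right/left rule of a connective (and for \ref{rule:cut}), I apply the induction hypothesis to the premises and then reassemble the conclusion using the fact that $\theta$ commutes with $\band$, $\bsep$, and context substitution; the process term of the conclusion is exactly $(\cdot)\theta$ of the original by the definition of substitution on prefixes. \ref{rule:bunch-equiv} follows from preservation of $\buncheq$ under $\theta$.

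The main obstacle is the spawn rule \ref{rule:struct}, whose premise is a spawn-binding typing judgment $\senv \from \bunch_1 \tobunch \bunch_2$. For this case I would first establish, as an auxiliary lemma, that spawn-binding typing is preserved under injective renaming: if $\senv \from \bunch_1 \tobunch \bunch_2$ and $\theta$ is injective, then $\senv\theta \from \bunch_1\theta \tobunch \bunch_2\theta$, where $\senv\theta$ renames both the domain and the range-sets of $\senv$ by $\theta$. This is proved by induction on the spawn-binding typing derivation (\ref{rule:spawn-contract}, \ref{rule:spawn-weaken}, \ref{rule:spawn-merge}). Injectivity is essential at every clause: it is what preserves the disjointness and freshness side conditions of \Cref{def:spawn-env} (e.g.\ $\spb(x) \inters \spb(y) = \emptyset$ and $\dom(\spb) \inters \spb(x) = \emptyset$), it lets the fresh indexed copies $\idx{\bunch}{i}$ in \ref{rule:spawn-contract} be realized as $\theta$-images of the original copies, and --- most delicately --- it ensures that renaming commutes with the merge operation, $(\senv_1 \merge \senv_2)\theta = \senv_1\theta \merge \senv_2\theta$. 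The merge is defined (\Cref{def:merge}) by case analysis on membership tests such as $x \in \dom(\spvar_1)$ and $x \notin \restrOf(\spvar_1)$; these tests are stable under an injective $\theta$ but could change under a non-injective one, so this is precisely where the hypothesis is used. Given the auxiliary lemma, the \ref{rule:struct} case closes by applying the induction hypothesis to the continuation and the lemma to the binding premise, and observing that $(\spw\senv.P)\theta = \spw{\senv\theta}.(P\theta)$.
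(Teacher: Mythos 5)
The paper states this lemma without giving a proof, so there is nothing to diverge from; your argument is the standard one and it is sound. You correctly identify the only non-routine case, \ref{rule:struct}, and the auxiliary fact it needs (preservation of $\senv \from \bunch_1 \tobunch \bunch_2$ under injective renaming, including stability of the membership tests in \Cref{def:merge}). One small tidy-up: since the names in $\restrOf(\senv)$ are bound by the spawn prefix, after the alpha-conversion you already invoke they lie outside $\dom(\theta)\cup\rng(\theta)$, so $\senv\theta$ need only rename $\dom(\senv)$ --- this does not affect the correctness of your argument.
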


\subsection{Proof of subject reduction}
\label{app:subject_red:proof}

\subjectRed*

\begin{proof}[Proof (structural congruence)]
We proceed by induction on $P \congr Q$, examining the possible typing derivations of $P$.

\textbf{Case \ref{cong-assoc-l}.}
This congruence states that the order of independent cuts does not matter.
Corresponds to the following proof conversion:
\begin{mathpar}
\infer*
  {{\bunch_1 \proves P :: x : A}
    \and
    \infer*
    {\bunch_2 \proves Q :: y : B
      \and \Gamma(x : A \mid y : B) \proves R :: z : C}
    {\Gamma(x : A \mid \bunch_2) \proves \new y. (Q \|_y R) :: z : C}}
  {\Gamma(\bunch_1 \mid \bunch_2) \proves \new x.\big(P \|_x \new y.(Q \|_y R)\big) :: z : C}
\end{mathpar}
{\Huge
$$\leftrightsquigarrow$$
\normalsize}
\begin{mathpar}
\infer*
  {\bunch_2 \proves Q :: y : B
    \and
\infer*
    {\bunch_1 \proves P :: x : A \and \Gamma(x: A \mid y : B) \proves R :: z : C}
    {\Gamma(\bunch_1 \mid y : B) \proves \new x. (P \|_x R) :: z : C}
   }
  {\Gamma(\bunch_1 \mid \bunch_1) \proves \new y.\big(Q \|_y \new x. (P \|_x R)\big) :: z : C}
\end{mathpar}
\smallskip

\textbf{Case \ref{cong-assoc-r}.}
This congruence states that the order of subsequent cuts does not matter.
Corresponds to the following proof conversion:
\begin{mathpar}
\infer*
  {{\bunch_1 \proves P :: x : A}
    \and
    \infer*
    {\bunch_2(x: A) \proves Q :: y : B
      \and \Gamma(y : B) \proves R :: z : C}
    {\Gamma(\bunch_2(x:A)) \proves \new y. (Q \|_y R) :: z : C}}
  {\Gamma(\bunch_2(\bunch_1)) \proves \new x.\big(P \|_x \new y.(Q \|_y R)\big) :: z : C}
\end{mathpar}
{\Huge
$$\leftrightsquigarrow$$
\normalsize}
\begin{mathpar}
\infer*
  {\infer*
    {\bunch_1 \proves P :: x : A \and \bunch_2(x: A) \proves Q :: y : B}
    {\bunch_2(\bunch_1) \proves \new x. (P \|_x Q) :: y : B}
   \and \Gamma(y : B) \proves R :: z : C}
  {\Gamma(\bunch_2(\bunch_1) \proves \new y.\big(\new x. (P \|_x Q) \|_y R\big) :: z : C}
\end{mathpar}
\smallskip

\textbf{Case \ref{cong-spawn-swap}.}
Similarly to the previous case, if two spawn bindings $\spvar$ and $\spvar'$ are independent, than they correspond to two independent applications of \ruleref{struct} that can be commuted past each other.
For example,
\begin{mathpar}
\infer*
{ \infer*
  {\Gamma(\empA \mid \idx{\bunch}{1} \band \idx{\bunch}{2}) \proves P :: z : C}
  {\Gamma(\empA \mid \bunch) \proves \spawnQ{x \mapsto \{x_1, x_2\}}{x \in \fn(\bunch)} P :: z : C}}
{\Gamma(\bunch' \mid \bunch) \proves \spawnQ{x \mapsto \emptyset}{x \in \fn(\bunch')}\spawnQ{x \mapsto \{x_1, x_2\}}{x \in \fn(\bunch)} P :: z : C}
\end{mathpar}
{\Huge
$$\leftrightsquigarrow$$
\normalsize}
\begin{mathpar}
\infer*
{ \infer*
  {\Gamma(\empA \mid \idx{\bunch}{1}\band \idx{\bunch}{2}) \proves P :: z : C}
  {\Gamma(\bunch' \mid \idx{\bunch}{1}\band \idx{\bunch}{2}) \proves \spawnQ{x \mapsto \emptyset}{x \in \fn(\bunch')}P :: z : C}}
{\Gamma(\bunch' \mid \bunch) \proves \spawnQ{x \mapsto \{x_1, x_2\}}{x \in \fn(\bunch)}\spawnQ{x \mapsto \emptyset}{x \in \fn(\bunch')} P :: z : C}
\end{mathpar}
\smallskip

\textbf{Closure under program contexts.}
 Using the induction hypothesis.
\end{proof}

\begin{proof}[Proof (reduction)]
  By induction on the reduction relation $P \redd Q$ and the typing derivation.
  The case \ref{red-cong} follows using the previous theorem and the induction hypothesis.
  The case \ref{red-eval-ctxt} for evaluation contexts of the form $\ectxt = \new x. (\ectxt'[\hole] \| P)$ and $\ectxt = \spawn{\spvar}. \ectxt'[\hole]$ follows from the induction hypothesis.

  \textbf{Case \ref{red-unit-l}.}
It corresponds to the following reduction of proofs, or its additive version:
\begin{mathpar}
  \begin{array}[t]{c c c}
\infer*
{\empM \proves \out x<> :: x : \Emp
\and
\infer*{\Gamma(\empM) \proves Q :: z : C}
       {\Gamma (x:\Emp) \proves \inp x().Q :: z  : C}}
{\Gamma(\empM) \proves \new x.(\out x<> \|_x \inp x().Q) :: z : C}
    & \BigArrow
& \axiom{\Gamma(\empM) \proves Q :: z : C}
  \end{array}
\end{mathpar}
\smallskip

\textbf{Case \ref{red-comm-l}.}
It corresponds to the following reduction of proofs, or its additive version:
\begin{mathpar}
\infer*
{ \infer*
  {\bunch_1 \proves P_1 :: y : A
  \and
  \bunch_2 \proves P_2 :: x : B}
  {\bunch_1\bsep \bunch_2 \proves \out x[y].(P_1 \| P_2) :: x : A \ast B}
  \and
  \infer*
  {\Gamma(y : A\bsep x : B) \proves Q :: z : C}
  {\Gamma (x : A \ast B) \proves \inp x(y).Q :: z : C}
}
{\Gamma(\bunch_1\bsep \bunch_2) \proves
\new x.\big(\out x[y].(P_1 \| P_2) \| \inp x(y).Q\big) :: z : C
}
\end{mathpar}
\begin{center}
{\BigArrow}
\end{center}
\begin{mathpar}
\infer*
{
  \bunch_2 \proves P_2 :: x : B
\and
 \infer*
  {\bunch_1 \proves P_1 :: y : A
  \and \Gamma(y : A\bsep x : B) \proves Q :: z : C}
  {\Gamma(\bunch_1\bsep x : B) \proves \new y. ( P_1 \| Q) :: z : C}
}
{\Gamma(\bunch_1\bsep \bunch_2) \proves \new x. \big( P_2 \| \new y. ( P_1 \| Q)\big) :: z : C}
\end{mathpar}
\smallskip

\textbf{Case \ref{red-comm-r}.}
It corresponds to the following reduction of proofs, or its additive version:
\begin{mathpar}
\infer*
{ \infer*
  {\bunch\bsep y : A \proves Q :: x : B}
  {\bunch \proves \inp x(y).Q :: x : A \wand B}
  \and
  \infer*
  {\bunch_1 \proves P_1 :: y : A
  \and
  \Gamma(x:B\bsep \bunch_2) \proves P_2 :: z : C}
  {\Gamma(\bunch_1\bsep x : A \wand B\bsep \bunch_2) \proves \out x[y].(P_1 \| P_2) :: z : C}
}
{\Gamma(\bunch_1\bsep \bunch\bsep \bunch_2) \proves
\new x.\big(\inp x(y).Q \| \out x[y].(P_1 \| P_2) \big) :: z : C
}
\end{mathpar}
\begin{center}
{\BigArrow}
\end{center}
\begin{mathpar}
\infer*
{ \infer*
  { \bunch_1 \proves P_1 :: y : A
    \and {\bunch\bsep y : A \proves Q :: x : B}
  }
  {\bunch\bsep \bunch_1 \proves \new y.(P_1 \| Q) :: x : B}
  \and
  \Gamma(x : B\bsep \bunch_2) \proves P_2 :: z : C
}
{\Gamma(\bunch\bsep \bunch_1\bsep \bunch_2) \proves
\new x.\big(\new y.(P_1 \| Q) \| P_2\big) :: z : C}
\end{mathpar}
\smallskip

\textbf{Case \ref{red-case}.}
Corresponds to the following reduction of proofs (for $\ell = \inl$):
\begin{mathpar}
\infer*
{\infer*{\bunch \proves P :: x : A}
        {\bunch \proves \selL{x}.P :: x : A \vee B}
\and
\infer*{\bctxt(x : A) \proves Q_1 :: z : C
        \and \bctxt(x : B) \proves Q_2 :: z : C}
       {\bctxt (x : A \vee B) \proves \caseLR{x}{Q_1}{Q_2} :: z : C}}
{\bctxt(\bunch) \proves \new x.\big(\selL{x}.P \| \caseLR{x}{Q_1}{Q_2} \big) :: z : C}
\end{mathpar}
\begin{center}
{\BigArrow}
\end{center}
\begin{mathpar}
\infer*
{{\bunch \proves P :: x : A} \and {\bctxt(x : A) \proves Q_1 :: z : C}}
{\bctxt(\bunch) \proves \new x.\big(P \| Q_1 \big) :: z : C}
\end{mathpar}
\smallskip

\textbf{Case \ref{red-fwd-l}.}
It corresponds to the following reduction of proofs, using \Cref{lem:inj_subst}:
\begin{mathpar}
\begin{array}[t]{c c c}
\infer*
{y : A \proves \fwd[x<-y] :: x : A
\and
\bunch(x:A) \proves P :: z : C}
{\bunch(y:A) \proves \new x.\big(\fwd[x<-y] \| P\big) :: z : C}
  & \BigArrow
  & \axiom{\bunch(y:A) \proves \substS{P}{x}{y} :: z : C}
\end{array}
\end{mathpar}
\smallskip

\textbf{Case \ref{red-fwd-r}.}
It corresponds to the following reduction of proofs:
\begin{mathpar}
\begin{array}[t]{c c c}
\infer*
  {\bunch \proves P :: x : A
  \and
  x : A \proves \fwd[y<-x] :: y : A
  }
  {\bunch \proves \new x.\big(P \| \fwd[y<-x]\big) :: y : A}
  & \BigArrow
  & \axiom{\bunch \proves \substS{P}{x}{y} :: y : A}
\end{array}
\end{mathpar}
\smallskip

\textbf{Case \ref{red-spawn}.}
This corresponds to pushing application of \ruleref{struct} past the cut.
By \Cref{lem:typedS_cut}, the derivation that we have is of the following shape:
\begin{equation*}
\infer*
{\bunch \proves P :: x : A
  \and
  \infer*
  {\Gamma(x_1 : A \mid \dots \mid x_n : A) \proves Q :: z : C
    \and \typedS{\spvar}{\Gamma(x_1 : A \mid \dots \mid x_n : A)}{\Gamma'(x:A)}}
  {\Gamma'(x : A) \proves \spawn{\spvar}.Q :: z : C}}
{\Gamma'(\bunch) \proves \new x.\big(P \| \spawn{\spvar}.Q\big) :: z : C}
\end{equation*}
\begin{center}
{\BigArrow}
\end{center}
\begin{equation*}
\small
\infer*
{
 \infer*
 {\infer*
    {\infer*
      {\idx{\bunch}{n} \proves \idx{P}{n} :: x_{n} : A \and
         \Gamma(x_1 : A \mid \dots \mid x_n : A) \proves Q :: z : C}
       {\Gamma(x_1 : A \mid \dots \mid \bunch^{(n)}) \proves \new {x_{n}}.(\idx{P}{n} \| Q) :: z : C}}
    {\bunch^{(1)} \proves P^{(1)} :: x_1 : A \and \dots}}
 {\Gamma(\bunch^{(1)} \mid \dots \mid \bunch^{(n)}) \proves \new {x_1}.(P^{(1)} \| \dots \new {x_{n}}.( P^{(n)} \| Q) \dots ) :: z : C}
  \and
 \typedS{\spvar'}{\Gamma(\bunch^{(1)} \mid \dots \mid \bunch^{(n)})}{\Gamma'(\bunch)}}
{\Gamma'(\bunch) \proves \spawn{\spvar'}.\new x_1.(P^{(1)} \| \dots \new x_{n}.( P^{(n)} \| Q) \dots ) :: z : C}
\end{equation*}
The spawn binding $\spvar' = (\spvar \setminus x) \cup \big[y \mapsto \{y_1, \dots, y_n\}\big]_{y \in \fn(\Pi)}$ is well-typed according to \Cref{lem:typedS_cut}.
\smallskip

\textbf{Case \ref{red-spawn-l}.}
As in the previous case, this reduction corresponds to moving \ruleref{struct} past a cut.
Since $\spawn{\spvar}.P$ appears on the left side of the composition, we know that the application of \ruleref{struct} was independent from $Q$ and from the cut.
The corresponding proof transformation is as follows:
\begin{mathpar}
\infer*
{\infer*
 {\Pi \proves P ::x : A \and \typedS{\spvar}{\Pi}{\Pi'}}
 {\Pi' \proves \spawn{\spvar}.P :: x : A}
\and
\Gamma(x : A) \proves Q :: z : C
}
{\Gamma(\Pi') \proves \new x.\big(\spawn{\spvar}.P \| Q\big) :: z : C}
\end{mathpar}
\begin{center}
{\BigArrow}
\end{center}
\begin{mathpar}
\infer*
{\infer*{\Pi \proves P ::x : A
        \and
        \Gamma(x : A) \proves Q :: z : C}
      {\Gamma(\Pi) \proves \new x.(P \| Q) :: z :C}
 \and \typedS{\spvar}{\Gamma(\Pi)}{\Gamma(\Pi')}}
{\Gamma(\Pi') \proves \spawn{\spvar}. \new x.\big(P \| Q\big) :: z : C}
\end{mathpar}
Where we know that $\typedS{\spvar}{\Gamma(\Pi)}{\Gamma(\Pi')}$ by \Cref{lem:typedS_ctx}.
\smallskip

\textbf{Case \ref{red-spawn-r}.}
Similar to the previous case.
Since $x \notin \spvar$ we know that the application of \ruleref{struct} is independent of the cut.
That is, by \Cref{lem:typedS_cut_nil}, the contexts for $Q$ are of the shape $\Gamma(x:A)$ and $\Gamma'(x:A)$.
\begin{mathpar}
\infer*
{\bunch \proves P :: x : A
\and
\infer*
 {\Gamma(x:A) \proves Q :: z : C \and \typedS{\spvar}{\Gamma(x:A)}{\Gamma'(x:A)}}
 {\Gamma'(x:A) \proves \spawn{\spvar}.Q :: z : C}
}
{\Gamma(\bunch) \proves \new x.\big(P \| \spawn{\spvar}.Q\big) :: z : C}
\end{mathpar}
\begin{center}
{\BigArrow}
\end{center}
\begin{mathpar}
\infer*
{
  \infer*
  {\bunch \proves P :: x : A \and \Gamma(x : A) \proves Q :: z : C}
  {\Gamma(\bunch) \proves \new x.(P \| Q) :: z : C}
  \and
  \typedS{\spvar}{\Gamma(\bunch)}{\Gamma'(\bunch)}
}
{\Gamma'(\bunch) \proves \spawn{\spvar}.\big(\new x.(P\| Q)\big) :: z : C}
\end{mathpar}
Where $\spvar$ has the appropriate type by \Cref{lem:typedS_cut_nil}.
\smallskip

\textbf{Case \ref{red-spawn-merge}.}
Directly using the rules for spawn prefix typing.

\end{proof}

\subsection{Proof of deadlock-freedom}
\label{app:dlfree:proof}
To prove deadlock-freedom, we first need to analyze when a process is \emph{not} stuck, i.e.\ when it can reduce.
We define the \emph{readiness} of a process, which is a means to syntactically determine whether a process can reduce.
This notion of readiness\footnote{In some literature this notion is referred to as ``liveness'', but we did not want to confused it here with a more semantic notion of liveness.} is useful when implementing \piBI as, e.g., a programming language: a reduction can be derived by simply analyzing the syntax of a program.

To define readiness, we need to know which names can be used for a communication.
We define this as a process' set of \emph{active names}: free names used for communication prefixes not guarded by other communication prefixes.
\begin{definition}[Active Names]\label{d:activeNames}
    Given a process $P$, we define the set of \emph{active names} of $P$, denoted $\an(P)$, as follows:
    \begin{align*}
        \an(\out x<>)
        &= \{x\}
        &
        \an(\new x.(P \|_x Q))
        &= (\an(P) \cup \an(Q)) \setminus \{x\}
        \\
        \an(\inp x().P)
        &= \{x\}
        &
        \an(\spawn{\spvar}.P)
        &= \dom(\spvar) \cup (\an(P) \setminus \rng(\spvar))
        \\
        \an(\out x[y].(P \| Q))
        &= \{x\}
        &
        \an(\inp x(y).P)
        &= \{x\}
        \\
        \an(\selL{x}.P)
        &= \an(\selR{x}.P) = \{x\}
        &
        \an(\caseLR{x}{P}{Q})
        &= \{x\}
        \\
        \an(\fwd[x<-y])
        &= \{x,y\}
\end{align*}
\end{definition}

\begin{lemma}
  If $ P \congr Q $ then $ \an(P) = \an(Q) $.
\end{lemma}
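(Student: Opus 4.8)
The lemma states that if $P \congr Q$ then $\an(P) = \an(Q)$, where $\an$ is the active names function just defined, and $\congr$ is structural congruence defined by the three rules in Figure~\ref{fig:congr}: cong-assoc-l, cong-assoc-r, and cong-spawn-swap.

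**Strategy**

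Since $\congr$ is the smallest congruence satisfying three base rules, the natural approach is induction on the derivation of $P \congr Q$. The congruence closure means I need to handle: (1) the three base axioms, (2) reflexivity, (3) symmetry, (4) transitivity, and (5) closure under process contexts (congruence).

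For the inductive/closure cases (reflexivity, symmetry, transitivity, congruence), these are routine. Symmetry and transitivity follow because equality of sets is symmetric and transitive. The congruence-closure case needs a sub-lemma: if $\an(P) = \an(P')$ then applying any single process constructor preserves active-name equality. This requires checking that $\an$ of a constructor applied to subprocesses depends on the subprocesses only through their active-name sets (and through free names in the restriction/spawn cases). Actually this isn't quite true — $\an(\new x.(P\|Q))$ removes $x$, but $x$ could be free in $P,Q$ without being active. So I need to be slightly careful: the congruence-closure case is where I substitute a congruent subprocess into a context, and since $\congr$ preserves free names too (α-equivalent processes have the same free names, and the three rules clearly preserve $\fn$), the restriction and spawn cases go through.

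**The three base axioms — the real content**

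For cong-assoc-l and cong-assoc-r, I compute $\an$ of both sides using the restriction clause and check they're equal. For cong-assoc-l, the left side is $\an(\new x.(P \| \new y.(Q\|R)))= ((\an(P)\cup((\an(Q)\cup\an(R))\setminus\{y\}))\setminus\{x\}$, and the right is $((\an(Q)\cup((\an(P)\cup\an(R))\setminus\{x\}))\setminus\{y\}$. Using the side conditions $x\notin\fn(Q)$, $y\notin\fn(P)$ (hence $x\notin\an(Q)$, $y\notin\an(P)$, since active names are free names), these two set expressions coincide by elementary set algebra — removal of distinct elements commutes. For cong-spawn-swap, $\an(\spawn{\spb_1}.\spawn{\spb_2}.Q)$ versus the swapped version: both equal $\dom(\spb_1)\cup\dom(\spb_2)\cup(\an(Q)\setminus(\rng(\spb_1)\cup\rng(\spb_2)))$ after unfolding, using independence $\spb_1\indep\spb_2$ (which gives disjointness of domains and ranges) to rearrange; I would verify this is symmetric in the two bindings.

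**Expected main obstacle**

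The main obstacle is the spawn-swap base case: unfolding $\an(\spawn{\spb_1}.\spawn{\spb_2}.Q)$ twice gives $\dom(\spb_1) \cup ((\dom(\spb_2)\cup(\an(Q)\setminus\rng(\spb_2)))\setminus\rng(\spb_1))$, and I must show this equals the symmetric expression. This requires the independence conditions $\dom(\spb_1)\inters\rng(\spb_2)=\emptyset$, $\dom(\spb_2)\inters\rng(\spb_1)=\emptyset$, and $\rng(\spb_1)\inters\rng(\spb_2)=\emptyset$ (where $\rng$ is what the definition calls $\restrOf$) to push set-difference operations inward and confirm symmetry. A minor secondary subtlety is confirming that $\congr$ preserves free names (needed so the restriction/spawn clauses behave well under context closure), which itself follows by a trivial induction on the same three rules. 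I expect the whole proof to be short once these set-algebraic identities are spelled out; I would present it as "By induction on the derivation of $P \congr Q$; the closure cases are immediate, and for the three base cases we unfold $\an$ and apply the respective side conditions."
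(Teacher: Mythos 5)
Your proof is correct and is essentially a rigorous spelling-out of the paper's own (two-sentence, informal) argument: the paper simply observes that no congruence rule adds or removes prefixes and that the only renaming involved touches bound names, while you verify the same fact by induction on the derivation with explicit set algebra for the three axioms. One small simplification: the context-closure case needs no appeal to preservation of free names, since $\an$ of each constructor depends on its subprocesses only through their $\an$-sets, so the induction hypothesis suffices directly.
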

\begin{proof}
    There are no rules of structural congruence that add or remove prefixes.
    Moreover, the only rule of structural congruence that affects names only affects \emph{bound} names, and active names are free by definition.
\end{proof}

\begin{definition}[Ready process]
\label{def:ready}
  A process $P$ is \emph{ready}, denoted $\ready(P)$, if it is expected to reduce.
  Formally, the $\ready$ predicate is defined by the following rules:
  \begin{proofrules}
    \infer{
      \ready(Q)
    }{
      \ready(\spw\spb.Q)
    }

    \infer{ }{
      \ready(\spw\spb.\spw\spb'.Q)
    }

    \infer{
      \ready(Q)
      \\
      P \congr Q
    }{
      \ready(P)
    }

    \infer{
      x \in \an(P) \inters \an(Q)
    }{
      \ready(\new x.(P \|_x Q))
    }

    \infer{
      \ready(P)
      \lor
      \ready(Q)
    }{
      \ready(\new x.(P \|_x Q))
    }

    \infer{ }{
      \ready(\new x.(\spawn{\spvar}.P \|_x Q))
    }

    \infer{ }{
      \ready(\new x.(P \|_x \spawn{\spvar}.Q))
    }

    \infer{ }{
      \ready(\new x.(\fwd[x<- y] \|_x Q))
    }

    \infer{ }{
      \ready(\new x.(P \|_x \fwd[x<- y]))
    }
  \end{proofrules}
\end{definition}

The following then assures that well-typed, ready processes can reduce:
\begin{lemma}[Progress]\label{t:progress}
    Suppose given a process $P$ such that $\Gamma \proves P :: z:C$ and $\ready(P)$.
    Then, there exists a process $S$ such that $P \redd S$.
\end{lemma}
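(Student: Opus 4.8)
The plan is to argue by induction on the derivation of $\ready(P)$, isolating the \emph{interaction case}---where $\ready(\new x.(P \|_x Q))$ is derived from $x \in \an(P) \inters \an(Q)$---for a separate nested induction. Two elementary facts are used throughout. First, $\an(P) \subseteq \fn(P)$, proved by a routine induction on $P$. Second, a process never has its \emph{provided} channel in the domain of a leading spawn binding: by \Cref{rule:struct} the domain of a typed spawn binding consists only of names of the source bunch (i.e.\ \emph{used} channels), while by \Cref{l:fn_ident} the provided channel is disjoint from the used channels.

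All cases except the interaction case reduce to applying a single reduction rule, possibly after invoking the induction hypothesis. If $P = \spw\spb.Q$ with $\ready(Q)$, inversion on \Cref{rule:struct} types $Q$, the inner hypothesis gives $Q \redd S'$, and \Cref{red-eval-ctxt} with the evaluation context $\spw\spb.[\hole]$ yields $P \redd \spw\spb.S'$. If $P = \spw\spb.\spw{\spb'}.Q$, then \Cref{red-spawn-merge} applies directly. If $P \congr Q$ with $\ready(Q)$, then $Q$ is typed by subject congruence (\Cref{thm:subject_red}), the hypothesis reduces it, and \Cref{red-cong} transports the step back to $P$. For a cut made ready because one side is ready, the hypothesis on that side together with \Cref{red-eval-ctxt} suffices. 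A leading spawn on the provider, $\new x.(\spw\spb.P \|_x Q)$, satisfies $x \notin \dom(\spb)$ by the second fact above, so \Cref{red-spawn-l} fires; a leading spawn on the user, $\new x.(P \|_x \spw\spb.Q)$, fires \Cref{red-spawn} if $x \in \dom(\spb)$ and \Cref{red-spawn-r} otherwise (the dichotomy being exactly \Cref{lem:fn_typed_cut_spawn}). The forwarder cases apply \Cref{red-fwd-l}/\Cref{red-fwd-r}: typing fixes the orientation of the forwarder, and the side-conditions $x \neq y$ and $y \notin \fn(\cdot)$ follow from \Cref{l:fn_ident} and linearity (the forwarded channel, being used by the forwarder, is not free in its partner).

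For the interaction case I would prove the auxiliary statement: if $\new x.(P \|_x Q)$ is well-typed with $x \in \an(P) \inters \an(Q)$ then it reduces, by a nested induction on $|P| + |Q|$. Here $P$ provides $x : A$ and $Q$ uses $x : A$. I first expose the head of the \emph{provider}. If $P$ begins with a cut $\new w.(P_1 \|_w P_2)$, then since $x$ is the provided channel we have $x \notin \fn(P_1)$, so (using $\an \subseteq \fn$) $x \in \an(P_2)$; via \Cref{lem:fn_typed_cut2} I reassociate by \Cref{cong-assoc-r} to $\new w.(P_1 \|_w \new x.(P_2 \|_x Q))$ and recurse on the smaller inner cut. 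A leading spawn is handled by \Cref{red-spawn-l} and a forwarder $\fwd[x<-y]$ by \Cref{red-fwd-l}; otherwise $P$ is a prefix, whose subject must be $x$ (else $x \notin \an(P)$). I then analyse $Q$ symmetrically: a leading cut is reassociated through \Cref{cong-assoc-l} or \Cref{cong-assoc-r} according to which component uses $x$ (again via \Cref{lem:fn_typed_cut2}), applying the hypothesis to a smaller cut; a leading spawn is discharged by \Cref{red-spawn}/\Cref{red-spawn-r}; a forwarder $\fwd[y<-x]$ by \Cref{red-fwd-r}; and a prefix must again have subject $x$. When both $P$ and $Q$ are prefixes on $x$, inversion on the typing rules for $A$ forces them to be dual (e.g.\ \Cref{rule:sep-r} against \Cref{rule:sep-l} for $A = A' \ast B'$), so exactly one of \Cref{red-comm-l}, \Cref{red-comm-r}, \Cref{red-unit-l}, \Cref{red-unit-r}, \Cref{red-case} applies.

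The main obstacle is precisely this interaction case: the readiness rule for a communicating cut yields no readiness evidence for the subprocesses, so the outer induction cannot be reused, and the redex may be buried under nested cuts and spawns that must first be rearranged. The crux is to show that the structural congruences (justified by \Cref{lem:fn_typed_cut2}) always suffice to bring the two dual actions on $x$ into contact while strictly decreasing $|P| + |Q|$, and that typing inversion then guarantees the exposed prefixes match. The remaining bookkeeping---checking the freshness side-conditions of the forwarder and scope-extrusion rules---is routine given \Cref{l:fn_ident} and linearity.
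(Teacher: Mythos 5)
Your proposal is correct and follows essentially the same route as the paper: an induction on the derivation of $\ready(P)$ whose only delicate case is the interacting cut, handled by a nested induction that peels off enclosing cuts via \Cref{cong-assoc-r} (justified by \Cref{lem:fn_typed_cut2}), dispatches leading spawns and forwarders to \Cref{red-spawn-l}/\Cref{red-spawn}/\Cref{red-spawn-r} and \Cref{red-fwd-l}/\Cref{red-fwd-r}, and concludes by typing inversion that the exposed prefixes on $x$ are dual. The only difference is cosmetic: you measure the nested induction by $|P|+|Q|$ where the paper counts the numbers $n,m$ of cuts and spawns guarding the unguarded prefix on $x$ in each component.
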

\begin{proof}
    By induction on the derivation of $\ready(P)$.
    \begin{casesplit}
        \case[$P = \spw{\spvar}.Q$ and $\ready(Q)$]
        By the IH, there exists $S'$ such that $Q \redd S'$.
        By Rule~\ref{red-eval-ctxt}, $P \redd \spw{\spvar}.S'$.

        \case[$P = \spw{\spvar}.\spw{\spvar'}.Q$]
        By Rule~\ref{red-spawn-merge}, $P \redd \spw{\spvar \merge \spvar'}.Q$.

        \case[$P \congr Q$ and $\ready(Q)$]
        By the IH, there exists $S$ such that $Q \redd S$.
        By Rule~\ref{red-cong}, $P \redd S$.

        \case[$P = \new x.(Q \| R)$ and $x \in \an(Q) \cap \an(R)$]
        We have $\Gamma = \Gamma'(\Delta)$ where $\Delta \proves Q :: x:A$ and $\Gamma'(x:A) \proves R :: z:C$.
        Since $x \in \an(Q) \cap \an(R)$, there is an unguarded prefix with subject $x$ in both $Q$ and $R$.
        Being unguarded, the prefix in $Q$ appears inside a sequence of $n$ cuts and spawns.
        Similarly, the prefix in $R$ appears inside a sequence of $m$ cuts and spawns.
        By induction on $n$ and $m$, we show that there exists a process $S$ such that $P \redd S$.
        \begin{itemize}
            \item
            If $n = 0$ and $m = 0$, the analysis depends on whether $Q = \fwd[x<-y]$ or $R = \fwd[y<-x]$, or neither.
                If so, this case is analogous the appropriate of the latter two cases of this proof.

                Otherwise, neither $Q$ nor $R$ is a forwarder.
                In that case, $Q$ is typable with a right rule for output, input, selection, or branching on $x$, depending on the type $A$.
                Similarly, $R$ is typable with a dual left rule on $x$.
                Suppose, as a representative example, that $A = B_1 \ast B_2$.
                Then, $Q$ is typable with Rule~\ref{rule:sep-r}, i.e.\ $Q = \out x[y].(Q_1 \| Q_2)$.
                Similarly, $R$ is typable with Rule~\ref{rule:sep-l}, i.e.\ $R = \in x(y').R'$.
                Then $P = \new x.(\out x[y].(Q_1 \| Q_2) \| \in x(y').R)$.
                Let $S = \new x.(Q_2 \| \new y.(Q_1 \| R'\subst{y'->y}))$.
                By Rule~\ref{red-comm-l}, $P \redd S$.

            \item
                If $n = n' + 1$, then the analysis depends on whether the outermost construct in $Q$ is a cut or a spawn.
                We thus consider these two cases:
                \begin{itemize}
                    \item
                        If the outermost construct is a cut, then $P = \new x.(\new w.(Q_1 \| Q_2) \| R)$.
                        The prefix on $x$ appears in $Q_2$, under a sequence of $n'$ sequence of and spawns.
                        Since $x \in \fn(\new w.(Q_1 \| Q_2))$, we know $w \neq x$.
                        This means that $x \notin \fn(Q_1)$.
                        Hence, by Rule~\ref{cong-assoc-r}, $P \congr \new w.(Q_1 \| \new x.(Q_2 \| R))$.
                        By the IH, there exists $S'$ such that $\new x.(Q_2 \| R) \redd S'$.
                        Let $S = \new w.(Q_1 \| S')$.
                        Then, by Rules~\ref{red-eval-ctxt} and~\ref{red-cong}, $P \redd S$.

                    \item
                        If the outermost construct is a spawn, then $Q = \spw{\spvar}.Q'$ and the proof follows as in the case where $P = \new x.(\spw{\spvar}.Q \| R)$.
                \end{itemize}

            \item
                If $m = m' + 1$, the analysis is analogous to the case above.
        \end{itemize}

        \case[$P = \new x.(Q \| R)$ and $\ready(Q)$ or $\ready(R)$]
        W.l.o.g., assume $\ready(Q)$.
        By the IH, there exists $S'$ such that $Q \redd S'$.
        By Rule~\ref{red-eval-ctxt}, $P \redd \new x.(S' \| R)$.

        \case[$P = \new x.(\spw{\spvar}.Q \| R)$]
        By typability, $x \notin \spvar$.
        Hence, by Rule~\ref{red-spawn-l},
        \[P \redd \spw{\spvar}.\new x.(Q \| R).\]

        \case[$P = \new x.(Q \| \spw{\spvar}.R)$]
        The analysis depends on wheter $x \in \spvar$ or not, so we consider two cases:
        \begin{itemize}
            \item
                If $x \in \spvar$, then $\spvar(x) = \set{x_1, \ldots, x_n}$.
                Let $\spvar' = (\spvar \setminus \set{x}) \cup \map{w=>\set{w_1, \ldots, w_n} | w \in (\fn(Q) \setminus \set{x})}$ and $S = \spw{\spvar'}.\new x_1.(Q^{(1)} \| \ldots \new x_n.(Q^{(n)} \| R) \ldots )$.
                By Rule~\ref{red-spawn}, $P \redd S$.

            \item
                If $x \notin \spvar$, by Rule~\ref{red-spawn-r}, $P \redd \spw{\spvar}.\new x.(Q \| R)$.
        \end{itemize}

        \case[$P = \new x.(\fwd[x<-y] \| Q)$]
        By Rule~\ref{red-fwd-l}, $P \redd Q\subst{x->y}$.

        \case[$P = \new x.(Q \| \fwd[y<-x])$]
        By Rule~\ref{red-fwd-r}, $P \redd Q\subst{x->y}$.
        \qedhere
    \end{casesplit}
\end{proof}

\dlfreedomThm*
\begin{proof}
  If the process $P$ is ready, then the result follows from \Cref{t:progress}.
  Otherwise, towards a contradiction, assume $P \not\congr \out z<>$ and $P \not\congr \spawn{\emptyset}.\out z<>$.
    W.l.o.g., assume $P$ is not prefixed by an empty spawn.
    Since $\Sigma$ contains no names, and $P$ is not an empty output on $z$, the only possibility is that $P$ is a cut: $P \congr \new x. (Q \|_x R)$.
    There are several possibilities for $Q$ and $R$: they can be communication prefixes on $x$, they can be spawn prefixes with only $x$ in the domain, they can be cuts, or they can be forwarders on $x$.
    \begin{itemize}
        \item $Q$ or $R$ is a spawn prefix with only $x$ in the domain.
            By definition, $P$ is ready.
        \item $Q$ or $R$ is a forwarder on $x$.
            By definition, $P$ is ready.
        \item $Q$ is a communication prefix on $x$.
            By typability, $x$ must be free in $R$, so $R$ must contain an action on $x$: a spawn prefix with $x$ in the domain, a forwarder on $x$, or a communication prefix on $x$.
            If $x \in \an(R)$, then $x \in \an(Q) \cap \an(R)$, so $P$ is ready by definition.
            Otherwise, there is a name $y$ such that the action on $x$ in $R$ is guarded by a spawn prefix with $y$ in the domain or by a communication prefix on $y$.
            Either way, there must be a cut on $y$ in $R$, i.e., $R \congr \new y.(R_1 \|_y R_2)$.
            We show by induction on the structures of $R_1$ and $R_2$ that $R$ is ready.
            \begin{itemize}
                \item If $R_2$ is a spawn with $y$ in the domain, then $R$ is ready by definition; $R_2$ cannot be a spawn with $y$ in the domain.
                \item If $R_1$ or $R_2$ is a forwarder on $x$, then $R$ is ready by definition.
                \item If $R_1$ is a communication prefix on $y$, the analysis depends on whether $y \in \an(R_2)$.
                    If so, $R$ is ready by defition.
                    Otherwise, there is a name $z$ such that the action on $y$ in $R_2$ is guarded by a spawn prefix with $z$ in the domain or by a communication prefix on $z$.
                    Either way, there must be a cut on $z$ in $R_2$, i.e., $R_2 \congr \new z.(R'_2 \|_z R''_2)$.
                    By the IH, $R'_2$ or $R''_2$ is ready, so $R_2$ is ready by definition.
                    Hence, $R$ is ready by definition.
                \item The case where $R_2$ is a communication prefix on $y$ is analogous to the previous case.
                \item If $R_1$ or $R_2$ is a cut, then by the IH, $R_1$ or $R_2$ is ready, so $R$ is ready by definition.
            \end{itemize}
            Since $R$ is ready, also $P$ is ready.
        \item $R$ is a communication prefix on $x$.
            This case is analogous to the previous case.
    \end{itemize}
    In each case, the assumption that $P$ is not ready is contradicted, so $P \congr \out z<>$ or $P \congr \spawn{\emptyset}.\out z<>$.
\end{proof}

\subsection{Weak normalization}
\label{appendix:sec:wn}
Recall our normalization strategy:
If a process can perform a communication reduction or a forwarder reduction, then we do exactly that reduction.
If a process can only perform a reduction that involves a spawn prefix, then we
\begin{enumerate*}
\item select (an active) spawn prefix with the least depth;
\item perform the spawn reduction;
\item propagate the newly created spawn prefix to the very top-level, merging it with other spawn prefixes along the way.
\end{enumerate*}

We will show that this reduction strategy terminates, by assigning a particular lexicographical measure to the processes and showing that our strategy strictly reduces this measure.
This measure counts the number of communication prefixes in a process at a given depth, where depth is determined by spawn prefixes.
Let us make this precise.

For a process $P$ we consider its \emph{skeleton} $\skelOf(P)$, which is a finite map assigning to each number $n$ the amount of communication prefixes at depth $n$ and above.
Since the processes are finite, each communication prefix occurs at a finite depth.
That means that $\skelOf(P)(k) = 0$ for any $k$ greater than the maximal depth of the process.
Formally, we define skeletons as follows.
\begin{definition}[Skeleton]
  A function $\skel \from \Nat \to \Nat$
  is a \emph{skeleton of depth~$k$},
  if $ \A i>k. \skel(i) = 0 $.

  We define $\Skel_k$ as the set of all skeletons of depth~$k$.
  Moreover, we define $\Skel \is \Union_{k\in\Nat} \Skel_k$
  and equip it with the strict quasi order $\skelLt$ such that
  \[
  \skel_1 \skelLt \skel_2
  \iff
    \E j. \bigl(
      \skel_1(j) < \skel_2(j)
      \land
      \A i>j. \skel_1(i) = \skel_2(i)
    \bigr)
  \]
  We also define $
    \skel_1 \skelLe \skel_2 \is
      (\skel_1 = \skel_2 \lor \skel_1 \skelLt \skel_2).
  $
\end{definition}

The following lemmas allow us to do well-founded recursion on skeletons.
\begin{lemma}
\label{lm:skel-lt-pres-k}
  If $\skel_2 \in \Skel_k$ and $\skel_2 \skelGt \skel_1$
  then $ \skel_1 \in \Skel_k $.
\end{lemma}
\begin{proof}
  From $\skel_2 \skelGt \skel_1$ we get some~$j$ such that
  $
    \skel_1(j) < \skel_2(j)
  $ and $
    \A i>j. \skel_1(i) = \skel_2(i)
  $.
  \begin{casesplit}
  \case[$j \leq k$]
    Then $ \A i>k. \skel_1(i) = \skel_2(i) = 0 $
    which proves $ \skel_1 \in \Skel_k $.
  \case[$j > k$]
    This is impossible as we would have
    $\skel_1(j) < \skel_2(j) = 0$.
  \qedhere
  \end{casesplit}
\end{proof}
\begin{lemma}
\label{lm:skel-k-wf}
  $ (\Skel_k, \skelLt) $ is well-founded.
\end{lemma}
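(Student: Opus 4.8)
The plan is to prove well-foundedness of $(\Skel_k, \skelLt)$ by induction on $k$, ruling out infinite strictly descending chains. The guiding observation is that $\skelLt$ is a lexicographic order that reads the indices from the top ($k$) downwards: since every skeleton in $\Skel_k$ vanishes above index $k$, the witnessing index $j$ in the definition of $\skel_1 \skelLt \skel_2$ always satisfies $j \le k$, because an index $j > k$ would require $\skel_1(j) < \skel_2(j) = 0$, which is impossible. Thus $\Skel_k$ is order-isomorphic to $\Nat^{k+1}$ under the lexicographic order with the value at index $k$ most significant, and the argument is just the standard proof that a lexicographic product of well-founded orders is well-founded.

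For the base case $k = 0$, a skeleton in $\Skel_0$ is completely determined by its value at $0$, and $\skelLt$ restricts to the usual strict order on $\Nat$ at that single index; this order is well-founded.

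For the inductive step, I would assume $(\Skel_{k-1}, \skelLt)$ is well-founded and suppose, towards a contradiction, that $(\skel_n)_{n \in \Nat}$ is an infinite chain with $\skel_n \skelGt \skel_{n+1}$ for all $n$; by \Cref{lm:skel-lt-pres-k} every $\skel_n$ remains in $\Skel_k$. First I would show that the sequence of top values $(\skel_n(k))_n$ is non-increasing: for each step the witness $j$ is either $j = k$, giving $\skel_{n+1}(k) < \skel_n(k)$, or $j < k$, in which case $k > j$ forces $\skel_{n+1}(k) = \skel_n(k)$. A non-increasing sequence of naturals is eventually constant, say $\skel_n(k) = c$ for all $n \ge N$. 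Then for $n \ge N$ each strict decrease must occur at some index $< k$ with the top value preserved, so the truncations $\skel_n' \is \skel_n$ restricted to $\set{0, \dots, k-1}$—which lie in $\Skel_{k-1}$—satisfy $\skel_n' \skelGt \skel_{n+1}'$. This produces an infinite descending chain in $\Skel_{k-1}$, contradicting the induction hypothesis.

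The main subtlety to handle carefully is that the witnessing index $j$ may differ from one step to the next, so the monotonicity of the top value $\skel_n(k)$ must be argued uniformly from the definition of $\skelLt$ rather than by fixing a single witness across the chain. Once that is established, the reduction to $\Skel_{k-1}$ via truncation is routine, and the conclusion follows by the principle that a strict order with no infinite descending chain is well-founded.
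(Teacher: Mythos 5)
Your proof is correct, but it follows a genuinely different route from the paper's. The paper gives a single, non-inductive pigeonhole argument: from an infinite descending chain in $\Skel_k$ it extracts the witness indices $j_n$ of the successive strict decreases, observes that they are all bounded by $k$, picks the greatest index $m$ that recurs infinitely often, and shows that beyond some position the values $\skel_n(m)$ are non-increasing yet strictly decrease infinitely often --- an impossible descending chain in $\Nat$. You instead run the textbook induction on $k$ for lexicographic products: the most significant coordinate $\skel_n(k)$ is non-increasing, hence eventually constant, after which the tail of the chain projects to an infinite descending chain in $\Skel_{k-1}$, contradicting the induction hypothesis. Both arguments are sound and of comparable length. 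Yours is the more modular of the two --- it works verbatim for a lexicographic product of arbitrary well-founded orders and avoids extracting an infinite subsequence --- while the paper's handles all coordinates in one pass at the cost of the slightly delicate step of selecting the \emph{greatest} infinitely recurring witness so that monotonicity at index $m$ eventually holds. The only point to make precise in your write-up is the truncation: define $\skel_n'$ as $\skel_n$ with the (eventually constant) value at index $k$ replaced by $0$, so that $\skel_n' \in \Skel_{k-1}$; the same witness $j < k$ then certifies $\skel_{n+1}' \skelLt \skel_n'$, since all coordinates strictly above $j$ remain pairwise equal after truncation.
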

\begin{proof}
  Towards a contradiction,
  assume $ \skel_0 \skelGt \skel_1 \skelGt \dots $
  is an infinite descending chain of $\Skel_k$.
  Let~$j_n$ be the witness for $\skel_n \skelGt \skel_{n+1}$,
  \ie $ \skel_n(j_n) > \skel_{n+1}(j_n) $ and
      $ \A i>j_n. \skel_n(i) = \skel_{n+1}(i) $.
  From $s_n, s_{n+1} \in \Skel_k$ we get $ j_n < k $.
  Since there are finitely many natural numbers below~$k$,
  by the pigeonhole principle,
  the sequence $j_0, j_1, \dots$
  contains at least one number that repeats infinitely often.
  Among the ones that do, pick the greatest to be~$m$.
  By definition, all the numbers larger than~$m$ appear finitely often
  in $j_0, j_1, \dots$ and so there is a position~$ p $
  such that~$\A n\geq p.j_n \leq m$.
  We obtain that
  $
    \A n \geq p.
      \skel_n(m) \geq \skel_{n+1}(m)
  $.
  Moreover, let $ i_0 < i_1 < \dots $
  be such that~$j_{i_0}, j_{i_1}, \dots$
  consists of the infinite subsequence
  of the occurrences of~$m$ in $j_p, j_{p+1}, \dots$,
  \ie $m = j_{i_0} = j_{i_1} = \dots$.
  We have
  $
    \skel_{i_0}(m)
    >
    \skel_{i_0+1}(m)
    \geq
    \dots
    \geq
    \skel_{i_1}(m)
    >
    \skel_{i_1+1}(m)
    \geq
    \dots
  $.
  We obtain that
  $ \skel_{i_0}(m) > \skel_{i_1}(m) > \dots $
  is an infinite descending chain of~$\Nat$,
  which is a contradiction.
\end{proof}

\begin{restatable}{lemma}{skelwf}
  \label{lm:skel-wf}
  $ (\Skel, \skelLt) $ is well-founded.
\end{restatable}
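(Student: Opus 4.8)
I need to prove that $(\Skel, \skelLt)$ is well-founded, where $\Skel = \bigcup_k \Skel_k$ collects skeletons of all depths.

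The plan is to reduce well-foundedness of the whole set $\Skel$ to the depth-stratified version already established in \Cref{lm:skel-k-wf}. The key structural observation is \Cref{lm:skel-lt-pres-k}: if $\skel_2 \in \Skel_k$ and $\skel_2 \skelGt \skel_1$, then $\skel_1 \in \Skel_k$ as well. In other words, descending in the order $\skelLt$ never increases the depth of a skeleton.

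First, I would argue by contradiction: suppose $\skel_0 \skelGt \skel_1 \skelGt \skel_2 \skelGt \cdots$ is an infinite strictly descending chain in $(\Skel, \skelLt)$. Since $\skel_0 \in \Skel$, there is some $k$ with $\skel_0 \in \Skel_k$. Now I would apply \Cref{lm:skel-lt-pres-k} inductively along the chain: from $\skel_0 \in \Skel_k$ and $\skel_0 \skelGt \skel_1$ we get $\skel_1 \in \Skel_k$; from $\skel_1 \in \Skel_k$ and $\skel_1 \skelGt \skel_2$ we get $\skel_2 \in \Skel_k$; and so on. Hence $\skel_n \in \Skel_k$ for every $n$, so the entire chain lives inside the single stratum $\Skel_k$. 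But then $\skel_0 \skelGt \skel_1 \skelGt \cdots$ is an infinite descending chain in $(\Skel_k, \skelLt)$, contradicting \Cref{lm:skel-k-wf}. Therefore no infinite descending chain exists in $\Skel$, and $(\Skel, \skelLt)$ is well-founded.

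I do not anticipate any real obstacle here: this is a routine ``the depth of the first element bounds the whole descending chain'' argument, and the two substantive facts (depth preservation under descent, and well-foundedness within a fixed depth) are exactly the two preceding lemmas. The only point requiring minor care is making explicit that the single bound $k$ obtained from the \emph{first} element of the chain suffices for \emph{all} later elements; this is precisely what the inductive use of \Cref{lm:skel-lt-pres-k} delivers, since the depth bound is inherited downward at every step and never needs to be enlarged.
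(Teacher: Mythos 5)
Your proof is correct and follows exactly the same route as the paper's own proof: assume an infinite descending chain, use the depth-preservation lemma to confine the whole chain to a single stratum $\Skel_k$, and then contradict the well-foundedness of $(\Skel_k, \skelLt)$. No gaps; the inductive propagation of the bound $k$ along the chain is precisely the step the paper also relies on.
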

\begin{proof}
  Towards a contradiction,
  assume $ \skel_0 \skelGt \skel_1 \skelGt \dots $
  is an infinite descending chain of $\Skel$.
  Since $\skel_0 \in \Skel_k$ for some~$k$,
  by \cref{lm:skel-lt-pres-k}
  $ \A i. \skel_i \in \Skel_k $.
  Therefore we have an infinite descending chain
  of $\Skel_k$ which contradicts \cref{lm:skel-k-wf}.
\end{proof}

For a process $P$, its \emph{skeleton} $\skelOf(P)$ is a finite map assigning to each number $n$ the amount of communication prefixes at depth $n$.
\begin{definition}[Skeleton of~$P$]
\label{def:skel-of-P}
  Given $\skel, \skel_1,\skel_2\in\Skel$,
  we define:
  \begin{align*}
\skelOne(i) &\is
      \begin{cases}
        1 \CASE i=0
        \\
        0 \OTHERWISE
      \end{cases}
    &
    (\skel_1 \skelPlus \skel_2)(i) &\is
      \skel_1(i) + \skel_2(i)
    &
    (\skelShift{\skel})(i) &\is
      \begin{cases}
        \skel(0) \CASE i=0
        \\
        \skel(i-1) \CASE i>0
      \end{cases}
  \end{align*}

  The \emph{skeleton of a process~$P$},
  written $\skelOf(P)$,
  is then defined as:
  \begin{align*}
    \skelOf(P) &\is
    \begin{cases}
      \skelOne
        \CASE P = \fwd[x<-y] \lor P = \out x<>
      \\
      \skelOne \skelPlus \skelOf(Q_1) \skelPlus \skelOf(Q_2)
        \CASE P = \out x[y].(Q_1 \| Q_2) \lor P = \caseLR{x}{Q_1}{Q_2}
      \\
      \skelOne \skelPlus \skelOf(Q)
        \CASE P = \inp x().Q \lor P = \inp x(y).Q \lor P = \sel{x}{\ell}.Q
      \\
      \skelOf(Q_1) \skelPlus \skelOf(Q_2)
        \CASE P = \new x.(Q_1\|Q_2)
      \\
      \skelShift{\skelOf(Q)}
        \CASE P = \spw\senv.Q
    \end{cases}
  \end{align*}

Note that if $ \skel = \skelOf(P) $ then
  $ \A i.\skel(i) \geq \skel(i+1) $.
\end{definition}
For example, the skeleton
\[
  \skelOf(\new x.\big(\out x<> \| \spawn{\spvar}.\new y.(\out y<>\|\spawn{\spvar'}.\inp x().\inp y().\out k<>)\big))
  =
  \map{0 -> 5, 1 -> 4; 2 -> 3; \wtv -> 0}.
\]

\paragraph{The measure.}
Recall from the main part of the paper, that when computing a measure associated to the process we have to take special care of the top-level spawn prefix.
We define the measure function $\mu$ as follows.1
\begin{equation*}
    \mu(P) \is
    \begin{cases}
      \skelOf(Q)
        \CASE P = \spw\senv.Q
      \\
      \skelOf(P)
        \OTHERWISE
    \end{cases}
\end{equation*}

\begin{restatable}{lemma}{measureClosedCong}
  \label{lem:measure_closed_congr}
  If $P \congr Q$  then $\mu(P) = \mu(Q)$.
\end{restatable}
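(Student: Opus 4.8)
The plan is to reduce the statement to two separate invariance facts about the skeleton map $\skelOf$, and then reconcile them at the single point where $\mu$ and $\skelOf$ disagree, namely the skipping of a leading spawn prefix.

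First I would establish the auxiliary claim that $\skelOf$ itself is invariant under $\congr$: if $P \congr Q$ then $\skelOf(P) = \skelOf(Q)$. Since $\skelOf$ is defined in \Cref{def:skel-of-P} by a clause for each syntactic form that depends only on the skeletons of the immediate subterms (combined via $\skelPlus$ and $\skelShift{\cdot}$), it is automatically compatible with process contexts; hence it suffices to check the three generating rules of \Cref{fig:congr}. For \cref{cong-assoc-l,cong-assoc-r} both sides are cuts, and their skeletons are iterated $\skelPlus$-sums of $\skelOf(P)$, $\skelOf(Q)$, $\skelOf(R)$; because $\skelPlus$ is pointwise addition on $\Nat$, it is commutative and associative, so the two rearrangements yield the same skeleton. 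For \cref{cong-spawn-swap} both sides equal $\skelShift{\skelShift{\skelOf(Q)}}$, so the equality is immediate. A routine induction on the derivation of $\congr$ (reflexivity, symmetry, transitivity, and context closure) then gives the claim.

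Second, I would show that $\congr$ preserves the outermost constructor of a process, and in particular whether a process has the form $\spw{\spvar}.P'$. This again goes by induction on the derivation of $P \congr Q$: each instance of a generating rule relates two cuts (for \cref{cong-assoc-l,cong-assoc-r}) or two spawns (for \cref{cong-spawn-swap}); reflexivity and transitivity are trivial; and under context closure $C[P] \congr C[Q]$ the outermost constructor is that of $C$ when $C$ is non-trivial, and is supplied by the induction hypothesis when $C$ is the empty context.

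Finally I would combine the two facts. Given $P \congr Q$, the second fact splits the argument into two cases. If neither $P$ nor $Q$ is a leading spawn, then $\mu(P) = \skelOf(P) = \skelOf(Q) = \mu(Q)$ by the first fact. If both are, write $P = \spw{\spvar}.P'$ and $Q = \spw{\spvar'}.Q'$; then $\skelShift{\skelOf(P')} = \skelOf(P) = \skelOf(Q) = \skelShift{\skelOf(Q')}$, and since $\skelShift{\cdot}$ is injective (from $\skelShift{\skel_1} = \skelShift{\skel_2}$ one reads off $\skel_1(j) = \skel_2(j)$ for every $j$ by evaluating at $i = j+1$), we conclude $\skelOf(P') = \skelOf(Q')$, i.e.\ $\mu(P) = \mu(Q)$. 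The main obstacle is exactly this mismatch between $\mu$ and $\skelOf$ introduced by the leading-spawn clause: an argument using only $\skelOf$ fails because one cannot a priori exclude that $\congr$ relates a leading spawn to a non-spawn. The essential work is therefore the second invariance fact, which rules this out and lets the injectivity of $\skelShift{\cdot}$ close the both-spawns case.
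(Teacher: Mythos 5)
Your proposal is correct and follows essentially the same route as the paper: the paper's proof consists precisely of your two observations, namely that the congruence rules preserve whether the top-level construct is a spawn prefix and that they leave the depth of every communication prefix (hence the skeleton) unchanged. You merely make explicit the routine details the paper elides --- the case check of the three generating rules, closure under contexts, and the injectivity of $\skelShift{\cdot}$ to discharge the case where both sides carry a leading spawn.
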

\begin{proof}
  None of the congruences can change whether the top-level construct is a spawn prefix.
  Furthermore, none of the congruences change depth of any communication prefixes.
\end{proof}

\begin{restatable}{lemma}{measureCommReductions}
  \label{lem:measure_comm_reductions}
  The communication reductions strictly decrease the measure.
  That is, reductions \ref{red-unit-l}, \ref{red-comm-l}, \ref{red-comm-r}, \ref{red-case}, \ref{red-fwd-l}, \ref{red-fwd-r} decrease the measure $\mu$, even when occurring under arbitrary evaluation contexts.

  Similarly, reduction \ref{red-spawn-merge} strictly decreases the measure.
\end{restatable}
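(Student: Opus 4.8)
The plan is to separate the argument into a \emph{local} decrease at the redex, a \emph{monotonicity} principle that lifts it through evaluation contexts, and a final \emph{bridge} from the auxiliary skeleton $\skelOf$ to the measure $\mu$. First I would verify, for each rule taken with $\ectxt = [\hole]$, that the contractum $Q$ satisfies $\skelOf(Q) \skelLt \skelOf(P)$. Then I would establish strict monotonicity of the two skeleton operations and conclude closure under $\ectxt$ by induction. The subtle part, which I expect to be the main obstacle, is that $\mu$ ignores one outermost spawn, so the decrease of $\skelOf$ does not transfer to $\mu$ verbatim.

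For the local decrease I would just unfold \Cref{def:skel-of-P}. The communication rules \ref{red-unit-l}, \ref{red-comm-l}, \ref{red-comm-r}, \ref{red-case} erase the two dual depth-$0$ guards of the cut --- and, in \ref{red-case}, the discarded branch --- while introducing no spawn; hence $\skelOf(Q)$ lies pointwise below $\skelOf(P)$ and is strictly smaller at index $0$, so the topmost index at which the two skeletons differ witnesses $\skelOf(Q) \skelLt \skelOf(P)$. The forwarder rules \ref{red-fwd-l}, \ref{red-fwd-r} replace a forwarder plus a cut by a renaming $P\subst{x->y}$; since a renaming neither adds, removes, nor re-nests communication prefixes we have $\skelOf(P\subst{x->y}) = \skelOf(P)$, and again a single depth-$0$ prefix disappears.

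The rule \ref{red-spawn-merge} is where the one genuinely combinatorial fact is needed. Here $\skelOf(\spawn{\spb_1}.\spawn{\spb_2}.P) = \skelShift{\skelShift{\skelOf(P)}}$ and $\skelOf(\spawn{\spb_1 \merge \spb_2}.P) = \skelShift{\skelOf(P)}$, so it suffices to show $\skelOf(P) \skelLt \skelShift{\skelOf(P)}$ for every process $P$ and then apply monotonicity of $\skelShift$. For this sub-lemma I would use that $\skelOf(P)$ is non-increasing (noted after \Cref{def:skel-of-P}) and that $\skelOf(P)(0) \geq 1$, the latter by a one-line structural induction showing every process carries at least one communication prefix (the leaves $\out x<>$ and $\fwd[x<-y]$ already contribute $\skelOne$). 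Letting $k$ be the maximal depth of $P$, we then get $\skelShift{\skelOf(P)}(k+1) = \skelOf(P)(k) \geq 1 > 0 = \skelOf(P)(k+1)$ with equality above $k+1$, so $j = k+1$ witnesses $\skelOf(P) \skelLt \skelShift{\skelOf(P)}$.

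Finally I would check that $\skelPlus$ is strictly monotone in each argument (the witness $j$ is preserved) and that $\skelShift$ is strictly monotone (the witness moves to $j+1$), and induct on $\ectxt$ to obtain $\skelOf(\ectxt[Q]) \skelLt \skelOf(\ectxt[P])$. Transferring this to $\mu$ is the crux. When $\ectxt \neq [\hole]$ the outermost constructor of $\ectxt[P]$ and $\ectxt[Q]$ comes from $\ectxt$: if it is a spawn I strip it ($\mu$ then equals $\skelOf$ of the residual context, and I reuse $\skelOf$-monotonicity), and if it is a cut then $\mu = \skelOf$ directly. The awkward case is $\ectxt = [\hole]$ where the contractum \emph{exposes} a top-level spawn that the redex lacked (the unit and forwarder rules, when the residual begins with a spawn) or where both sides are spawns (\ref{red-spawn-merge} itself, whose contractum $\mu$ strips). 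In each such case I reconcile the stripped and unstripped skeletons using the same sub-lemma $\skelOf(R) \skelLt \skelShift{\skelOf(R)}$ together with transitivity of $\skelLt$; closure under $\congr$ is already furnished by \Cref{lem:measure_closed_congr}.
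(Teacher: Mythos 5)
Your proposal is correct and follows essentially the same route as the paper's proof, which simply observes that each of these reductions removes communication prefixes at a fixed depth and then asserts that the measure still decreases in the cases (\ref{red-unit-l}, \ref{red-fwd-l}, \ref{red-fwd-r}) where the contractum exposes a top-level spawn. Your write-up is in fact more careful: the sub-lemma $\skelOf(P) \skelLt \skelShift{\skelOf(P)}$ (resting on every process containing at least one communication prefix) and the explicit monotonicity of $\skelPlus$ and $\skelShift$ are exactly the details the paper leaves implicit, and they check out.
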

\begin{proof}
  Each of those reductions reduce the amount of communication prefixes at a given depth, and, as such, decrease the skeleton of the process.
  The only thing that we need to note is the special spawn prefix condition on $\mu$ in cases \ref{red-unit-l}, \ref{red-fwd-l}, and \ref{red-fwd-r}.
  In those cases, the reduction might introduce a spawn prefix in front of the process.
  However, in that case the measure $\mu$ will still strictly decrease.
\end{proof}

As we have seen, the spawn reductions \ref{red-spawn}, \ref{red-spawn-l}, and \ref{red-spawn-r} might temporarily increase the measure, but if we repeat them long enough then the measure will actually decrease.
\begin{restatable}{lemma}{measureRhoReductions}
  \label{lem:measure_rho_reductions}
  Let $\ectxt_0[\hole]$ be a non-empty evaluation context which may contain a $(\spawn{\spvar_0}.[\hole])$ sub-context only at the top level.
  Let $\ectxt_0[\spawn{\spvar}.Q]$ be a process.
  In other words, $\spawn{\spvar}$ is an active prefix spawn at the least depth in $\ectxt_0[\spawn{\spvar}.Q]$.
  Then there exists a spawn binding $\spvar'$, and an evaluation context $\ectxt_1[\hole]$ which is free of $(\spawn{\spvar_1}.[\hole])$ sub-contexts for any $\spvar_1$, such that
  \begin{align*}
    \ectxt_0[\spawn{\spvar}.Q] \redto* \spawn{\spvar'}.\ectxt_1[Q]
    &&
  \text{and}
    &&
    \mu(\ectxt_0[\spawn{\spvar}.Q]) > \mu(\spawn{\spvar'}.\ectxt_1[Q]).
  \end{align*}
\end{restatable}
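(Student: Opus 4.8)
The plan is to decouple the statement into two independent phases: first I would build the reduction sequence $\ectxt_0[\spawn{\spvar}.Q] \reddStar \spawn{\spvar'}.\ectxt_1[Q]$ and pin down the shape of $\ectxt_1$, and only afterwards compare the two measures. Decoupling is essential because, as already observed around \Cref{eq:dummy_spawn_red}, the measure may temporarily \emph{increase} along the propagation; hence I would never track $\mu$ through the construction, but recompute it afresh at the two endpoints. Throughout I would read the hypothesis that $\ectxt_0$ carries a spawn sub-context only at the top level as saying that every communication prefix outside $Q$ --- in particular every prefix of the provider processes sitting in the cuts of $\ectxt_0$ --- occurs at spawn-depth $0$ (or, under the optional top-level $\spawn{\spvar_0}$, at a depth that $\mu$ discards). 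Equivalently, the only genuinely deep prefixes of $\ectxt_0[\spawn{\spvar}.Q]$ are those of $Q$, pushed down one extra level by the guarding $\spawn{\spvar}$.

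\textbf{Phase A (the reduction).} I would induct on the structure of $\ectxt_0$, peeling its outermost former and using the induction hypothesis to first bring the spawn to the top of the inner context. In the base case $\ectxt_0 = \spawn{\spvar_0}.[\hole]$ a single application of \Cref{red-spawn-merge} yields $\spawn{\spvar_0 \merge \spvar}.Q$, so $\spvar' = \spvar_0 \merge \spvar$ and $\ectxt_1 = [\hole]$. For an outermost cut $\new y.(P \|_y \ectxt_0')$ the induction hypothesis together with \Cref{red-eval-ctxt} gives $\new y.(P \|_y \spawn{\spvar''}.\ectxt_1'[Q])$, and I push the spawn past this cut: if $y \in \dom(\spvar'')$ by \Cref{red-spawn} (duplicating $P$ into copies $\idx{P}{1},\dots,\idx{P}{n}$ and extending the binding over $\fn(P)$), and if $y \notin \dom(\spvar'')$ by \Cref{red-spawn-r}. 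For an outermost cut $\new y.(\ectxt_0' \|_y P)$ I use \Cref{red-spawn-l}, whose side condition $y \notin \dom(\spvar'')$ holds because the propagated binding only ever acts on \emph{used} names, never on the channel $y$ \emph{provided} by the left component (the well-scopedness invariant, cf.\ the typed statement \Cref{lem:fn_typed_cut_spawn}). Since each $\idx{P}{i}$ is a copy of a spawn-free provider, no spawn is ever created on the path to the hole, so the resulting $\ectxt_1$ is free of $\spawn{}.[\hole]$ sub-contexts; crucially, $Q$ is never duplicated and remains intact at the bottom of the chain, so $\skelOf(Q)$ is preserved.

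\textbf{Phase B (the measure).} Here I would compute both measures from the compositional clauses for $\skelOf$ and the fact, established in Phase A, that every non-$Q$ contribution is a spawn-free process, hence has a skeleton supported only at depth $0$. Writing $d \is \max\{ i \mid \skelOf(Q)(i) > 0 \}$ (well defined, as $Q$ has at least one prefix), repeated use of the cut clause gives $\mu(\spawn{\spvar'}.\ectxt_1[Q]) = \skelOf(\ectxt_1[Q]) = \skelOf(Q) \skelPlus c$ with $c$ supported at $0$, so its value at $d+1$ is $0$. For the source, the key point is that $\ectxt_0$ is \emph{non-empty}, so $\spawn{\spvar}$ is not the outermost former and $\mu$ --- which discards only a top-level spawn --- does count it; stripping the optional top spawn and peeling the cuts therefore leaves $\skelShift{\skelOf(Q)} \skelPlus c_0$ with $c_0$ supported at $0$, whose value at $d+1$ is $\skelOf(Q)(d) > 0$. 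In the target the propagated $\spvar'$ has become top-level and is discarded by $\mu$, which is exactly what removes this level. Comparing at $j = d+1$, the two skeletons agree (both $0$) for all $i > d+1$ while the target is strictly smaller at $d+1$, so by definition of $\skelLt$
\[
  \mu\bigl(\spawn{\spvar'}.\ectxt_1[Q]\bigr)
  \;\skelLt\;
  \mu\bigl(\ectxt_0[\spawn{\spvar}.Q]\bigr),
\]
as required.

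\textbf{Main obstacle.} The delicate point is Phase B in the presence of \Cref{red-spawn}: that rule \emph{duplicates} providers, so the total number of prefixes --- and in particular the count at depth $0$ --- strictly grows. The reason the measure nonetheless decreases is purely structural: the order $\skelLt$ is dominated by the deepest level, and the hypothesis guarantees that the duplicated providers are shallow, so they only inflate counts \emph{below} the decisive depth $d+1$, which $\skelLt$ ignores. Getting this interplay right --- isolating that the only prefixes at depth $d+1$ are those of the un-duplicated $Q$, and that bringing the spawn to the top erases exactly that level while non-emptiness guarantees the level was present to begin with --- is the crux; the remaining work (the scoping side conditions for \Cref{red-spawn-l} and the bookkeeping that $\ectxt_1$ stays spawn-free) is routine.
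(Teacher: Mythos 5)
Your Phase~A is essentially the paper's argument: the paper likewise establishes the reduction claim by induction on the evaluation context (peeling from the inside out where you peel from the outside in, which is immaterial) and derives the measure inequality separately from it. The genuine gap is in Phase~B, in how you read the hypothesis. The condition that $\ectxt_0$ contains a $\spawn{\spvar_0}.[\hole]$ sub-context only at the top level constrains only the \emph{spine} of the context, i.e., the path to the hole; it says nothing about the provider processes $P$ sitting in the cuts $\new y.(P \|_y \ectxt')$ of $\ectxt_0$, which are processes, not sub-contexts. Such a $P$ may contain spawn prefixes \emph{guarded} by communication prefixes --- these are not active, so they do not contradict ``$\spawn{\spvar}$ is an active spawn at the least depth'' --- and then $\skelOf(P)$ is supported at positive depths. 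Your claim that every non-$Q$ contribution has a skeleton supported only at depth $0$ is therefore not licensed by the hypothesis, and it is exactly the load-bearing step of your comparison at $j=d+1$: when \Cref{red-spawn} duplicates such a provider, the counts at depths $\geq d+1$ can go \emph{up}. Concretely, for $\ectxt_0 = \new x.(P \| [\hole])$ with $P = \inp y().\spawn{\emptyset}.\spawn{\emptyset}.\out x<>$, $\spvar = \map{x -> \set{x_1,x_2}}$ and $Q = \inp{x_1}().\inp{x_2}().\out z<>$ (so $d=0$), the source measure is $\map{0 -> 5; 1 -> 4; 2 -> 1; \wtv -> 0}$, while after \Cref{red-spawn} the target measure is $\map{0 -> 7; 1 -> 2; 2 -> 2; \wtv -> 0}$, which is strictly \emph{greater} under $\skelLt$ (compare at depth $2$). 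So the decisive-depth argument does not go through from the hypothesis as stated.

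To your credit, you have correctly located the crux --- the tension between \Cref{red-spawn}'s duplication of providers and the depth-dominated order $\skelLt$ --- and the paper's own justification of the inequality is a single clause (``the latter process has less spawn prefixes'') that is no more explicit on this point. But a complete Phase~B needs either an additional invariant guaranteeing that the providers encountered during propagation carry no residual spawn prefixes at positive depth, or a hypothesis/measure refined to account for them; the reading of the hypothesis you adopt does not supply this, so as written the measure comparison is unjustified.
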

\begin{proof}
  We first show that the last condition follows from the previous ones.
  By definition, ${\mu(\spawn{\spvar'}.\ectxt_1[Q]) = \skelOf(\ectxt_1[Q])}.$
  We then consider two situations.
  If $\ectxt_0[\spawn{\spvar}.Q]$ does not have a spawn prefix at the top level, then
  $$\mu(\ectxt_0[\spawn{\spvar}.Q]) = \skelOf(\ectxt_0[\spawn{\spvar}.Q]) > \skelOf(\ectxt_1[Q]),$$
  as the later process has less spawn prefixes.
  On the other hand, if $\ectxt_0[\spawn{\spvar}.Q]$ begins with a spawn prefix at the top level, that prefix cannot be $\spawn{\spvar}$ itself, as $\ectxt_0$ is non-empty.
  Then the process $\ectxt_0[\spawn{\spvar}.Q]$ is of the form $\spawn{\spvar'}.\ectxt[\spawn{\spvar}.Q]$, and we have
  \[
    \mu(\spawn{\spvar'}.\ectxt[\spawn{\spvar}.Q]) = \skelOf(\ectxt[\spawn{\spvar}.Q]) > \skelOf(\ectxt_1[Q]).
  \]

  Thus, we only need to find an adequate context $\ectxt_1[\hole]$ and establish the reduction.
  We prove this by induction on the size of the evaluation context $\ectxt_0[\hole]$.
  We do a case analysis on the ``tail'' of the evaluation context.

  \begin{induction}
    \step[Case $\ectxt_0$ is of the form ${\ectxt_0'[\spawn{\spvar'}.{[\hole]}]}$]
    If $\ectxt_0$ contains the $\spawn{\spvar'}.[\hole]$, then by our assumption, it is on the top level.
    That means that $\ectxt_0'$ is empty.
    We then apply the reduction \ref{red-spawn-merge}:
    \[
      \spawn{\spvar'}.\spawn{\spvar}.Q \redd \spawn{\spvar' \merge \spvar}.Q.
    \]
    Then pick $\ectxt_1$ to be empty.
    \step[Case ${\ectxt_0[\spawn{\spvar}.Q]}$ is of the form ${\ectxt_0'[\new x. (P \| \spawn{x \mapsto x_1,x_2}.Q)]}$]
    We then have a reduction
    \[
      \ectxt_0'[\new x. (P \| \spawn{x -> x_1,x_2}.Q)]
      \redd
      \ectxt_0'[
        \spawn{z \mapsto z_1, z_2}.
          \new x_1.(
             \idx{P}{1}
          \| \new x_2. (\idx{P}{2} \| Q)
          )
      ],
    \]
    if $\fn(P) = \{x, z\}$.

    If $\ectxt_0'$ is empty, then we are done.
    If it is not, then by the induction hypothesis we then have
    \begin{multline*}
      \ectxt_0'[
        \spawn{z \mapsto z_1, z_2}.
          \new x_1.(\idx{P}{1} \| \new x_2. (\idx{P}{2} \| Q))
      ]
      \redto*
      {}\\
      \spawn{\spvar'}.
        \ectxt_1[
          \new x_1.(\idx{P}{1} \| \new x_2. (\idx{P}{2} \| Q))
        ],
    \end{multline*}
    which we chain with the original \ref{red-spawn} reduction.

    Other cases are handled similarly.
    \qedhere
  \end{induction}
\end{proof}

\weakNormThm*
\begin{proof}
  We give a normalization procedure as follows.
  Given a process $P$, we consider its possible reductions, and apply them in order that would decrease the measure $\mu$.
  We repeat this until we reach a normal form.
  Since the measure is strictly decreasing, this procedure will terminate by \Cref{lm:skel-wf}.

  Thanks to \Cref{lem:measure_closed_congr} we can consider possible reductions of $P$ up to congruence.
  Let us consider which reductions can apply to $P$.
  \begin{casesplit}
    \case[\ref{red-spawn}, \ref{red-spawn-l}, \ref{red-spawn-r}, or \ref{red-spawn-merge}]
    In that case we find a spawn prefix, involved with such a reduction, with the least depth.
    Then this spawn prefix will satisfy the conditions of \Cref{lem:measure_rho_reductions}, and we pull out this active prefix upfront, decreasing the measure.
    \case[communication or forwarder reductions]
    In that case we apply that exact reduction, which by \Cref{lem:measure_comm_reductions} will decrease the measure. \qedhere
  \end{casesplit}
\end{proof}

 \section{Encoding the \alcalc}
\label{sec:appendix:translation}
The \alcalc type system follows the presentations of \alcalc by \citet{ohearn:2003} and \citet[Chapter 2]{pym:2002}, but we adjusted the elimination rule for additive units to match the corresponding rule for multiplicative units.
For reasons of space, we have omitted products and coproducts from \Cref{sec:translation}; these are present here.
\begin{figure}[t]
\adjustfigure
\begin{proofrules}
\infer*[lab=N-id]
{}
{x : A \vdash x : A}
\label{rule:N-id}

\infer*[lab=N-$\equiv$]
{\Delta \vdash M : A \and \Delta \equiv \Gamma}
{\Gamma \vdash M : A}
\label{rule:N-equiv}

\infer*[lab=N-W]
{\Gamma(\Delta) \vdash M : A}
{\Gamma(\Delta\band \Delta') \vdash M : A}
\label{rule:N-W}

\infer*[lab=N-C]
{\Gamma(\Delta^{(1)}\band \Delta) \vdash M : A}
{\Gamma(\Delta) \vdash \substS{M}{\ident(\Delta^{(1)})}{\ident(\Delta)} : A}
\label{rule:N-C}

\infer*[lab=$\wand$I]
{\Delta\bsep x : A \vdash M : B}
{\Delta \vdash \Lam x. M : A \wand B}
\label{rule:wand-intro}

\infer*[lab=$\to$I]
{\Delta\band x : A \vdash M : B}
{\Delta \vdash \Alp x. M : A \to B}
\label{rule:impl-intro}

\infer*[lab=$\wand$E]
{\Delta_1 \vdash M : A \wand B
\and \Delta_2 \vdash N : A}
{\Delta_1\bsep \Delta_2 \vdash M\ N : B}
\label{rule:wand-elim}

\infer*[lab=$\to$E]
{\Delta_1 \vdash M : A \to B
\and \Delta_2 \vdash N : A}
{\Delta_1\band \Delta_2 \vdash M @ N : B}
\label{rule:impl-elim}

\infer*[lab=$\mOne$I]
{}
{\empM \vdash \unitM : \mOne}
\label{rule:mone-intro}

\infer*[lab=$\aOne$I]
{}
{\empA \vdash \unitA : \aOne}
\label{rule:aone-intro}

\infer*[lab=$\mOne$E]
{\Delta \proves M : \mOne
\and \Gamma(\empM) \proves N : A}
{\Gamma(\Delta) \proves \Let \unitM = M in N : A}
\label{rule:mone-elim}

\infer*[lab=$\aOne$E]
{\Delta \proves M : \aOne
\and \Gamma(\empA) \proves N : A}
{\Gamma(\Delta) \proves \Let \unitA = M in N : A}
\label{rule:aone-elim}

\infer*[lab=$\ast$I]
{\Delta_1 \proves M : A \and
  \Delta_2 \proves N : B}
{\Delta_1\bsep \Delta_2 \proves \langle M, N \rangle : A \ast B}
\label{rule:ast-intro}

\infer*[lab=$\wedge$I]
{\Delta_1 \proves M : A \and
  \Delta_2 \proves N : B}
{\Delta_1\band \Delta_2 \proves (M, N) : A \wedge B}
\label{rule:conj-intro}

\infer*[lab=$\ast$E]
{\Delta \proves M : A \ast B \and
  \Gamma(x : A\bsep y : B) \proves N : C}
{\Gamma(\Delta) \proves \Let \langle x, y\rangle = M in N : C}
\label{rule:ast-elim}

\infer*[lab=$\wedge$E]
{\Delta \proves M : A_1 \wedge A_2 \and i \in \set{0,1}
}
{\Delta \proves \pi_i(M) : A_i}
\label{rule:conj-elim}

\infer*[lab=$\vee$I]
{\Delta \proves N : A_i \and i \in \{1,2\}}
{\Delta \proves \inj_i(N) : A_1 \vee A_2}
\label{rule:disj-intro}

\infer*[lab=$\vee$E]
{\Delta \proves M : A_1 \vee A_2
\and
\Gamma(x_1 : A_1) \proves N_1 : C
\and
\Gamma(x_2: A_2) \proves N_2 : C}
{\Gamma(\Delta) \proves \Case M of x_1 => N_1 or x_2=> N_2}
\label{rule:disj-elim}

\infer*[lab=N-cut]
{\Delta \proves M : A \and
\Gamma(x : A) \proves N : C}
{\Gamma(\Delta) \proves \substS{N}{x}{M} : C}
\label{rule:N-cut}
\end{proofrules}

 \caption{Typing system for \alcalc.}
\label{fig:alpha_lambda_typing_full}
\end{figure}
The type system is given in \cref{fig:alpha_lambda_typing_full}.
Note that the \ruleref{N-cut} is admissible (cf.\ \cite{ohearn:2003}).
\begin{figure}
\adjustfigure
\rulesection*{Primitive reductions}
\begin{proofrules}
  \infer*[lab=red-beta-$\lambda$]
  {}
  {(\Lam x. M)\ N \alredto \substS{M}{x}{N}}
  \label{rule:red-beta-lam}

  \infer*[lab=red-beta-$\alpha$]
  {}
  {(\Alp x. M) @ N \alredto \substS{M}{x}{N}}
  \label{rule:red-beta-alp}

  \infer*[lab=red-proj,Right=$i \in \set{1,2}$]
  { }
  {\pi_i(M_1, M_2) \alredto M_i}
  \label{rule:red-proj}

  \infer*[lab=red-unitM]
  {}
  {\Let \unitM = \unitM in M \alredto M}
  \label{rule:red-unitM}

  \infer*[lab=red-unitA]
  {}
  {\Let \unitA = \unitA in M \alredto M}
  \label{rule:red-unitA}

  \infer*[lab=red-pair]
  {}
  {\Let \langle x,y \rangle = \langle M_1 , M_2 \rangle in N \alredto N\subst{x->M_1,y->M_2}}
  \label{rule:red-pair}

  \infer*[lab=red-case,Right=$i \in \set{1,2}$]
  { }
  {\Case \inj_i(M) of x_1 => N_1 or x_2 => N_2 \alredto N_i\subst{x_i -> M}}
  \label{rule:red-case}
\end{proofrules}

\rulesection{Lifted reductions}
\begin{proofrules}
  \infer
  {M \alredto M'}
  {M\ N \alredto M'\ N}

  \infer
  {M \alredto M'}
  {M @ N \alredto M' @ N}

  \infer
  {M \alredto M'}
  {\Let p = M in N \alredto \Let p = M' in N}

  \infer
  {M \alredto M'}
  {\pi_i M \alredto \pi_i M'}

  \infer
  {M \alredto M'}
  {\Case M of x_1 => N_1 or x_2 => N_2 \alredto \Case M' of x_1 => N_1 or x_2 => N_2}
\end{proofrules}
\rulesectionend
 \caption{Reduction rules for \alcalc.}
\label{fig:alcalc_red_rules_full}
\end{figure}
We consider call-by-name reduction strategy, the reduction relation for which is given in \cref{fig:alcalc_red_rules_full}.

The translation function $\bTrans_z(-)$ is defined by recursion on the typing derivation and is given in \cref{fig:translation_big,fig:translation_big_two}.
\begin{figure}
  \adjustfigure[\footnotesize]
  \centering
  \begin{tabular}{@{}c c@{}}
  \toprule
  \alcalc typing of~$M_0$ & \piBI encoding~$\bTrans_z(M_0)$
\\ \midrule
  $ \infer{}{x : A \proves x : A} $
  &
  $ \infer{}{x : A \proves \fwd[z<-x] :: z : A} $
  \\[1em]$ \infer{
    \Gamma(\Delta) \proves M : A
  }{
    \Gamma(\Delta \band \Delta') \proves M : A
  } $
  &
  $ \infer{
    \Gamma(\Delta) \proves \Trans_z(M) :: z : A
  }{
    \Gamma(\Delta\band \Delta') \proves
    \spawn{x -> \emptyset | x \in \ident(\Delta')}. \Trans_z(M) :: z : A
  } $
  \\[2em]$ \infer{
    \Gamma(\idx{\Delta}{1} \band \idx{\Delta}{2}) \proves M : A
  }{
    \Gamma(\Delta) \proves M\subst{\idx{\Delta}{1}->\Delta,\idx{\Delta}{2}->\Delta} : A
  } $
  &
  $ \infer{
    \Gamma(\idx{\Delta}{1} \band \idx{\Delta}{2}) \proves \proves \Trans_z(M) :: z : A
  }{
    \Gamma(\Delta) \proves
    \spawn{x -> x_1,x_2 | x \in \ident(\Delta)}. \Trans_z(M) :: z : A
  } $
  \\[2em]$ \infer{
    \Delta \bsep x : A \proves M : B
  }{
    \Delta \proves \Lam x. M : A \wand B
  } $
  &
  $ \infer{
    \Delta \bsep x : A \proves \Trans_z(M) :: z : B
  }{
    \Delta \proves \inp z(x).\Trans_z(M) :: z : A \wand B
  } $
  \\[2em]$ \infer{
    \Delta \band x : A \proves M : B
  }{
    \Delta \proves \Alp x. M : A \to B
  } $
  &
  $ \infer{
    \Delta \band x : A \proves \Trans_z(M) :: z : B
  }{
    \Delta \proves \inp z(x).\Trans_z(M) :: z : A \to B
  } $
  \\[2em]$ \infer{
    \Delta_1 \proves M : A \and \Delta_2 \proves N : B
  }{
    \Delta_1 \bsep \Delta_2 \proves \langle M, N \rangle : A \ast B
  } $
  &
  $ \infer{
    \Delta_1 \proves \Trans_y(M) :: y : A
    \and
    \Delta_2 \proves \Trans_z(N) :: z : B
  }{
    \Delta_1 \bsep \Delta_2 \proves
    \out z[y].(\Trans_y(M) \| \Trans_z(N)):: z :: A \ast B
  } $
  \\[2em]$ \infer{
    \Delta_1 \proves M : A \and \Delta_2 \proves N : B
  }{
    \Delta_1 \band \Delta_2 \proves (M, N) : A \land B
  } $
  &
  $ \infer{
    \Delta_1 \proves \Trans_y(M) :: y : A
    \and
    \Delta_2 \proves \Trans_z(N) :: z : B
  }{
    \Delta_1 \band \Delta_2 \proves
    \out z[y].(\Trans_y(M) \| \Trans_z(N)):: z :: A \land B
  } $
  \\ \bottomrule
\end{tabular}

   \caption{Translation from \alcalc to \piBI (1/2).}
  \label{fig:translation_big}
\end{figure}
\begin{sidewaysfigure}
  \adjustfigure[\footnotesize]
  \centering
  \begin{tabular}{@{}c c@{}}
  \toprule
  \alcalc typing of~$M_0$ & \piBI encoding~$\bTrans_z(M_0)$
  \\ \midrule
  $ \infer{
    \Delta_1 \proves M : A \wand B
    \and
    \Delta_2 \proves N : A
  }{
    \Delta_1 \bsep \Delta_2 \proves M\ N : B
  } $
  &
  $ \infer{
    \Delta_1 \proves \Trans_x(M) :: x : A \wand B
    \and
    \infer*{
      \Delta_2 \proves \Trans_y(N) :: y : A
      \and
      x : B \proves \fwd[z<-x] :: z : B
    }{
      x : A \wand B \bsep \Delta_2 \proves
      \out x[y].\big(\Trans_y(N) \| \fwd[z<-x]\big) :: z : B
    }
  }{
    \Delta_1 \bsep \Delta_2 \proves
    \new x.\bigl(
      \Trans_x(M) \| \out x[y].(\Trans_y(N) \| \fwd[z<-x])
    \bigr) :: z : B
  } $
  \\[2em]$ \infer{
    \Delta_1 \proves M : A \to B
    \and
    \Delta_2 \proves N : A
  }{
    \Delta_1 \band \Delta_2 \proves M\ N : B
  } $
  &
  $ \infer{
    \Delta_1 \proves \Trans_x(M) :: x : A \to B
    \and
    \infer*{
      \Delta_2 \proves \Trans_y(N) :: y : A
      \and
      x : B \proves \fwd[z<-x] :: z : B
    }{
      x : A \to B \band \Delta_2 \proves
      \out x[y].\big(\Trans_y(N) \| \fwd[z<-x]\big) :: z : B
    }
  }{
    \Delta_1 \band \Delta_2 \proves
    \new x.\bigl(
      \Trans_x(M) \| \out x[y].(\Trans_y(N) \| \fwd[z<-x])
    \bigr) :: z : B
  } $
  \\[2em]$ \infer{
    \Delta \proves M : A_1 \land A_2
  }{
    \Delta \proves \proj_1(M) : A_1
  } $
  &
  $ \infer{
    \Delta \proves \Trans_{x_2}(M) :: x_2 :  A_1 \land A_2
    \and
    \infer*{
      \infer*{
        x_1 : A_1 \proves \fwd[z<-x_1] :: z : A_1
      }{
        x_1 : A_1 \band x_2 : A_2 \proves
        \spawn{x_{2} -> \emptyset}.\fwd[z<-x_1] :: z : A_1
      }
    }{
      x : A_1 \land A_2 \proves
      x_2(x_1).\spawn{x_{2} -> \emptyset}.\fwd[z<-x_1] :: z : A_1
    }
  }{
    \Delta \proves
    \new x_2. \big(
      \Trans_{x_2}(M) \| x_2(x_1).\spawn{x_{2} -> \emptyset}. \fwd[z<-x_1]
    \big) :: z : A_1
  } $
  \\[2em]$ \infer{
    \Delta \proves M : A_1 \ast A_2
  \and \bctxt(x : A_1\bsep y:A_2) \proves N : C
  }{
    \bctxt(\Delta) \proves \Let \langle x, y\rangle = M in N : C
  } $
  &
  $ \infer{
    \Delta \proves \Trans_{y}(M) :: y :  A_1 \ast A_2
    \and
    \infer*{
        \bctxt(x : A_1 \bsep y : A_2) \proves \Trans_z(N) :: z : C
    }{
        \bctxt(y : A_1 \ast A_2) \proves \inp y(x).\Trans_z(N) :: z : C
    }
    }{
        \bctxt(\Delta) \proves \new y. (\Trans_y(M) \| \inp y(x).\Trans_z(N)) :: z : C
    } $
  \\[2em]$ \infer{
    \Delta \proves N : A_1
  }{
    \Delta \proves \inj_1(N) : A_1 \lor A_2
  } $
  &
  $ \infer{
    \Delta \proves \Trans_{z}(N) :: z :  A_1
    }{
    \Delta \proves \sendInl{z}.\Trans_z(N) :: z : A_1 \lor A_2
    } $
  \\[2em]$ \infer{
    \Delta \proves M : A_1 \lor A_2
  \and \bctxt(x_1 : A_1) \proves N_1 : C
  \and \bctxt(x_2 : A_2) \proves N_2 : C
  }{
  \bctxt(\Delta) \proves \Case M of x_1 => N_1 or x_2 => N_2
  } $
  &
  $ \infer{
    \Delta \proves \Trans_{x}(M) :: x :  A_1 \lor A_2
    \infer*{
        \bctxt(x_1 : A_1) \proves \Trans_{z}(N_1) :: z :  C
         \and
        \bctxt(x_2 : A_2) \proves \Trans_z(N_2) :: z :  C
      }{
      \bctxt(x : A_1 \lor A_2) \proves \caseLR{x}{\Trans_{z}(N_1)\subst{x_1 -> x}}{\Trans_{z}(N_2)\subst{x_2-> x}} :: z : C
      }
    }{
    \bctxt(\Delta) \proves \new x. (\Trans_{x}(M) \| \caseLR{x}{\Trans_{z}(N_1)\subst{x_1 -> x}}{\Trans_{z}(N_2)\subst{x_2-> x}})
    :: z : C
    } $
  \\ \bottomrule
\end{tabular}

   \caption{Translation from \alcalc to \piBI (2/2).}
  \label{fig:translation_big_two}
\end{sidewaysfigure}

\subsection{Operational correspondence}
We split the proof of completeness into two parts.
First, we show that if a term can reduce, then this reduction is matched by the translated process, and the resulting term and process diverge up to substitution lifting.
\begin{restatable}[Basic completeness]{lemma}{basicCompleteness}
  \label{lem:basic_completeness}
  Given $\bunch \proves M : A$, if $M \alredto N$, then there exists $Q$ such that $\Trans_z(M) \reddStar Q \sublift N$.
\end{restatable}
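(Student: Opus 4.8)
The plan is to proceed by induction on the derivation of $M \alredto N$, handling one case per primitive reduction and per congruence (lifted) rule of \Cref{fig:alcalc_red_rules_full}. The conceptual crux is that the source calculus substitutes \emph{eagerly}---a $\beta$-step copies its argument into every occurrence of the bound variable---whereas the translation substitutes \emph{lazily}, keeping the translated argument attached through a private channel and duplicating it on demand through a spawn. The relation $\sublift$ is exactly the bridge: in each case I will fire a suitable sequence of $\piBI$ reductions out of $\Trans_z(M)$ and then recognise the result as an instance of the canonical shape defining $\sublift$ (a block of weakening/contraction spawns, then a nest of cuts against direct translations $\Trans_{x_i}(N_i)$, then a body $\Trans_z(M')$).

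For the base cases, the representative one is \Cref{rule:red-beta-lam}, i.e.\ $(\Lam x.M)\ N \alredto \substS{M}{x}{N}$. Unfolding the $\wand$-introduction and $\wand$-elimination clauses of the translation gives
\[
  \Trans_z((\Lam x.M)\ N)
  = \new w.\bigl(\inp w(x).\Trans_w(M) \|_w \out w[x].(\Trans_x(N) \| \fwd[z<-w])\bigr),
\]
which first fires \Cref{red-comm-r} to exchange the fresh channel and then \Cref{red-fwd-r} to absorb the forwarder, leaving $\new x.(\Trans_x(N) \|_x \Trans_z(M))$. This is precisely the one-cut, zero-spawn instance of the canonical shape (body $M$, argument $N$, channel $x$), so it lifts $\substS{M}{x}{N}$, as required. \Cref{rule:red-beta-alp} is identical because the translations of $\to$ and $\wand$ coincide as processes, and the pair, case and unit reductions are analogous: a single communication (\Cref{red-comm-l}, \Cref{red-case}, or \Cref{red-unit-l}) followed, where needed, by forwarder eliminations leaves a nest of cuts matching exactly the substitutions produced by the source rule. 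The projection \Cref{rule:red-proj} is the only base case that exercises the spawn machinery: after the communication the encoding of $\pi_1$ leaves a weakening spawn $\spawn{x_2->\emptyset}$ guarding the surviving component, which I propagate outward with \Cref{red-spawn-r}, discharge with the nullary ($n=0$) instance of \Cref{red-spawn}, and finally clear with \Cref{red-fwd-r}, arriving at $\spawn{\spvar'}.\Trans_z(M_1) \sublift M_1$, where $\spvar'$ weakens $\fn(M_2)$ exactly as demanded by the context splitting in the typing of the redex.

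For the congruence rules, say $M\ N \alredto M'\ N$ from $M \alredto M'$, I apply the induction hypothesis inside the evaluation context $\new w.([\hole] \|_w \out w[x].(\Trans_x(N) \| \fwd[z<-w]))$ via \Cref{red-eval-ctxt}, yielding $\Trans_z(M\ N) \reddStar \new w.(Q' \|_w R)$ with $Q' \sublift M'$ and $R = \out w[x].(\Trans_x(N) \| \fwd[z<-w])$. Unfolding $Q'$ into its canonical shape, the leading spawns and the cuts providing $w$ can be extruded past the restriction on $w$: the spawns by \Cref{red-spawn-l} (using that $w$, being the channel $Q'$ provides, lies outside every $\dom(\spvar_i) \union \restrOf(\spvar_i)$), and the cuts by the commuting congruences \Cref{cong-assoc-l} and \Cref{cong-assoc-r}. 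This reassembles the innermost composition into $\new w.(\Trans_w(M'') \|_w R) = \Trans_z(M''\ N)$ and puts the whole term in canonical shape for $M'\ N$. Crucially, the free-variable sets of $M$ and $N$ are disjoint by the context splitting in \Cref{rule:wand-elim}, so no extruded spawn or cut can capture a name of $N$; hence the delayed substitutions and spawns act only on $M''$, and the lifted term is indeed $(M''\ N)\{\dots\} = M'\ N$. The remaining congruence cases (under $@$, $\Let$, $\pi_i$ and $\Case$) follow the same template.

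I expect the congruence cases to be the main obstacle. Reorganising the process delivered by the induction hypothesis back into the rigid canonical shape of $\sublift$ demands a careful scope-extrusion argument: each use of \Cref{red-spawn-l}, \Cref{cong-assoc-l} and \Cref{cong-assoc-r} rests on a free-name side condition, and the disjointness supplied by context splitting is what guarantees the extruded spawns and cuts never interfere with the argument subterm. By comparison the base cases are essentially direct computations, with projection the only one where a weakening spawn must be threaded through \Cref{red-spawn-r} and the nullary \Cref{red-spawn}.
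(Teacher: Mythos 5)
Your proposal is correct and follows essentially the same route as the paper's proof: induction on the derivation of $M \alredto N$, direct reduction computations for the primitive rules (with projection being the only base case that exercises the spawn machinery), and, for the lifted rules, the induction hypothesis applied under an evaluation context followed by re-normalisation into the canonical $\sublift$ shape via the associativity congruences and spawn extrusion. If anything, you are more explicit than the paper about pushing the leading spawn prefixes of the lifted process past the enclosing cut with \Cref{red-spawn-l}, a step the paper's displayed computation for the application case elides.
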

\begin{proof}
    By induction on the derivation of $M \alredto N$.
    There are six base cases:
    \begin{casesplit}
        \case[\ref{rule:red-beta-lam}]
        We have $(\Lam x. M)\ N \alredto \substS{M}{x}{N}$.
        We then have
        \begin{align*}
          \Trans_z((\Lam x. M)\ N) & = \new y.(\inp y(x).\Trans_y(M) \| \out y[x].(\Trans_x(N) \| \fwd[z<-y])) \\
          & \redto \new y.(\new x.(\Trans_x(N) \| \Trans_y(M)) \| \fwd[z<-y]) \\
          & \redto \new x.(\Trans_x(N) \| \Trans_z(M))
            \sublift \substS{M}{x}{N}.
        \end{align*}

        \case[\ref{rule:red-beta-alp}]
        Analogous to Case~\ref{rule:red-beta-lam}.

        \case[\ref{rule:red-proj}]
        We have $\pi_i(M_1,M_2) \alredto M_i$ for $i \in \set{1,2}$.
        Let $i' \in \set{1,2} \setminus \set{i}$.
        \begin{align*}
            \Trans_z(\pi_i(M_1,M_2)) &\eqto \new x_1.(\out {x_1}[x_2].(\Trans_{x_2}(M_2) \| \Trans_{x_1}(M_1)) \| \inp x_1(x_2).\spw{x_{i'}->\emptyset}.\fwd[z<-x_i])
            \\
            &\redto \new x_1.(\Trans_{x_1}(M_1) \| \new x_2.(\Trans_{x_2}(M_2) \| \spw{x_{i'}->\emptyset}.\fwd[z<-x_i]))
            \\
            &\redto \spw{z->\emptyset | z \in \fn(M_{i'})}.\new x_i.(\Trans_{x_i}(M_i) \| \fwd[z<-x_i])
            \\
            &\redto \spw{z->\emptyset | z \in \fn(M_{i'})}.\Trans_z(M_i)
            \\
            &\sublifto M_i
        \end{align*}

        \case[\ref{rule:red-unitM}]
        We have $\Let \unitM = \unitM in M \alredto M$.
        \begin{align*}
            \Trans_z(\Let \unitM = \unitM in M \redd M)
            &\eqto \new x.(\out x<> \| \inp x().\Trans_z(M))
            \\
            &\redto \Trans_z(M)
            \\
            &\sublifto M
        \end{align*}

        \case[\ref{rule:red-pair}]
        We have $\Let \langle x,y \rangle = \langle M_1, M_2 \rangle in N \alredto N\subst{x->M_1,y->M_2}$.
        \begin{align*}
            \Trans_z(\Let \langle x,y \rangle = \langle M_1, M_2 \rangle in N) &\eqto \new x.(\out x[y].(\Trans_y(M_2) \| \Trans_x(M_1)) \| \inp x(y).\Trans_z(N))
            \\
            &\redto \new x.(\Trans_x(M_1) \| \new y.(\Trans_y(M_2) \| \Trans_z(N)))
            \\
            &\congrto \new y.(\Trans_y(M_2) \| \new x.(\Trans_x(M_1) \| \Trans_z(N)))
            \\
            &\sublifto N\subst{x->M_1,y->M_2}
        \end{align*}

        \case[\ref{rule:red-case}]
        We have,  for ${i \in \set{1,2}}$, that \[(\Case \inj_i(M) of x_1 => N_1 or x_2 => N_2) \alredto \substS{N_i}{x_i}{M}.\]
        Expanding the translation:
        \begin{align*}
            &\phantom{\eqto} \Trans_z(\Case \inj_i(M) of x_1 => N_1 or x_2 => N_2)
            \\
            &\eqto \new x_1.(\Trans_{x_1}(\inj_i(M)) \| \caseLR{x_1}{\Trans_z(N_1)}{\new x_2.(\fwd[x_2<-x_1] \| \Trans_z(N_2))})
        \end{align*}
        There are two cases for $i \in \set{1,2}$.
        \begin{casesplit}
            \case[$i = 1$]
            \begin{align*}
                &\phantom{\eqto} \new x_1.(\Trans_{x_1}(\inj_1(M)) \| \caseLR{x_1}{\Trans_z(N_1)}{\new x_2.(\fwd[x_2<-x_1] \| \Trans_z(N_2))})
                \\
                &\eqto \new x_1.(\selL{x_1}.\Trans_{x_1}(M) \| \caseLR{x_1}{\Trans_z(N_1)}{\new x_2.(\fwd[x_2<-x_1] \| \Trans_z(N_2))})
                \\
                &\redto \new x_1.(\Trans_{x_1}(M) \| \Trans_z(N_1))
                \\
                &\sublifto \substS{N_1}{x_1}{M}
            \end{align*}

            \case[$i = 2$]
            \begin{align*}
                &\phantom{\eqto} \new x_1.(\Trans_{x_1}(\inj_2(M)) \| \caseLR{x_1}{\Trans_z(N_1)}{\new x_2.(\fwd[x_2<-x_1] \| \Trans_z(N_2))})
                \\
                &\eqto \new x_1.(\selR{x_1}.\Trans_{x_1}(M) \| \caseLR{x_1}{\Trans_z(N_1)}{\new x_2.(\fwd[x_2<-x_1] \| \Trans_z(N_2))})
                \\
                &\redto \new x_1.(\Trans_{x_1}(M) \| \new x_2.(\fwd[x_2<-x_1] \| \Trans_z(N_2)))
                \\
                &\redto \new x_1.(\Trans_{x_1}(M) \| \substS{\Trans_z(N_2)}{x_2}{x_1})
                \\
                &\congrto \new x_2.(\Trans_{x_2}(M) \| \Trans_z(N_2))
                \\
                &\sublifto \substS{N_2}{x_2}{M}
            \end{align*}
        \end{casesplit}
    \end{casesplit}

    The inductive cases all concern the lifted reductions.
    Each case is analogous, so we only detail the arbitrarily chosen case of reduction lifting under $\lambda$-application.
    Assume $M \alredto M'$.
    We have $M\ N \alredto M'\ N$.
    By the IH, $\Trans_y(M) \reddStar P \sublift M'$.
    Hence, assuming $M' = M''\subst{x_1->N_1,,x_n->N_n}$, we have $P \congr \new x_n.(\Trans_{x_n}(N_n) \| \ldots \new x_1.(\Trans_{x_1}(N_1) \| \Trans_y(M'')) \ldots )$.
    Moreover,
    \[
        M'\ N = (M''\subst{x_1->N_1,,x_n->N_n})\ N = (M''\ N)\subst{x_1->N_1,,x_n->N_n}.
    \]
    We have the following:
    \begin{align*}
        \Trans_z(M\ N) &\eqto \new y.(\Trans_y(M) \| \inp y(w).\Trans_z(N))
        \\
        &\redto* \new y.(\new x_n.(\Trans_{x_n}(N_n) \| \ldots \new x_1.(\Trans_{x_1}(N_1) \| \Trans_y(M'')) \ldots ) \| \inp y(w).\Trans_z(N))
        \\
        &\congrto \new x_n.(\Trans_{x_n}(N_n) \| \ldots \new x_1.(\Trans_{x_1}(N_1) \| \new y.(\Trans_y(M'') \| \inp y(w).\Trans_z(N))) \ldots )
        \\
        &\sublifto M'\ N
        \tag*{\qedhere}
    \end{align*}
\end{proof}

The statement of \Cref{lem:basic_completeness} cannot be chained to form a simulation diagram, since the premise does not start with the substitution relation as in the result.
The full version of completeness starts with a term and a process that are related via substitution lifting:
\completenessThm*
\begin{proof}
  Since $P \sublift M$, we can write the latter as $M'\subst{x_1->N_1,,x_n->N_n}$.
  We then consider two cases, depending on whether the reduction $M \alredto N$ already happens in $M'$, or whether this reduction is triggered by one of the substitutions.

  In the former case we already have a reduction $M' \alredto N'$ that is ``lifted'' to the reduction $M'\subst{x_1->N_1,,x_n->N_n} \alredto N'\subst{x_1->N_1,,x_n->N_n}$.
  We can then appeal directly to the previous lemma to obtain a process $Q$ such that $\Trans_z(M') \reddStar Q \sublift N'$.
  Then,
  \begin{align*}
    & \new x_1.(\Trans_{x_1}(N_1 \| \dots \new x_n.(\Trans_{x_n}(N_n) \| \Trans_z(M')) \dots ) \reddStar \\
    & \new x_1.(\Trans_{x_1}(N_1 \| \dots \new x_n.(\Trans_{x_n}(N_n) \| Q) \dots ) \sublift N'\subst{x_1->N_1,,x_n->N_n}.
  \end{align*}

  In the second case, the reduction in the term is only enabled after some substitution $\subst{x_i->N_i}$ is performed.
  The idea is to reduce this to the first case, by explicitly performing the substitution $\subst{x_i->N_i}$ in the corresponding processes.

  If $\subst{x_i->N_i}$ is the substitution that enables the reduction $M'\subst{x_1->N_1,,x_n->N_n} \alredto N$, then the variable $x_i$ is located at a head position in the term $M'$.
  This means that in the translation, the corresponding process $\Trans_c(x_i)$ will not occur under an input/output prefix, which will allow us to eagerly perform the substitution by using the forwarder reduction, combined with the structural congruences and \ref{red-spawn-r}.

  Let us demonstrate what we mean by  an example.
  Suppose that $M = (x_1\ M'')\subst{x_1->N_1,x_2->N_2}$ and $N_1 = \Lam a. T$.
  Clearly, in this case the beta reduction is enabled only after the substitution.
  The corresponding substitution-lifted process can reduce as follows:
  \begin{align*}
  & \new x_2.(\Trans_{x_2}(N_2) \| \new x_1.(\Trans_{x_1} (N_1) \|  {\Trans_z(x_1\ M'')})) = \\
  & \new x_2.(\Trans_{x_2}(N_2) \| \new x_1.(\Trans_{x_1} (N_1) \|
    {\new c.(\Trans_c(x_1) \| \out c[b].(\Trans_b(M'') \| \fwd[z<-c]))})) =\\
  & \new x_2.(\Trans_{x_2}(N_2) \| \new x_1.(\Trans_{x_1} (N_1) \|
    {\new c.(\fwd[c <- x_1] \| \out c[b].(\Trans_b(M'') \| \fwd[z<-c]))})) \redd\\
  & \new x_2.(\Trans_{x_2}(N_2) \| \new x_1.(\Trans_{x_1} (N_1) \|
    \out {x_1}[b].(\Trans_b(M'') \| \fwd[z<-x_1]))) =\\
  & \new x_2.(\Trans_{x_2}(N_2) \| \Trans_z(N_1\ M'')) \sublifto (N_1\ M'')\subst{x_2->N_2}.
  \end{align*}
  The forwarder reduction in that sequence correspond to explicitly performing the substitution $\subst{x_1->N_1}$.
  After that we get a term $(N_1\ M'')\subst{x_2->N_2}$ in which the reduction is enabled prior to the substitution, thus leaving us with the scenario from the case of this theorem.
\end{proof}

\soundnessThm*
\begin{proof}
    By definition, $P \equiv \spw{\spvar_s}.\ldots\spw{\spvar_1}.\new x_n.(\Trans_{x_n}(M_n) \| \ldots \new x_1.(\Trans_{x_1}(M_1) \| \Trans_z(M')) \ldots )$ where $M = M'\subst{x_1->M_1,,x_n->M_n}\map{\spvar_1,\ldots,\spvar_s}$.
    Let us consider possible reductions of $P$.
    First, each parallel subprocess of $P$ may reduce internally.
    Second, one of the subprocesses may be a forwarder, in which case a forwarder reduction is applicable.
    Third, one of the subprocesses may start with a spawn prefix, which can interact with the cuts.
    Note that no message-passing communication between the subprocesses of $P$ is possible, as follows from the definition of the translation.
    We discuss each possible case:
    \begin{itemize}
        \item
            $\Trans_{x_i}(M_i)$ for $i \in [1,n]$ reduces internally, i.e., $\Trans_{x_i}(M_i) \redd Q_i$.
            We apply induction on the derivation of $\Delta \proves M : A$.
            Clearly, $\Trans_{x_i}(M_i) \sublift M_i$.
            Since the typing derivation of $M_i$ is a sub-derivation of the typing derivation of $M$, the IH applies: there exist $N_i$ and $R_i$ such that $M_i \alfredto* N_i$ and $Q_i \redto* R_i \sublift N_i$.
Let \[Q' = \spw{\spvar_s}.\ldots\spw{\spvar_1}.\new x_n.(\Trans_{x_n}(M_n) \| \ldots \new x_i.(Q_i \| \ldots \new x_1.(\Trans_{x_1}(M_1) \| \Trans_z(M')) \ldots ) \ldots );\] we have $P \redto Q'$.
            From $Q'$, all the reductions that were possible from $P$ are still possible.
            However, these reductions are all independent, so we postpone all but further reductions of $Q'$.
            Let \[R = \spw{\spvar_s}.\ldots\spw{\spvar_1}.\new x_n.(\Trans_{x_n}(M_n) \| \ldots \new x_i.(R_i \| \ldots \new x_1.(\Trans_{x_1}(M_1) \| \Trans_z(M')) \ldots ) \ldots );\] we have $Q' \redto* R$.

            By definition, \[R_i \congrto \spw{\spvar'_t}.\ldots\spw{\spvar'_1}.\new y_m.(\Trans_{y_m}(L_n) \| \ldots \new y_1.(\Trans_{y_1}(L_1) \| \Trans_{x_i}(N'_i)) \ldots )\] where $N_i = N'_i\subst{y_1->L_1,,y_m->L_m}\map{\spvar'_1,\ldots,\spvar'_t}$.
            Let $M_0 = M'\subst{x_1->M_1,,x_i->N_i,,x_n->M_n}$; we have $M \alfredto* M_0$.
            Due to the shape of $R_i$, which includes substitutions, weakenings, and contractions in $N_i$, $R$ is not yet of a shape that we can relate to $M_0$.

            First, we have to move the spawn prefixes in $R_i$ to the sequence of spawn prefixes at the beginning of $R$.
            The procedure depends on whether there are $x_{i+1},\ldots,x_n$ that are weakened or contracted by the spawn prefixes in $R_i$.
            This is largely analogous to the latter cases of spawn prefixes commuting and interacting, so here we assume that no weakening or contraction takes place.
            By typability, none of the substitutions in $N_i$ touch the variable $x_i$, so we can commute the cuts in $R_i$ past the cut on $x_i$ in $R$.
            Let \begin{multline*}
              R' =
                \spw{\spvar_s}.\ldots
                  \spw{\spvar_1}.\spw{\spvar'_t}.\ldots
                    \spw{\spvar'_1}.\new x_n.(
                      \Trans_{x_n}(M_n) \| \ldots
                        \new y_m.(
                          \Trans_{y_m}(L_m) \| \ldots
                    \\
                            \new y_1.(
                              \Trans_{y_1}(L_1) \|
                              \new x_i.(
                                \Trans_{x_i}(N'_i) \| \ldots
                                  \new x_1.(\Trans_{x_1}(M_1) \| \Trans_z(M'))
\ldots))
                        \ldots )
                        \ldots )
            \end{multline*}
            We have $R \redto* R'$.
            Moreover, \[M_0 = M'\subst{x_1->M_1,,x_i->N'_i,y_1->L_1,,y_m->L_m,,x_n->M_n}\map{\spvar_1,\ldots,\spvar_s,\spvar'_1,\ldots,\spvar'_t},\] and thus $R' \sublift M_0$.
            This proves the thesis.

        \item
            $\Trans_z(M')$ reduces internally.
            We apply induction on the derivation of $\Delta \proves M : A$ (\ih1); there is a case for typing rule, although not all cases may yield a reduction in $P$.
            In each case, we additionally apply induction on the number $k$ of reductions from $P$ to $Q$ (\ih2), i.e., $P \redd^k Q$.
            Depending on the shape of $P$, and relying on the independence of reductions, we then isolate $k'$ reductions $P \redd^{k'} Q'$ such that $Q' \sublift N'$ and $M \alfredto N'$ (where $k'$ may be different in each case).
            We then have $Q' \redd^{k-k'} Q$, so it follows from \ih2 that $N' \alfredto* N$ and $Q' \redd* R$ such that $R \sublift N$.

            Note that applications of \ih1 yield processes with spawn prefixes that need to be commuted past cuts to bring them to the front of the process, while some of them apply weakening or contraction when meeting certain cuts.
            We explain such procedures in the latter cases of this proof, so here we assume that \ih1 yields processes without spawn prefixes.

            \begin{casesplit}
                \case[\ref{rule:N-id}]
                We have $M' = y$ and $\Trans_z(M') = \fwd[z<-y]$; no reductions are possible.

                \case[\ref{rule:N-equiv}]
                The thesis follows from \ih1 directly.

                \case[\ref{rule:N-W}]
                We have $\Trans_z(M') = \spw{x->\emptyset | x \in \fn(\Delta')}.\Trans_z(M')$.
                There is only one possibility of reduction: $\spw{x->\emptyset | x \in \fn(\Delta')}.\Trans_z(M') \redd \spw{x->\emptyset | x \in \fn(\Delta')}.Q'$.

                By \ih1, there exist $L$ and $R'$ such that $M' \alfredto* L$ and $Q' \redto* R' \sublift L$.
                Then $$R' \congrto \new y_m.(\Trans_{y_m}(L_m) \| \ldots \new y_1.(\Trans_{y_1}(L_1) \| \Trans_z(L')) \ldots )$$ and $L' = L'\subst{y_1->L_1,,y_m->L_m}$.

                Let $R_0 = \new x_n.(\Trans_{x_n}(M_n) \| \ldots \new x_1.(\Trans_{x_1}(M_1) \| \spw{x->\emptyset | x \in \fn(\Delta')}.R'))$; we have $P \redto* R_0$.
                Also, let $M_0 = L'\subst{y_1->L_1,,y_m->L_m,x_1->M_1,,x_n->M_n}$; we have $M \alfredto* M_0$.

                At this point, $R_0$ is not of appropriate shape to relate it to $M_0$ through substitution lifting, because the weakening spawn prefix is in the middle of the substitutions.
                There are two possibilities for reduction here: the spawn interacts with one of the cuts on $x_i$, or the spawn commutes past them all.
                The former is analogous to the case of a spawn prefix in $\Trans_z(M')$ interacting with a cut, which follows the current case.
                In the latter case, let $R = \spw{x->\emptyset | x \in \fn(\Delta')}.\new x_n.(\Trans_{x_n}(M_n) \| \ldots \new x_1.(\Trans_{x_1}(M_1) \| R'))$.
                Now, $R_0 \redto* R$ and $R \sublift M_0$, proving the thesis.

                \case[\ref{rule:N-C}]
                Analogous to Case~\ref{rule:N-W}.

                \case[\ref{rule:wand-intro}]
                We have $M' = \Lam x.M''$ and $\Trans_z(M') = \inp z(x).\Trans_z(M'')$; no reductions are possible.

                \case[\ref{rule:impl-intro}]
                Analogous to Case~\ref{rule:wand-intro}.

                \case[\ref{rule:wand-elim}]
                We have $M' = L_1\ L_2$ and $\Trans_z(M') = \new x.(\Trans_x(L_1) \| \out x[y].(\Trans_y(L_2) \| \fwd[z<-x]))$.
                There are three possible reductions: $\Trans_x(L_1)$ reduces internally, $\Trans_x(L_1)$ is prefixed by a spawn which commutes past the restriction on $x$, or the output on $x$ synchronizes with an input on $x$ in $\Trans_x(L_1)$.
                \begin{itemize}
                    \item
                        $\Trans_x(L_1)$ reduces internally, i.e., $\Trans_x(L_1) \redto Q'$.
                        By \ih1, there exist $N$ and $R'$ such that $L_1 \alfredto* N$ and $Q' \redto* R' \sublift N$.
                        Then $$R' \congrto \new y_m.(\Trans_{y_m}(N_m) \| \ldots \new y_1.(\Trans_{y_1}(N_1) \| \Trans_x(N')) \ldots )$$ and $N = N'\subst{y_1->N_1,,y_m->N_m}.$
                        Let
                        \begin{align*}
                          R_0 =
                            \new x_n.(&
                              \Trans_{x_n}(M_n) \| \ldots
                              \\&\quad
                              \new x_1.(\Trans_{x_1}(M_1) \| \new x.(R' \| \out x[y].(\Trans_y(L_2) \| \fwd[z<-x]))) \ldots );
                        \end{align*}
                        we have $P \redto* R_0$.
                        Also, let 
                        \begin{align*}
                          M_0 & = (N'\subst{y_1->N_1, , y_m->N_m}\ L_2)\subst{x_1->M_1, , x_n->M_n}\\
                              & = (N'\ L_2)\subst{y_1->N_1, , y_m->N_m, x_1->M_1, , x_n->M_n}
                        \end{align*}
                        we have $M \alfredto* M_0$.
                        We have
                        \begin{align*}
                        R_0 \congrto
                            \new x_n.(&
                              \Trans_{x_n}(M_n) \|
                                \ldots
                                \\&\quad
                                \new x_1.(
                                  \Trans_{x_1}(M_1) \|
                                  \new y_m.(
                                    \Trans_{y_m}(N_m) \|
                                    \ldots
                                    \\&\mkern40mu
                                    \new y_1.(
                                      \Trans_{y_1}(N_1) \|
                                      \new x.(
                                        \Trans_x(N') \|
                                        \out x[y].(
                                          \Trans_y(L_2) \| \fwd[z<-x]
                                        )
                                      )
                                    )
                                    \ldots
                                  )
                                )
                                \ldots
                            ),
                        \end{align*}
                        so $R_0 \sublift M_0$.
                        This proves the thesis.

                    \item
                        $\Trans_x(L_1)$ is prefixed by a spawn which commutes past the restriction on $x$.
                        This case is analogous to the case of a spawn prefix in $\Trans_x(M')$ commuting past cuts, which follows the current case.

                    \item
                        The output on $x$ synchronizes with an input on $x$ in $\Trans_x(L_1)$.
                        By typability, then $L_1 = \Lam y.L'_1$ and $\Trans_x(L_1) = \inp x(y).\Trans_x(L'_1)$.
                        Let $Q'_0 = \new x.(\new y.(\Trans_y(L_2) \| \Trans_x(L'_1)) \| \fwd[z<-x])$.
                        Then $\Trans_x(M') \redd Q'_0$.

                        From $Q'_0$, there may be similar reductions as from $\Trans_x(M')$, with an additional forwarder reduction possible.
                        All of these reductions are independent, so we postpone all but the forwarder reduction.
                        Let $Q'_1 = \new y.(\Trans_y(L_2) \| \Trans_z(L'_1))$.
                        Then $Q'_0 \redd Q'_1$.

                        Let $Q' = \new x_n.(\Trans_{x_n}(M_n) \| \ldots \new x_1.(\Trans_{x_1}(M_1) \| Q'_1))$; we have $P \redd^2 Q'$.
                        Also, let $M_0 = M'_1\subst{y->L_2,x_1->M_1,,x_n->M_n}$; we have $M \alfredto M_0$.
                        Since $Q' \redd^{k-2} Q$, the thesis then follows from \ih2.
                \end{itemize}

                \case[\ref{rule:impl-elim}]
                Analogous to Case~\ref{rule:wand-elim}.

                \case[\ref{rule:mone-intro}]
                We have $M' = \unitM$ and $\Trans_z(M') = \out z<>$; no reductions are possible.

                \case[\ref{rule:aone-intro}]
                Analogous to Case~\ref{rule:mone-intro}.

                \case[\ref{rule:mone-elim}]
                We have $M' = \Let \unitM = L_1 in L_2$ and $\Trans_z(M') = \new x.(\Trans_x(L_1) \| \inp x().\Trans_z(L_2))$.
                There are three possible reductions: $\Trans_x(L_1)$ reduces internally, $\Trans_x(L_2)$ is prefixed by a spawn which commutes past the restriction on $x$, or the empty input on $x$ synchronizes with an empty output on $x$ in $\Trans_x(L_1)$.
                The former two sub-cases are analogous to the similar sub-cases in Case~\ref{rule:wand-elim}.
                In the latter case, by typability, we have $L_1 = \unitM$ and $\Trans_x(L_1) = \out x<>$.

                Let $Q_0 = \Trans_z(L_2)$; we have $\Trans_z(M') \redd Q_0$.
                Let $$Q' = \new x_n.(\Trans_{x_n}(M_n) \| \ldots \new x_1.(\Trans_{x_1}(M_1) \| Q_0) \ldots );$$ we have $P \redd Q'$.

                {Let ${M_0 = L_2\subst{x_1->M_1,,x_n->M_n}}$; we} have $M \alfredto M_0$ and $Q' \sublift M_0$.
                Since $Q' \redd^{k-1} Q$, the thesis follows from \ih2.

                \case[\ref{rule:aone-elim}]
                Analogous to Case~\ref{rule:mone-elim}.

                \case[\ref{rule:ast-intro}]
                We have $M' = \langle L_1,L_2 \rangle$ and $\Trans_z(M') = \out z[y].(\Trans_y(L_1) \| \Trans_z(L_2))$; no reductions are possible.

                \case[\ref{rule:conj-intro}]
                Analogous to Case~\ref{rule:ast-intro}.

                \case[\ref{rule:ast-elim}]
                We have $M' = \Let \langle x,y \rangle = L_1 in L_2$ and $\Trans_z(M') = \new y.(\Trans_y(L_1) \| \inp y(x).\Trans_z(L_2))$.
                There are three possible reductions: $\Trans_y(L_1)$ reduces internally, $\Trans_y(L_1)$ is prefixed by a spawn which commutes past the restriction on $y$, or the input on $y$ synchronizes with an output on $y$ in $\Trans_y(L_1)$.
                The former two sub-cases are analogous to the similar sub-cases in Case~\ref{rule:wand-elim}.
                In the latter case, by typability, we have $L_1 = \langle K_1,K_2 \rangle$ and $\Trans_y(L_1) = \out y[x].(\Trans_x(K_1) \| \Trans_y(K_2))$.

                Let $Q_0 = \new y.(\Trans_y(K_2) \| \new x.(\Trans_x(K_1) \| \Trans_z(L_2)))$; we have $\Trans_z(M') \redd Q_0$.
                Let \[Q' = \new x_n.(\Trans_{x_n}(M_n) \| \ldots \new x_1.(\Trans_{x_1}(M_1) \| Q_0) \ldots );\] we have $P \redd Q'$.

                Let $M_0 = L_2\subst{x->K_1,y->K_2,x_1->M_1,,x_n->M_n}$; we have $M \alfredto M_0$ and $Q' \sublift M_0$.
                Since $Q' \redd^{k-1} Q$, the thesis follows from \ih2.

                \case[\ref{rule:conj-elim}]
                We have $M' = \proj_i L$ and $\Trans_z(M') = \new x_2.(\Trans_{x_2}(L) \| \inp x_2(x_1).\spw{x_{i'}->\emptyset}.\fwd[z<-x_i])$ for $i \in \set{1,2}$ and $i' \in \set{1,2} \setminus \set{i}$.
                W.l.o.g., let $i = 1$ and $i' = 2$.
                There are three possible reductions: $\Trans_{x_2}(L)$ reduces internally, $\Trans_{x_2}(L)$ is prefixed by a spawn which commutes past the restriction on $x_2$, or the input on $x_2$ synchronizes with an output on $x_2$ in $\Trans_{x_2}(L)$.
                The former two sub-cases are analogous to the similar sub-cases in Case~\ref{rule:wand-elim}.

                In the latter case, by typability, we have $L = (L_1,L_2)$ and $\Trans_{x_2}(L) = \out {x_2}[x_1].(\Trans_{x_1}(L_1) \| \Trans_{x_2}(L_2))$.
                Let $Q_0 = \new x_2.(\Trans_{x_2}(L_2) \| \new x_1.(\Trans_{x_1}(L_1) \| \spw{x_2->\emptyset}.\fwd[z<-x_1]))$; we have $\Trans_z(M') \redd Q_0$.

                From $Q_0$, there may be internal reductions of $\Trans_{x_2}(L_2)$ or $\Trans_{x_1}(L_1)$, a spawn prefix in $\Trans_{x_2}(L_2)$ may commute past the restriction on $x_2$, a spawn prefix in $\Trans_{x_1}(L_1)$ may commute past the restrictions on $x_1$ and $x_2$, and the spawn prefix $\spawn{x_2->\emptyset}$ may commute past the restriction on $x_1$.
                All these reductions are independent, so we postpone all but the commute of the spawn prefix $\spawn{x_2->\emptyset}$.
                Let $Q_1 = \new x_2.(\Trans_{x_2}(L_2) \| \spw{x_2->\emptyset}.\new x_1.(\Trans_{x_1}(L_1) \| \fwd[z<-x_1]))$; we have $Q_0 \redd Q_1$.

                From $Q_1$ we have the same possible reductions as from $Q_0$, except that there may also be weakening of $x_2$ due to the spawn prefix $\spawn{x_2->\emptyset}$ interacting with the restriction on $x_2$.
                Again, we postpone all but the latter reduction.
                Let $Q_2 = \spw{y->\emptyset | y \in \fv(L_2)}.\new x_1.(\Trans_{x_1}(L_1) \| \fwd[z<-x_1])$; we have $Q_1 \redd Q_2$.

                From $Q_2$, we again have the reductions that were available from $Q_1$, but also the reduction of the forwarder $\fwd[z<-x_1]$.
                We postpone all but the latter.
                Let \[Q_3 = \spw{y->\emptyset | y \in \fv(L_2)}.\Trans_z(L_1);\] we have $Q_2 \redd Q_3$.

                Let $Q'_0 = \new x_n.(\Trans_{x_n}(M_n) \| \ldots \new x_1.(\Trans_{x_1}(M_1) \| Q_3) \ldots )$; we have $P \redd^4 Q'_0$.
                Also, let $M_0 = L_1\subst{x_1->M_1,,x_n->M_n}$, where the resources used by $L_2$ have been weakened.
                At this point, $Q'_0$ is not of appropriate shape to relate it to $M_0$ through substitution lifting, because of the spawn prefix in $Q_3$.
                There are two possibilities for reduction here: the spawn interacts with one of the cuts on $x_i$, or the spawn commutes past them all.
                The former is analogous to the case of a spawn prefix in $\Trans_z(M')$ interacting with a cut, which follows the current case.
                In the latter case, let \[Q' = \spw{y->\emptyset | y \in \fv(L_2)}.\new x_n.(\Trans_{x_n}(M_n) \| \ldots \new x_1.(\Trans_{x_1}(M_1) \| \Trans_z(L_1)) \ldots ).\]
                Now $Q'_0 \redd^n Q'$ and $Q' \sublift M_0$.

                Since $Q' \redd^{k-4-n} Q$, the thesis follows from \ih2.

                \case[\ref{rule:disj-intro}]
                We have $M' = \inj_i(N)$ for $i \in \set{1,2}$.
                Depending on the value of $i$,
                $\Trans_z(M')$ is either
                $\selL{z}.\Trans_z(N)$
                or $\selR{z}.\Trans_z(N)$.
                Either way, no reductions are possible.

                \case[\ref{rule:disj-elim}]
                We have
                \begin{align*}
                M' &= \Case N of y_1 => L_1 or y_2 => L_2
                \\
                \Trans_z(M') &=
                    \new y_1.(\Trans_{y_1}(N) \| \caseLR{y_1}{\Trans_z(L_1)}{\new y_2.(\fwd[y_2<-y_1] \| \Trans_z(L_2))})
                \end{align*}.
                There are three possible reductions: $\Trans_{y_1}(N)$ reduces, $\Trans_{y_1}(N)$ is prefixed by a spawn which commutes past the restriction on $y_1$, or the case on $y_1$ synchronizes with a select on $y_1$ in $\Trans_{y_1}(N)$.
                The former two sub-cases are analogous to the similar sub-cases in Case~\ref{rule:wand-elim}.

                In the latter case, by typability, we have $N = \inj_i(N')$ for $i \in \set{1,2}$.
                The rest of the analysis depends on the value of $i$:
                \begin{casesplit}
                    \case[$i = 1$]
                    We have $\Trans_{y_1}(N) = \selL{y_1}.\Trans_{y_1}(N')$.
                    Let $Q_0 = \new y_1.(\Trans_{y_1}(N') \| \Trans_z(L_1))$; we have $\Trans_z(M') \redd Q_0$.
                    Let $Q' = \new x_n.(\Trans_{x_n}(M_n) \| \ldots \new x_1.(\Trans_{x_1}(M_1) \| Q_0))$; we have $P \redd Q'$.
                    Also, let $M_0 = L_1\subst{y_1->N',x_1->M_1,,x_n->M_n}$; we have $M \alfredto M_0$.
                    Moreover, $Q' \sublift M_0$.
                    Since $Q' \redd^{k-1} Q$, the thesis follows from \ih2.

                    \case[$i = 2$]
                    We have $\Trans_{y_1}(N) = \selR{y_1}.\Trans_{y_1}(N')$.
                    Let \[Q_0 = \new y_1.(\Trans_{y_1}(N') \| \new y_2.(\fwd[y_2<-y_1] \| \Trans_z(L_2)))\] for which we have $\Trans_z(M') \redd Q_0$.

                    From $Q_0$ several reductions are possible: $\Trans_{y_1}(N')$ or $\Trans_z(L_2)$ reduce internally, a spawn prefix in $\Trans_{y_1}(N')$ commutes past the restriction on $y_1$, a spawn prefix in $\Trans_z(L_2)$ commutes past or interacts with the restriction on $y_2$, or the forwarder $\fwd[y_2<-y_1]$ interacts with the restriction on $y_1$.
                    These reductions are independent, so we postpone all but the forwarder reduction.
                    Let $Q_1 = \new y_2.(\Trans_{y_2}(N') \| \Trans_z(L_2))$; we have $Q_0 \redd Q_1$.

                    Let $Q' = \new x_n.(\Trans_{x_n}(M_n) \| \ldots \new x_1.(\Trans_{x_1}(M_1) \| Q_1) \ldots )$; we have $P \redd^2 Q'$.
                    Also, let $M_0 = L_2\subst{y_2->N',x_1->M_1,,x_n->M_n}$; we have $M \alfredto M_0$.
                    Moreover, $Q' \sublift M_0$.
                    Since $Q' \redd^{k-2} Q$, the thesis follows from \ih2.
                \end{casesplit}
            \end{casesplit}

        \item
            A forwarder in $\Trans_{x_i}(M_i)$ for $i \in [1,n]$ interacts with a cut.
            We apply induction on the number $k$ of reductions from $P$ to $Q$, i.e., $P \redd^k Q$.

            We have $\Trans_{x_i}(M_i) = \fwd[x_i<-y]$ for some $y$.
            Hence, $M_i = y$.
            Let \[Q' = \new x_n.(\Trans_{x_n}(M_n) \| \ldots \new x_1.(\Trans_{x_1}(M_1) \| \Trans_z(M'\subst{x_i->y})) \ldots )\] without the cut on $x_i$.
            Since $\Trans_z(M')\subst{x_i->y} = \Trans_z(M'\subst{x_i->y})$, we have $P \redto Q'$.
            By typability, none of the $M_1,\ldots,M_{i-1}$ can contain the variable $y$, so we have \[M'\subst{x_1->M_1,,x_i->y,,x_n->M_n} = (M'\subst{x_i->y})\subst{x_1->M_1,,x_n->M_n}\] where the latter substitutions do not contain the substitution on $x_i$.
            Hence, $Q' \sublift M$.
            Since $Q' \redd^{k-1} Q$, by the IH, there exist $N$ and $R$ such that $M \alfredto* N$ and $Q' \redto* R \sublift N$.
            This proves the thesis.

        \item
            A forwarder in $\Trans_z(M')$ interacts with a cut.
            We apply induction on the number $k$ of reductions from $P$ to $Q$, i.e., $P \redd^k Q$.

            We have $\Trans_z(M') = \fwd[z<-y]$ for some $y$ and there exist $i \in [1,n]$ such that $x_i = y$.
            Hence, ${M' = y}$ and $M = y\subst{x_1->M_1,,y->M_i,,x_n->M_n}$.
            Let \[Q' = \new x_n.(\Trans_{x_n}(M_n) \| \ldots \new x_1.(\Trans_{x_1}(M_1) \| \Trans_z(M_i)) \ldots ).\]
            Since $\Trans_{y}(M_i)\subst{y->z} = \Trans_z(M_i)$, we have $P \redto Q'$.
            By typability, none of the $M_1,\ldots,M_{i-1}$ can contain the variable $y$, so we have \[y\subst{x_1->M_1,,y->M_i,,x_n->M_n} = M_i\subst{x_1->M_1,,x_n->M_n}\] where the latter substitutions do not contain the substitution on $x_i$.
            Hence, $Q' \sublift M$.
            Since $Q' \redd^{k-1} Q$, by the IH, there exist $N$ and $R$ such that $M \alfredto* N$ and $Q' \redto* R \sublift N$.
            This proves the thesis.

        \item
            A spawn prefix in $\Trans_z(M')$ commutes past or interacts with a cut.
            We apply induction on the number $k$ of reductions from $P$ to $Q$, i.e., $P \redd^k Q$.
            The last applied rule in the typing derivation of $\Trans_z(M')$ is~\ref{rule:N-W} or~\ref{rule:N-C}; the rest of the analysis depends on which:
            \begin{casesplit}
                \case[\ref{rule:N-W}]
                The rule weakens the variables $y_1,\ldots,y_m$.
                This case follows by commuting the spawn prefix past cuts on $x_i \notin \set{y_1,\ldots,y_m}$ and performing the weakening when the spawn prefix meets cuts on $x_i \in \set{y_1,\ldots,y_m}$.
                Other reductions that were possible from $P$ remain possible throughout this process, but they are independent of these reductions, so we can postpone them.
                After $k'$ steps of reduction, we reach from $P$ a process $Q'$ with the spawn commuted to the top of the process, and some cuts removed.
                The cuts that were removed concern substitutions of weakened variables, so removing these substitutions from $M$ makes no difference.
                Similarly, the cuts that were commuted past concern substitutions that are independent of the weakening.
                Hence, $Q' \sublift M$.
                Since $M \alfredto* M$ and $Q' \redd^{k-k'} Q$, the thesis follows from the IH.

                \case[\ref{rule:N-C}]
                This case is largely analogous to the case of~\ref{rule:N-W}, except that interactions of the spawn with cuts duplicates the cuts, and commuting past cuts moves the substitutions related to the contraction towards the end of the list of substitutions applied in $M$.
            \end{casesplit}

        \item
            A spawn prefix in $\Trans_{x_i}(M_i)$ for $i \in [1,n]$ commutes past a cut.
            This case is largely analogous to the previous case: first, the spawn can always commute past the cut on $x_i$, after which it may commute further or interact with cuts.
            \qedhere
    \end{itemize}
\end{proof}

 \section{Denotational semantics}
The interpretation of \piBI in $\Set^{\pset{\Tag}}$ essentially follows the interpretation of BI proofs in doubly-closed categories (DCCs).
Forwarders are interpreted as identity morphisms, and cut is interpreted as composition.
Suppose we are given morphisms $\Sem{P} : \Sem{\bunch} \to \Sem{A}$ and $\Sem{Q} : \Sem{\bctxt(x : A)} \to \Sem{C}$.
First, note that we can write $\Sem{\bctxt(x : A)}$ as $\Sem{\bctxt}(\Sem{A})$ where $\Sem{\bctxt}$ is the obvious endofunctor interpreting bunched contexts.
The we interpret the cut $\Sem{\new x. (P \| Q)}$ as $\Sem{Q} \circ \Sem{\bctxt}(\Sem{P})$.
Let us spell the cases for some other propositions.

For \textbf{separating conjunction} we have:
\[
\Sem{\Delta_1 \bsep \Delta_2 \proves \out x[y].(P_1 \| P_2) :: x : A \ast B} = \Sem{P_1} \ast \Sem{P_2},
\]
where $\Sem{P_1} \ast \Sem{P_2}$  is a monoidal product of two morphisms:
\begin{align*}
  \Sem{\Delta_1} \xrightarrow{\Sem{P_1}} \Sem{A} &&
  \Sem{\Delta_2} \xrightarrow{\Sem{P_2}} \Sem{B}
\end{align*}

And for the left rule:
\[
\Sem{\bctxt(x : A \ast B) \proves x(y).P :: z : C} = \Sem{P},
\]
since the type $\Sem{A \ast B}$ and a context $\Sem{y : A\bsep x : B}$ are isomorphic.

For the \textbf{magic wand} we interpret the right rule by currying:
\[
\Sem{\bunch \proves x(y).P :: x : A \wand B}(d)(a) = \Sem{P}(d,a),
\]
where $d \in \Sem{\bunch}$ and $a \in \Sem{A}$.
For the left rule we have:
\[
\Sem{\bctxt (\bunch \bsep x : A \wand B) \proves \out x[y].(P \| Q) :: z : C} =
\Sem{Q} \circ \Sem{\bctxt}(\mathit{ev} \circ \Sem{P}),
\]
where $\mathit{ev}$ is the evaluation morphism $A \ast (A \wand B) \to B$.

For \textbf{intuitionistic conjunction and implication} the interpretation is the as above, except we are using the closed Cartesian structure on $\Set^{\pset{\Tag}}$.

In order to interpret the \textbf{spawn prefix}, we need formulate the semantics of (typed) spawn bindings.
Each $\senv \from \bunch_1 \tobunch \bunch_2$ induces a map $\Sem{\senv} : \Sem{\bunch_1} \to \Sem{\bunch_2}$.
Then we interpret the \Cref{rule:struct} as
\[
\Sem{\spawn{\senv}.P} = \Sem{P} \circ \Sem{\senv}.
\]

\textbf{Units and disjunction} are interpreted as usual in Cartesian closed categories.

\subsection{Observational equivalence}

\observeWeak*
\begin{proof}
  Note that the condition that $P$ does not have any barbs on channels from $\Gamma$ is equivalent to the statement
  $\ident(\Gamma) \cap \an(P) = \emptyset$.
  We then prove this reformulated statement by induction on the structure of $P$, using \Cref{t:progress} when needed.
    \begin{itemize}
        \item
            If $P$ is a communication prefix, then $P \equiv \alpha(x).P'$.
            Clearly, $x \in \an(P)$.
            Since, by assumption, $\ident(\Gamma) \cap \an(P) = \emptyset$, this means that $x \notin \ident(\Gamma)$.
            By definition, $x \in \fn(P)$, so it follows by \Cref{l:fn_ident}, that $x = z$.
            Then $P \barb{\alpha(z)}$.
        \item
            If $P$ is a forwarder, then $P \equiv \fwd[x<-y]$.
            By typability, $x = z$ and $y \in \ident(\Gamma)$.
            By definition, $y \in \an(P)$.
            This contradicts the assumption that $\ident(\Gamma) \cap \an(P) = \emptyset$, so this case is invalid.

        \item
            If $P$ is a cut, then $P \equiv \new x.(Q \|_x R)$.
            We have that $P$ is not live: otherwise, $P \nredd$ would contradict \Cref{t:progress} (Progress).
            By inversion of typing, we have $\Gamma = \Gamma_1(\Gamma_2)$ where $\Gamma_2 \proves Q :: x:A$ and $\Gamma_1(x:A) \proves R :: z:C$.

            By definition, $\an(Q) \subseteq \an(P) \cup \{x\}$ and $\ident(\Gamma_2) \subseteq \ident(\Gamma)$.
            Moreover, by typability, $x \notin \ident(\Gamma_2)$.
            Since, by assumption, $\ident(\Gamma) \cap \an(P) = \emptyset$, it follows that $\ident(\Gamma_2) \cap \an(Q) = \emptyset$.
            Because $P \nredd$, we have $Q \nredd$.
            Also, because $P$ is not live, $Q \not\equiv \spawn{\sigma}.Q'$.
            Hence, by the IH, $Q \barb{\alpha'(x)}$.
            Clearly, this means that $x \in \an(Q)$.

            By definition, $\an(R) \subseteq \an(P) \cup \{x\}$ and $\ident(\Gamma_1(\hole)) \subseteq \ident(\Gamma)$.
            Since, by assumption, $\ident(\Gamma) \cap \an(P) = \emptyset$, it follows that $\ident(\Gamma_2(x:A)) \cap \an(R) \subseteq \{x\}$.
            Because $P$ is not live, we have $x \notin \an(Q) \cap \an(R)$.
            Since $x \in \an(Q)$, it follows that $x \notin \an(R)$.
            Hence, $\ident(\Gamma_2(x:A)) \cap \an(R) = \emptyset$.
            Because $P \nredd$, we have $R \nredd$.
            Also, because $P$ is not live, $R \not\equiv \spawn{\sigma}.R'$.
            Hence, by the IH, $R \barb{\alpha(z)}$.
            Therefore, $P \barb{\alpha(z)}$.
          \item The case when $P$ is a spawn prefix is similar to the previous case.
            \qedhere
    \end{itemize}
\end{proof}

\end{document}